\def\T{{ \mathrm{\scriptscriptstyle T} }}
\newtheorem{theorem}{Theorem}
\newtheorem{proposition}{Proposition}
\newtheorem{remark}{Remark}
\newtheorem{assumption}{Assumption}
\newtheorem{definition}{Definition}
\newtheorem{algo}{Algorithm}
\newcommand{\pr}{\mathbb P}
\newcommand{\R}{\mathbb{R}}
\newcommand{\cD}{\mathcal{D}}
\newcommand{\cP}{\mathcal{P}}
\newcommand{\ceil}[1]{\lceil #1 \rceil}
\begin{document}

\def\spacingset#1{\renewcommand{\baselinestretch}%
{#1}\small\normalsize} \spacingset{1}


\title{\bf A Nonparametric Likelihood Approach for Inference in Instrumental Variable Models}

\author{Kwonsang Lee\thanks{Department of Statistics, Sungkyunkwan University, \texttt{kwonsanglee.stat@gmail.com}} \and Bhaswar B. Bhattacharya\thanks{Department of Statistics, University of Pennsylvania, \texttt{bhaswar@wharton.upenn.edu}} \and Jing Qin\thanks{Biostatistics Research Branch, National Institute of Allergy and Infectious Diseases, \texttt{jingqin@niaid.nih.gov}} \and Dylan S. Small\thanks{Department of Statistics, University of Pennsylvania,  \texttt{dsmall@wharton.upenn.edu}}}
\date{}

\maketitle

\begin{abstract}
Instrumental variable methods allow for inference about the treatment effect by controlling for unmeasured confounding in randomized experiments with noncompliance. However, many studies do not consider the observed compliance behavior in the testing procedure, which can lead to a loss of power. In this paper, we propose a novel nonparametric likelihood approach, referred to as the {\it binomial likelihood} method, that incorporates information on compliance behavior while overcoming several limitations of previous techniques and utilizing the advantages of likelihood methods. Our proposed method produces proper estimates of the counterfactual distribution functions by maximizing the binomial likelihood over the space of distribution functions. Using this we propose two versions of a {\it binomial likelihood ratio test} for the null hypothesis of no treatment effect. We show that both versions are more powerful to detect any distributional change than existing methods in finite sample cases, and are asymptotically equivalent to the two-sample Anderson-Darling test. We also develop an efficient algorithm for computing our estimates, and apply the binomial likelihood method to a  study of the effect of Medicaid coverage on mental health using the Oregon Health Insurance Experiment.
\end{abstract}


\spacingset{1.25} 

\section{Introduction}
\label{sec:intro}

The instrumental variables (IV) method is a popular technique for estimating the casual effect of a treatment in the presence of unmeasured confounding \citep{angrist1996, tan2006, baiocchi2014instrumental}. This arises in situations where, even through direct randomization is impossible, an encouragement to take the treatment can be randomized \citep{holland1988}, or there is a ``natural experiment'' such that some people are encouraged to receive the treatment compared to others in a way that is effectively random \citep{angrist1996}. Informally, an instrument is a variable that affects the treatment but is independent of unmeasured confounders and only affects the outcome through affecting the treatment (see Section \ref{sec:assumption} for a more precise definition). Under a monotonicity assumption that the encouraging level of the instrument never causes someone not to take the treatment, the treatment effect can be identified for the compliers, those subjects who would take the treatment if they were encouraged to take the treatment but would not take the treatment if they were not encouraged (see \citet{angrist1996}, \citet{abadie2003semiparametric}, \citet{baiocchi2014instrumental}, \citet{brookhart2007preference}, \citet{cheng2009semiparametric}, \citet{cheng2009efficient}, \citet{hernan2006},  \citet{randomization_inference}, \citet{heterogeneous_treatment_effect}, \citet{ogburn2015doubly}, \citet{tan2006} and the references therein for methods of inference using instrumental variables).

In causal inference, to evaluate the treatment effect on the outcome, Fisher's sharp hypothesis of no effect is often considered, which, in the potential outcome framework \citep{neyman1923, rubin1974estimating}, asserts that the two potential outcomes $Y_i(1)$ and $Y_i(0)$, which are the outcomes individual $i \in \{1, 2, \ldots, n\}$ would experience with or without treatment, respectively, are the same for every individual $i$.  Under the IV assumptions, where the treatment effect can be identified for the compliers, the hypothesis of no effect for compliers can be tested by comparing the distributions of $Y_i(1)$ and $Y_i(0)$ for compliers. Unfortunately, it is difficult to make inference about these distributions since researchers do not know who are the compliers from data. \citet{abadie2002} proposed an approach that indirectly compares the two potential outcome distributions, by using the Kolmogorov--Smirnov test statistic. However, this approach ignores the treatment variable during the testing procedure. Thus, it does not consider the compliance class information of the individuals, which can lead to loss of power, as discussed in \cite{rubin1998more}. 

In this paper, we propose a novel nonparametric likelihood-based approach for comparing the two counterfactual distribution functions, with or without treatment for compliers, that uses the compliance class information and allows for estimation and hypothesis testing in a common holistic framework. This requires a methodological innovation because the usual nonparametric likelihood approach using the empirical likelihood \citep{owen2001} does not work for the IV model because there are infinitely many solutions that maximize the likelihood \citep{geman1982}. Our  proposed \textit{binomial likelihood} (BL)  approach creates a piece of likelihood at each knot (or evaluating point), by using binomially distributed outcomes: outcomes smaller than or equal to the knot, and outcomes larger than the knot. Then, it multiplies together the pieces of these likelihoods across all knots creating a composite likelihood. This is a ``pseudo'' likelihood rather than the true likelihood because the binomial random variables are actually dependent, but are treated as independent in the composite likelihood.  Due to its binomial nature in defining likelihood functions, we specifically call this composite likelihood, the binomial likelihood (BL). Composite likelihood has been found useful in a range of areas including problems in geostatistics, spatial extremes, space-time models, clustered data, longitudinal data, time series and statistical genetics; see \citet{lindsay1988composite}, \citet{heagerty1998composite}, \citet{larribe2011composite}, and \citet{varin2011overview}.

The BL approach can be used for statistical inference similar to the usual likelihood method. For instance, for estimating the distribution functions of the compliers, the {\it maximum binomial likelihood (MBL) estimate} can be obtained by maximizing the BL over the space of distribution functions. Therefore, by definition, the MBL estimates satisfy the necessary conditions for a proper distribution function (increasing and non-negative). This make the BL estimates easily interpretable, and is a major improvement over the naive \textit{plug-in} estimates, which can be non-monotonic and negative. As a consequence, the BL method can be effectively used for making further inferences, such as integrating utility functions or estimating moments of the probability function. Furthermore, similar to classical likelihood ratio tests, the {\it binomial likelihood ratio test (BLRT)} for the null hypothesis of no treatment effect can be constructed by taking the ratio of two BL values that are maximized over the null and the alternative respectively. For computing the MBL estimate and conducting hypothesis testing using the BLRT we develop a computationally efficient iterative algorithm  based on the expectation-maximization (EM) and pool-adjacent-violators (PAV) algorithms. Thus, the BL approach provides the practitioners with a 
comprehensive toolbox for causal inference in non-parametric IV problems. 


The BL method has several attractive limiting and finite-sample properties. To begin with, we show that the MBL estimate for the distribution function of the compliers has the same first-order asymptotics (limiting distribution) as the naive \textit{plug-in} estimates. This shows that the BL estimates, which preserve all the properties of a proper distribution, have no loss in asymptotic efficiency compared to the naive estimates, which can be non-monotone and negative in finite samples.  For hypothesis testing, we show that the BLRT is asymptotically equivalent to the well-known Anderson--Darling two-sample test \citep{pettitt1976}. Since there are no closed form expressions for the BL estimates in general, these asymptotic results are important to the   understanding of the BL approach. The BLRT also has better finite-sample performance 
for detecting distributional changes compared to other baseline methods. The improvement is especially significant in the weak IV setting, exhibiting the importance of incorporating the compliance class information for hypothesis testing in IV models. We also apply the BL approach to study the effect of Medicaid coverage for African American adults on self-reported mental health, as studied by \cite{baicker2013oregon}.

The rest of the article is organized as follows. Basic notation and assumptions of the IV model are discussed in Section~\ref{sec:background}. In this section, we also review the existing plug-in approach for testing the hypothesis of no effect. In Section~\ref{sec:method}, we introduce the BL approach and derive the asymptotic properties of the MBL estimate (Theorem \ref{TH:BLII}). In Section~\ref{sec:blrt}, we develop two versions of the BLRT for testing the null hypothesis, and derive the asymptotic properties of the tests (Theorems \ref{TH:BLRT} and \ref{TH:BLRT_simple}). In Section \ref{SEC:ALGO} we discuss the algorithm for computing the BL estimates and present the numerical results for the BLRT. The analysis of the real data is given in Section~\ref{sec:example}. Proofs of the theorems and additional simulations are given in the supplementary materials.

\section{Framework and Review}
\label{sec:background}

\subsection{Assumptions and Identification with Instrumental Variables}
\label{sec:assumption}

For individual $i$, denote $Z_i$ as the binary IV, $D_i$ as the indicator variable for whether individual $i$ receives the treatment or not, and $Y_i$ as the outcome variable that is continuous in this paper. Using the potential outcome framework \citep{neyman1923, rubin1974estimating}, define $D_i(0)$ as the value that $D_i$ would be if $Z_i$ were to be set to 0, and $D_i(1)$ as the value that $D_i$ would be if $Z_i$ were to be set to 1. Similarly, $Y_i(z, d)$ for $(z,d) \in \{(0,0), (0,1), (1,0), (1,1)\}$, is the value that the outcome $Y_i$ would be if $Z_i=z$ and $D_i=d$. For each individual $i$, the analyst can only observe one of the two potential values $D_i(0)$ and $D_i(1)$, and one of the four potential values $Y_i(0, 0), Y_i(0, 1), Y_i(1, 0), Y_i(1, 1)$. The observed treatment $D_i$ is $D_i = Z_i D_i(1) + (1-Z_i) D_i(0).$ Similarly, the observed outcome $Y_i$ can be expressed as $Y_i = Z_iD_i Y_i(1,1) + Z_i(1-D_i) Y_i(1,0) + (1-Z_i)D_i Y_i(0,1) + (1-Z_i)(1-D_i)  Y_i(0,0)$. An individual's {\it compliance class} is determined by the combination of the potential treatment values $D_i(0)$ and $D_i(1)$, which is denoted by $S_i$: $S_i=$ {\it always-taker} ({\it at}) if $D_i(0)=D_i(1)=1$; $S_i=$ {\it never-taker} ({\it nt}) if $D_i(0)= D_i(1)=0$; $S_i=$ {\it complier} ({\it co}) if $D_i(0)=0, D_i(1)=1$; and $S_i=$ {\it defier} ({\it de}) if $D_i(0)=1, D_i(1)=0$.

For the rest of this paper, the following standard identifying conditions are assumed. The implications of these conditions are briefly explained in the paragraph below; see \citet{angrist1996} for more details on these conditions. 

\begin{assumption} The following identification conditions will be imposed on the instrumental variable model: 
	\label{assump}
	
	\begin{itemize}[leftmargin=1cm]
		\item[(a)] {\it Stable Unit Treatment Value Assumption (SUTVA)} \citep{rubin1986}:  The outcome (treatment) for individual $i$ is not affected by the values of the treatment or instrument (instrument) for other individuals and the outcome (treatment) does not depend on the way the treatment or instrument (instrument) is administered. 
		
		\item[(b)] {\it The instrumental variable $Z_i$ is independent of the potential outcomes $Y_i(z,d)$ and potential treatment  $D_i(z)$.}
		$$
		Z_i \perp\!\!\!\perp \left(Y_i(0,0), Y_i(0,1), Y_i(1,0), Y_i(1,1), D_i(0), D_i(1) \right)
		$$
		
		\item[(c)] {\it Nonzero average causal effect of $Z_i$ on $D_i$}: $\pr(D_i(1) =1) > \pr(D_i(0)=1)$.
		
		\item[(d)] {\it Monotonicity}: $D_i(1) \geq D_i(0)$. 
		
		\item[(e)] {\it Exclusion restriction}: $Y_i(0,d) = Y_i(1, d)$, for $d=0$ or $1$. 
	\end{itemize}
\end{assumption}

Assumption~\ref{assump} enables the causal effect of the treatment for the subpopulation of the compliers to be identified. Condition (a) allows us to use the notation $Y_i(z, d)$ (or $D_i(z)$), which means that the outcome (treatment) for individual $i$ is not affected by the values of the treatment and instrument (instrument) for other individuals. Condition (b) will be satisfied if $Z_i$ is randomized. Condition (c) requires $Z_i$ to have some effect on the average probability of treatment. Condition (d), the monotonicity assumption, means that the possibility of $D_i(0)=1$, $D_i(1)=0$ is excluded, that is, there are no defiers. Condition (e) assures that any effect of $Z_i$ on $Y_i$ must be through an effect of $Z_i$ on $D_i$. Under this assumption, the potential outcome can be written as $Y_i(d)$, instead of $Y_i(z,d)$. 

Let $\phi_1 = \pr(Z = 1)$, $\phi_s = \pr(S = s), s \in \{co, nt, at\}$.  Also, let $F_{co}^{(0)}(t), F_{nt}(t),  F_{co}^{(1)}(t)$, and $F_{at}(t)$ be the cumulative distribution functions of the outcome $Y$ for compliers without treatment, never-takers, compliers with treatment, and always-takers respectively. For $F_{co}^{(0)}(t)$ and $F_{co}^{(1)}(t)$, under Assumption~\ref{assump}, they are identified as the distributions of the potential outcome $Y(0)$ and $Y(1)$ respectively, for example, $F_{co}^{(0)}(t) = \pr(Y(0) \leq t \mid S=co )$. Similarly, we define the distribution functions of $Y$ corresponding to combinations of $Z, D$. Denote $F_{zd}(t)= \pr(Y \leq t \mid Z=z, D=d)$. Although $F_{Y|zd}$ can be more accurate notation than $F_{zd}$ since $Z$ and $D$ are conditioned on, we will instead use simpler notation $F_{zd}$. Any notation involving $F$ followed by a subscript means the distribution function of $Y$ conditioning on the subscript. Also, we define the probabilities $\eta_{zd} = \pr(Z=z, D=d)$ for $z, d \in \{0, 1\}$. Finally, let $H(t)= P(Y \leq t) = \sum_{z, d \in \{0,1\}} \eta_{zd} F_{zd}(t)$, be the mixture distribution of $F_{zd}$. The outcomes $Y_1, Y_2, \ldots, Y_n$ are independent and identically distributed from $H(t)$. Under Assumption~\ref{assump}, as discussed in \citet{abadie2002}, both $F_{co}^{(0)}(t)$ and $F_{co}^{(1)}(t)$ can be identified as  
\begin{equation}
F_{co}^{(0)}(t)= \frac{(\phi_{co}+\phi_{nt})F_{00}(t) - \phi_{nt} F_{10}(t)}{\phi_{co}}, \quad F_{co}^{(1)}(t)= \frac{(\phi_{co}+\phi_{at})F_{11}(t) - \phi_{at} F_{01}(t)}{\phi_{co}}.
\label{eqn:abadie}
\end{equation}
Also, $F_{nt}(t)$ and $F_{at}(t)$ can be identified under Assumption~\ref{assump} as $F_{nt}(t)=F_{01}(t)$ and $F_{at}(t)=F_{01}(t)$.

\subsection{Testing Fisher's Null Hypothesis of No Effect: Review of the Existing Approaches}
\label{subsec:review}

A central question in causal inference is to understand if the treatment has any causal effect on the outcome. To evaluate the treatment effect on the outcome, Fisher's sharp hypothesis of no effect can be considered. Under Assumption~\ref{assump}, it can be tested whether there is any causal treatment effect for compliers. Technically, Fisher's hypothesis can be constructed for compliers as $H_0^{\text{compliers}}: Y_i (1) = Y_i (0)$ for $S_i = co$. However, $H_0^{\text{compliers}}$ cannot be directly tested since only one of the two potential outcomes for each individual can be observed. Instead, we consider a test for equality of distributions using the potential outcome distributions for compliers,
\begin{equation}
H_0: F_{co}^{(0)}(t) = F_{co}^{(1)}(t), \text{ for all } t \in \mathbb{R}.
\label{null_hypothesis}
\end{equation}

The existing approach for testing $H_0$ is based on the fact that $ F_{co}^{(0)}(t) = F_{co}^{(1)}(t)$ implies $F_0(t)=F_1(t)$ where $F_z(t) = \pr(Y \leq t \mid Z=z)$ under Assumption~\ref{assump}. \citet{abadie2002} proposed using the Kolmogorov--Smirnov test $T_{\mathrm{KS}} = \sup_{t \in \R} | \overline F_0(t) - \overline F_1(t)|$, where $\overline F_z(t)={\sum_{i=1}^n \bm 1\{Y_i \leq t, Z_i=z\}}/{\sum_{i=1}^n \bm 1\{Z_i=z\}}$ are the empirical distribution functions for $z=0,1$. This test is the comparison between the outcome distribution of the $Z=0$ group and the outcome distribution of the $Z=1$ group. To show a connection between these two distributions with the compliers' distribution functions $F_{co}^{(0)}(t)$ and $F_{co}^{(1)}(t)$, define the {\it plug-in} estimates obtained using \eqref{eqn:abadie}  as 
\begin{align}
	\breve F_{co}^{(0)}(t)= \frac{(\breve{\phi}_{co}+\breve{\phi}_{nt})\overline F_{00}(t) - \breve{\phi}_{nt}\overline F_{10}(t)}{\breve{\phi}_{co}} ,~  \breve F_{co}^{(1)}(t) = \frac{(\breve{\phi}_{co}+\breve{\phi}_{at})\overline F_{11}(t) - \breve{\phi}_{at} \overline F_{01}(t)}{\breve{\phi}_{co}},
	\label{eqn:plugin}
\end{align}
where $n_{zd}= \sum_{i=1}^{n} \bm 1 \{ Z_i=z, D_i=d\}$, $\breve{\phi}_{nt}={n_{10}}/{(n_{10}+n_{11})}$, $\breve{\phi}_{at}= {n_{01}}/{(n_{00}+n_{01})}$, $\breve{\phi}_{co}=1-\breve{\phi}_{nt}-\breve{\phi}_{at}$, and $\{\overline{F}_{zd}\}_{z, d \in \{0, 1\}}$ are the empirical distribution functions, $\overline{F}_{zd}(t) = (1/n_{zd}) \sum_{i=1}^{n} \bm 1 \{ Y_i \leq t, Z_i=z, D_i=d\}$. Since $|\overline{F}_0(t) - \overline{F}_1(t)| =  |(\breve{F}_{co}^{(0)}(t) - \breve{F}_{co}^{(1)}(t))/\breve{\phi}_{co}|$, $T_{KS}$ is equivalent to the test based on comparison between $\breve{F}_{co}^{(0)}$ and $\breve{F}_{co}^{(1)}$. However, the plug-in estimates have two limitations: (1) violating the non-decreasing condition of distribution functions and (2) being unstable when an IV is weak. First, the violation leads to producing estimates that are often located outside of $[0,1]$. Therefore, they are not proper estimates of $F_{co}^{(0)}$ and $F_{co}^{(1)}$, which is due to not incorporating the observed information on compliance behavior. Furthermore, the test statistic $T_{KS}$ can be unusually large if $\breve{\phi}_{co}$ is small, which occurs when an IV is weak. As discussed in \citet{rubin1998more}, making use of the IV structure can produce a better test statistic, and thus increase power. 

To employ the structure of the instrumental variable model, one simple way is to transform the plug-in estimates to proper distribution functions by using the monotone rearrangement method \citep{chernozhukov2010}, followed by truncation to $[0, 1]$. \citet{chernozhukov2010} showed that the transformed estimates have the same first-order properties (asymptotic distribution) as the plug-in estimates. The rearrangement method produces a quick fix of the plug-in estimates $(\breve{F}_{co}^{(0)}, \breve{F}_{co}^{(1)})$ and provides promising empirical properties, however it is difficult to use the method in hypothesis testing for evaluating the asymptotic properties. 


\begin{remark}\label{rem:Fplugin}The plug-in estimators $\breve F_{co}^{(0)}(t)$ and $\breve F_{co}^{(0)}(t)$ are obtained as \eqref{eqn:plugin}. Other plug-in estimators are $\breve F_{at}(t)=\overline F_{01}(t)$ and $\breve F_{nt}(t)=\overline F_{10}(t)$. We denote the vector of the plug-in estimators of the outcome distribution functions, as $\breve{\bm F}(t)=(\breve F_{co}^{(0)}(t), \breve F_{nt}(t),  \breve F_{co}^{(1)}(t), \breve F_{at}(t))$. We consider the plug-in estimators of the compliance classes as a function of $t$ such that $\breve{\bm \phi}(t)=(\breve \phi_{nt}(t), \breve \phi_{at}(t))$ where $\breve \phi_{nt}(t) = \breve{\phi}_{nt} = {n_{10}}/{(n_{10}+n_{11})}$ and $\breve \phi_{at}(t) =\breve{\phi}_{at} = {n_{01}}/{(n_{00}+n_{01})}$ for all $t$ with terminology slightly abused. 
\end{remark}

\section{The Binomial Likelihood (BL) Approach}
\label{sec:method}

\subsection{Constructing Binomial Likelihood with An Instrumental Variable }
\label{subsec:binomial}

Define $\bm\theta : \R\rightarrow [0, 1]^4$ such that $\bm \theta(t)$=$(\theta_{co}^{(0)}(t)$, $\theta_{nt}(t)$, $\theta_{co}^{(1)}(t)$, $\theta_{at}(t)),$ where $\theta_{co}^{(0)}$, $\theta_{nt}$, $\theta_{co}^{(1)}$, $\theta_{at}$: $\R \rightarrow [0, 1]$ are functional variables representing four different outcome distributions. Instead of using previously defined $\bm F$, we use the parameter set $\bm\theta$ to emphasize the fact that it is a variable to be estimated. Similarly, we can define the parameter $\bm\chi : \R\rightarrow [0,1]^2$ such that $\bm\chi(t) = (\chi_{nt}(t), \chi_{at}(t))$, where $\chi_{nt}(t)$ and $\chi_{at}(t)$ are functional variables representing the proportions of compliance classes. Since the true proportions $\phi_{nt}$ and $\phi_{at}$ do not depend on knots, we take the average across the knots to build estimators for them. We set knots $\bm{t}=(t_1, \ldots, t_m)$ that are the locations to evaluate BL functions later. Then, $\sum_{j=1}^{m} \chi_{nt}(t_j)/m$ and $\sum_{j=1}^{m}\chi_{at}(t_j)/m$ are the estimators of $\phi_{nt}$ and $\phi_{at}$ respectively. Also, we define $\chi_{co}(t_j) = 1-\chi_{nt}(t_j) - \chi_{at}(t_j)$, and then $\sum_{j=1}^{m}(1-\chi_{nt}(t_j) - \chi_{at}(t_j))/m$ is the estimator of $\phi_{co}$. Furthermore, we use $\chi_1$ that is the estimator of $\phi_1$. Finally, we define $\theta_{zd}(t_j)$ that is the estimator of $F_{zd}(t_j)$. For example, $\theta_{00}(t_j) = (\chi_{co}(t_j) \theta_{co}^{(0)}(t_j) + \chi_{nt}(t_j)\theta_{nt}(t_j))/(1-\chi_{at}(t_j))$, $\theta_{01}(t_j) = \theta_{at}(t_j)$, $\theta_{10}(t_j) = \theta_{nt}(t_j)$ and $\theta_{11}(t_j) = ( \chi_{co}(t_j)\theta_{co}^{(1)}(t_j)+\chi_{at}(t_j)\theta_{at}(t_j))/(1-\chi_{nt}(t_j))$.    

Denote the data $\cD_n=(\bm{Z}, \bm{D}, \bm{Y})$ where $\bm{Z} = (Z_1, \ldots, Z_n)^{T}$, $\bm{D} = (D_1, \ldots, D_n)^{T}$, $\bm{Y} = (Y_1, \ldots, Y_n)^{T}$. Also, denote the event $K_{zd}^{ij}=\{ Z_i=z, D_i=d \mid t_j\}$. The probability $\pr(K_{zd}^{ij})$ can be easily computed in terms of the variables $(\bm \chi, \chi_1)$. At each knot $t_j$, we define the {\it point-knot-specific BL function} for a data point $(Z_i, D_i, Y_i)$,
\begin{align*}
	L_{ij}(\bm\theta, \bm\chi, \chi_1| \cD_n) &= \prod_{z, d \in \{0,1\}} \pr(K_{zd}^{ij})^{\bm 1(K_{zd}^{ij})} \times \left(\theta_{zd}(t_j)^{\bm 1(Y_i \leq t_j)} (1-\theta_{zd}(t_j))^{\bm 1(Y_i > t_j)} \right).
\end{align*} 
Then, by aggregating the point-knot-specific BL functions across all data points, we can define the {\it knot-specific BL function} at knot $t_j$,
\begin{align*}
	L_j (\bm\theta, \bm\chi, \chi_1| \cD_n) &= \prod_{i=1}^{n} L_{ij}(\bm\theta, \bm\chi, \chi_1| \cD_n). 
\end{align*}
Finally, we can define the {\it BL function} by taking the geometric mean of the knot-specific BL functions across all knots, 
\begin{align}\label{eq:bl_def}
	L(\bm\theta, \bm\chi, \chi_1| \cD_n) &= \prod_{j=1}^{m} L_{j}(\bm\theta, \bm\chi, \chi_1| \cD_n)^{1/m}. 
\end{align}

The BL function depends on the choice of knots even when the data points are fixed. The knots can be given by researchers, but we propose to use all observed outcomes as knots. More specifically, we use the order statistics $Y_{(j)}$ as knots with $m=n$. This selection procedure provides an automatic way to build the BL function and avoids an arbitrary decision that may cause a favorable conclusion. The contributions of the knot-specific BL functions are, obviously, not independent on data points. Nevertheless, we pretend they are independent. To reduce such dependency, a random sample from $\bm{Y}$ can be chosen as knots in practice. Also, for a large $n$, the size of knots does not need to be $n$. A smaller set of knots can be helpful for reducing computation time. Although we choose $t_j = Y_{(j)}$, we emphasize that, for general knots $\bm{t}$, the BL function can be constructed and also the MBL estimator can be obtained. Theoretical results in the following section are derived for knots $t_j = Y_{(j)}$. As long as the distribution of knots $\bm{t}$ is the same as the distribution of $\bm{Y}$, the theoretical arguments hold.

\begin{remark} The knot-specific BL function at knot $t_j = Y_{(j)}$ for some $j$ becomes zero when any of $\theta_{zd}(Y_{(j)})$ is either 0 or 1. This occasionally occurs at the extreme order statistics. To avoid technicalities in the proofs arising from this, we define the likelihood function \eqref{eq:bl_def} over the central order statistics, that is, for $j \in I_\kappa=[\ceil{n \kappa}, \ceil{n(1-\kappa)}]$ for a small fixed constant $\kappa$. Throughout the proofs in the supplementary materials, the asymptotics will be in the regime where the sample size $n$ grows to infinity, keeping $\kappa$ fixed. We omit dependence on $\kappa$ in the BL for notational brevity. Also, in practice, to avoid computational issues, we can let the knot-specific BL values be 1 when probabilities vanish on the boundary.
\end{remark}

\subsection{The Maximum Binomial Likelihood (MBL) Method}
\label{subsec:mbl}

In Section~\ref{subsec:binomial}, we introduce a BL approach for constructing a nonparametric likelihood function. Given the BL function $L(\bm \theta, \bm \chi, \chi_1 | \cD_n)$, we propose the maximum binomial likelihood (MBL) method to obtain the estimates of $(\bm \theta, \bm \chi, \chi_1)$ by maximizing them over their parameters spaces. To this end, denote $\cP([0, 1]^\R)$ as the space of all distribution functions from $\R\rightarrow [0, 1]$. Let $\bm \vartheta_+ = \{(\theta_{co}^{(0)}, \theta_{nt}, \theta_{co}^{(1)}, \theta_{at}) : \theta_{co}^{(0)}, \theta_{nt}, \theta_{co}^{(1)}, \theta_{at} \in \cP([0, 1]^\R) \}$, and $\bm\varphi_+=\{(\chi_{nt}, \chi_{at}) : \text{ for any } t, (\chi_{nt}(t), \chi_{at}(t)) \in [0,1]^2,  0 \leq \chi_{nt}(t)+\chi_{at}(t) \leq 1\}$ be the parameter spaces for $\bm\theta$ and $\bm\chi$. 

\begin{definition}\label{definition:BLE}
	The {\it MBL} estimate $(\hat{\bm F}, \hat{\bm \phi}, \hat \phi_1)$ is defined as
	\begin{equation}\label{eq:BLE}
	(\hat{\bm F}, \hat{\bm \phi}, \hat \phi_1)=\arg \max_{\bm \theta\in \bm \vartheta_+, \bm \chi \in \bm\varphi_+, \chi_1 \in [0, 1]} L(\bm \theta, \bm\chi, \chi_1|\cD_n),
	\end{equation} 
	where $\hat {\bm F}=(\hat F_{co}^{(0)}, \hat F_{nt}, \hat F_{co}^{(1)}, \hat F_{at})$ and $\hat{\bm\phi}=(\hat{\phi}_{nt}, \hat{\phi}_{at})$ are defined at the knots $\bm{t} = (t_1, \ldots, t_m)$.
\end{definition}

\begin{remark} The complete parameter space $\bm \vartheta_+ \times \bm\varphi_+ \times [0, 1]$ of the three parameters $(\bm F, \bm \phi, \phi_1)$ will be hereafter referred to as the {\it restricted parameter space}. To ensure that \eqref{eq:BLE} is well-defined, we extend $\hat {\bm F}$ between the knots by using coordinate-wise right-continuous interpolation and extrapolation beyond the knots by 0 or 1. Also, $\sum_{j=1}^{m} \hat{\phi}_{nt}(t_j)$ and $\sum_{j=1}^{m}\hat{\phi}_{at}(t_j)$ are the estimators of $\phi_{nt}$ and $\phi_{at}$. 
\end{remark}

The full expression of the binomial log-likelihood function $\ell(\bm\theta, {\bm \chi}, \chi_1 |\cD_n) = \log L(\bm\theta, {\bm \chi}, \chi_1 |\cD_n)$ is long and unwieldy. However, we can rewrite it in a compact and instructive form, by grouping and rearranging the terms. It follows (see proof of Proposition \ref{proposition:bl} below for details) that  $\ell(\bm\theta, \bm\chi, \chi_1 | \cD_n) = \ell_{\bm{Y}, \bm{D} | \bm{Z}}(\bm\theta, \bm\chi) + \ell_{\bm{Z}}(\chi_1 )$, where $ \ell_{\bm{Z}}(\chi_1) =\frac{1}{n} \{ (n_{00} + n_{01}) \log (1-\chi_1) + $$(n_{10} + n_{11}) \log \chi_1 \}$, and 
\begin{align*}
	\ell_{\bm{Y}, \bm{D} | \bm{Z}}(\bm\theta, {\bm \chi})=\sum_{z, d \in \{0, 1\}}\ell_{zd}(\bm \theta, {\bm \chi}), 
\end{align*}
with $\ell_{zd}(\bm \theta, {\bm \chi})$, for $z, d \in \{0, 1\}$, defined as follows: 

\begin{align*}
	\ell_{00}(\bm \theta, {\bm \chi})&=\frac{1}{m}\sum_{j=1}^{m} n_{00}\left\{ \log\left(1- \chi_{at}(t_j)\right)+ J(\overline F_{00}(t_j), \theta_{00}(t_j) \right\}, \\
	\ell_{10}(\bm \theta, {\bm \chi})&= \frac{1}{m}\sum_{j=1}^{m} n_{10} \left\{\log  \chi_{nt}(t_j) + J(\overline F_{10}(t_j), \theta_{10}(t_j)) \right\}, \\
	\ell_{01}(\bm \theta, {\bm \chi})&= \frac{1}{m}\sum_{j=1}^{m} n_{01} \left\{\log  \chi_{at}(t_j) + J(\overline F_{01}(t_j), \theta_{01}(t_j)) \right\}, \\
	\ell_{11}(\bm \theta, {\bm \chi})&= \frac{1}{m}\sum_{j=1}^{m} n_{11} \left\{ \log \left(1- \chi_{nt}(t_j)\right) + J(\overline F_{11}(t_j), \theta_{11}(t_j) \right\},
\end{align*}
where the function $J(x, y)=x\log y+(1-x) \log(1-y)$. 

\begin{proposition}\label{proposition:bl} Let $(\hat{\bm F}, \hat{\bm \phi}, \hat \phi_1)$ be the binomial likelihood estimates as defined in \eqref{eq:BLE}. Then $\hat \phi_1=\frac{n_{10}+n_{11}}{n}$ that is equal to the plug-in estimate $\breve{\phi}_1$, and 
	\begin{align*} 
		(\hat{\bm F}, \hat{\bm \phi})=\arg \max_{\bm \theta\in \bm \vartheta_+, \bm \chi \in \bm\varphi_+} \ell_{\bm{Y}, \bm{D} | \bm{Z}}(\bm\theta, {\bm \chi}).
	\end{align*}
\end{proposition}
\begin{proof}
	See Section~A in the Supplementary Material. 
\end{proof}

This proposition shows that the MBL estimate of $\phi_1$ is the proportion of individuals with instrument (that is, $Z=1$) in the observed sample. Furthermore, the MBL estimates of $\bm F$ and $\bm\phi$ can be obtained by maximizing the function $\ell_{\bm{Y}, \bm{D} | \bm{Z}}(\bm\theta, {\bm \chi})$.

\begin{remark}
	Maximizing the BL function over the {\it unrestricted parameter space} $\bm \vartheta \times \bm\varphi$, where $\bm \vartheta = \{(\theta_{co}^{(0)}, \theta_{nt}, \theta_{co}^{(1)}, \theta_{at}):  \theta_{co}^{(0)}, \theta_{nt}, \theta_{co}^{(1)}, \theta_{at} \in \R^\R\}$ with $\R^\R$ the set of all functions from $\R \rightarrow \R$ and $\bm\varphi = \{(\chi_{nt}, \chi_{at}) : \text{ for any } t, (\chi_{nt}(t), \chi_{at}(t)) \in \R^2\}$, produces the plug-in estimates $(\breve {\bm F}, \breve {\bm \phi}) =\arg \max_{\bm\theta \in \bm\vartheta, \bm\chi \in \bm\varphi}  \ell_{\bm{Y}, \bm{D} | \bm{Z}}(\bm\theta, {\bm \chi})$ (see Lemma~1 in the Supplementary Material for the proof).
	\label{rmk:plugin}
\end{remark}

\subsection{Asymptotic Properties of The MBL Estimates}
\label{subsec:BLproperties}

In this section we discuss the asymptotic properties of the MBL estimates $(\hat{\bm F}, \hat{\bm\phi})$, and how they compare with the plug-in estimates $(\breve{\bm F}, \breve{\bm\phi})$. Assume the knots $t_j = Y_{(j)}$, for $1 \leq j \leq n$. 

\begin{assumption}\label{assumptionF}
	We assume the following: 
	
	\begin{enumerate}[leftmargin=1cm]
		
		\item[(a)] The proportion parameter vector $\bm \phi$ belongs to the interior of the parameter space $[0, 1]_+^2$. 
		
		\item[(b)] The distribution functions $F_{zd}$ are continuous, strictly increasing, and have the same support. 
		
		\item[(c)] For all $K \subset \R$ compact, $s, t \in K$, there exists constants $0< C_1 \leq C_2 < \infty$ (depending on $K$) such that $C_1|s-t| \leq |F_{zd}(s)-F_{zd}(t)| \leq C_2|s-t|$. 
	\end{enumerate}
\end{assumption}

In particular, Assumption \ref{assumptionF} holds whenever $F_{zd}$ are differentiable and the derivatives are uniformly bounded above and below, that is, $C_1 \leq F'_{zd}(t) \leq C_2$, for all $t \in K$, and $K \subset \R$  compact. Under this assumption we show that the MBL estimates and the plug-in estimates have mean squared errors converging to zero, after rescaling by $\sqrt n$. Recall that $H(t)=\sum_{z, d \in \{0, 1\}}\eta_{zd}F_{zd}(t)$ is the true population outcome distribution of $Y$.

\begin{theorem}\label{TH:BLII}  For any fixed $0 <\kappa <1/2$, let $I_\kappa=[\ceil{n \kappa}, \ceil{n(1-\kappa)}]$ and $J_\kappa=[H^{-1}(\kappa), H^{-1}(1-\kappa)]$. Then, the MBL estimates $(\hat{\bm F}, \hat{\bm \phi})$ and the plug-in estimates $(\breve {\bm F}, \breve {\bm \phi})$ satisfy 
	\begin{align*}
		\frac{1}{|I_{\kappa}|} \sum_{j \in I_\kappa} ||\sqrt n\{\bm{\hat{F}}(Y_{(j)})-\breve{\bm F}(Y_{(j)})\}||^2_2=o_P(1),
	\end{align*}
	and 
	\begin{align}\label{eq:BLestdiffintegral}
		\int_{J_\kappa} ||\sqrt n\{\bm{\hat{F}}(t)-\breve {\bm F}(t)\}||^2_2 {\rm d} H= o_P(1),
	\end{align}
	where the $o_P(1)$ term goes to zero as $n\rightarrow \infty$.  Moreover, $\frac{1}{|I_{\kappa}|} \sum_{j \in I_\kappa} ||\sqrt n\{ \hat{\bm\phi}(Y_{(j)}) -\bm {\breve \phi}(Y_{(j)}) \}||_2^2 =o_P(1)$. Also, it implies that the two estimators $\frac{1}{|I_{\kappa}|} \sum_{j \in I_\kappa} \hat{\bm\phi}_{s}(Y_{(j)})$ and $\frac{1}{|I_{\kappa}|} \sum_{j \in I_\kappa} \breve{\bm\phi}_{s}(Y_{(j)})$ of the population $\phi_{s}$ for $s \in \{nt, at\}$ satisfy 
	\begin{align*}
		\sqrt n \left(\frac{1}{|I_{\kappa}|} \sum_{j \in I_\kappa} \hat{\bm\phi}_{s}(Y_{(j)}) - \frac{1}{|I_{\kappa}|} \sum_{j \in I_\kappa} \breve{\bm\phi}_{s}(Y_{(j)}) \right) = o_P(1)
	\end{align*}
\end{theorem}
\begin{proof}
	See Section~B in the Supplementary Material. 
\end{proof}

The theorem shows that $\hat{\bm F}$ and $\breve{\bm F}$ (also, $\hat {\bm\phi}$  and $\breve{\bm\phi}$) have the same first-order behavior, and hence the same limiting distribution, which can be derived using the Brownian bridge approximation of the empirical distribution functions; see Corollary~1 (Section C) in the Supplementary Material. Interestingly, \citet{chernozhukov2010} showed the monotone rearrangement estimates also have the same first-order behavior as the plug-in estimates, which together with Theorem \ref{TH:BLII}, implies that the MBL estimates have the same first-order properties as the rearrangement estimates.

\begin{remark} The proof of Theorem \ref{TH:BLII} can be easily modified to show finite dimensional convergence, that is, for every $s\geq 1$ and given $ t_1 < t_2 <\cdots < t_s $,  $||\sqrt n(\hat{\bm F}(t_j)-\breve {\bm F}(t_j))||^2_2 = o_P(1)$. This would imply that the finite dimensional distributions of the plug-in estimate process $\sqrt n(\breve{\bm F}(t)-\bm F(t))$ and the MBL estimate process $\sqrt n(\hat{\bm F} (t)-\bm F(t))$ are asymptotically the same. We present this result in terms of mean squared errors as in \eqref{eq:BLestdiffintegral}, because it emerges naturally from the asymptotic properties of the BL function, and can be directly applied to the analysis of the BLRT that is introduced in Section \ref{sec:blrt}.
\end{remark}

\section{Extension of the BL Approach: Hypothesis Testing}
\label{sec:blrt}

\subsection{Binomial Likelihood Ratio Test (BLRT): Full Version}
\label{subsec:blrt_full}

The BL approach can be extended to constructing a likelihood ratio-type test in a similar way that the ML approach can be extended to constructing a likelihood ratio test. We take two times the difference in two binomial log-likelihood values; one is obtained with the constraint $F_{co}^{(0)}(t)=F_{co}^{(1)}(t)$ (that is, under the null) and the other is obtained without this constraint (that is, under the alternative). This gives a new test for the null hypothesis $H_0: F_{co}^{(0)}(t)=F_{co}^{(1)}(t)$, and hereafter, we call it the {\it binomial likelihood ratio test (BLRT)}. 

Define the {\it restricted null parameter space} as $\bm \vartheta_{+, 0} = \{(\theta_{co}, \theta_{nt}, \theta_{at}):  \theta_{co}, \theta_{nt}, \theta_{at} \in \cP([0, 1]^\R) \}$, where $\cP([0, 1]^\R)$ is the set of distribution functions from $\R\rightarrow [0, 1]$. Then, the BLRT statistic is obtained by 
\begin{equation}\label{eq:blrt}
T_n= 2 \left( \max_{\bm \theta \in \bm \vartheta_{+}, \bm\chi \in \bm\varphi_+ }  \ell_{\bm{Y}, \bm{D} | \bm{Z}}(\bm\theta, {\bm \chi})-\max_{\bm \theta \in \bm \vartheta_{+, 0}, \bm\chi \in \bm\varphi_+ } \ell_{\bm{Y}, \bm{D} | \bm{Z}}(\bm\theta, {\bm \chi}) \right). 
\end{equation}

Let $(\hat{\bm \psi}, \hat{\bm\xi}) = \arg\max_{\bm \theta \in \bm \vartheta_{+, 0}, \bm\chi \in \bm\varphi_+}  \ell_{\bm{Y}, \bm{D} | \bm{Z}}(\bm\theta, {\bm \chi})$. The asymptotic properties of $\hat{\bm \psi}$ can be derived as we did for $\hat{\bm F}$ in Theorem~\ref{TH:BLII}. Also, we can derive the asymptotically equivalent plugin-type estimators that have not been studied before. It is worth noting that the explicit form of the equivalent estimators of $(\hat{\bm \psi}, \hat{\bm\xi})$ is provided in Section~D, the Supplementary Material.  

\begin{theorem}\label{TH:BLRT} Fix $0 < \kappa < 1/2$ and recall that $H(t)=P(Y \leq t)$. Let $T_n$ be the binomial likelihood ratio test statistic as defined in \eqref{eq:blrt}. Denote $\overline{J}_{\kappa} = [\overline{H}^{-1}(\kappa), \overline{H}^{-1}(1-\kappa)]$. Then, 
	\begin{align*}
		T_n &= \frac{n_0 n_1}{n} \int_{\overline{J}_{\kappa}} \frac{\bar{F}_{0}(t) - \bar{F}_{1}(t)}{\bar{H}(t) (1-\bar{H}(t))} {\rm d} \bar{H}(t) +o_P(1), 
	\end{align*} 
	where $\bar{H}(t) = (n_0 \bar{F}_0(t) + n_1 \bar{F}_1(t))/n$ is the empirical distribution function of $Y$. 
\end{theorem}
\begin{proof}
	See Section~E in the Supplementary Material. 
\end{proof}

This theorem gives an asymptotically equivalent representation of the BLRT statistic as the two-sample Anderson-Darling test statistic \citep{pettitt1976}. It can be used to construct the rejection region and compute the critical value for a given significance level. Moreover, this shows that the test based on $T_n$ is consistent against all fixed alternatives, because of the universal consistency of the two-sample Anderson-Darling test \citep{scholz1987}. 

However, in finite-sample settings, the critical value obtained from the asymptotic distribution of $T_n$ can be conservative. In the theorem above, to derive the asymptotic properties, we use the equivalent plug-in estimators of $(\hat{\bm \psi}, \hat{\bm\xi})$ instead of using $(\hat{\bm \psi}, \hat{\bm\xi})$ directly. However, they do not lie in the restricted parameter space, which leads to a gap between the equivalent and actual BL values. This gap fades out as $n$ increases, but it can be critical when we evaluate finite-sample performance. This issue will be further discussed in the simulation section. 

The BLRT is developed for testing the null hypothesis $H_0: F_{co}^{(0)}(t) = F_{co}^{(1)}(t)$ that assumes no treatment effect for compliers. The BLRT can be further extended to testing other hypotheses like $H_0^{g}: F_{co}^{(0)}(g(t)) = F_{co}^{(1)}(t)$ for some $g$. To test $H_0^{g}$, a simple modification is required. A new outcome variable $Y_i^*$ can be generated: $Y_i^* = g(Y_i)$ if $D_i=1$, and $Y_i^* = Y_i$ otherwise. Then, $(Z_i, D_i, Y_i^*)$ can be used for the BLRT as \eqref{eq:blrt}, and this test based on the new dataset conducts a hypothesis test for $H_0^{g}$. Among many choices of $g$, $g(t) = t- \mu$ can be considered to check whether there is any location shift between the two distributions. The assumption of the location shift means that there is a constant treatment effect $\mu$ for all compliers.  Therefore, this test can be used for examining treatment effect heterogeneity. If $H_0^{location}: F_{co}^{(0)}(t-\mu) = F_{co}^{(1)}(t)$ is rejected for all $\mu$, then there is evidence that treatment effects are heterogeneous.   

\subsection{Binomial Likelihood Ratio Test: Simple Version}
\label{subsec:simple}

As we discussed in section~\ref{subsec:review}, under Assumption~\ref{assump}, testing the null hypothesis $H_0: F_{co}^{(0)}(t) = F_{co}^{(1)}(t)$ is equivalent to testing the null hypothesis $H_0^{simple}: F_0(t) = F_1(t)$. Based on this, we propose a simple version of the BLRT by comparing $ F_0(t) = F_1(t)$ instead of $F_{co}^{(0)}(t) = F_{co}^{(1)}(t)$. This test does not use the information of compliance classes by ignoring the treatment $\bm{D}$, but uses only $\bm{Z}$ and $\bm{Y}$. We define the parameter $\bm\theta$ such that $\bm\theta(t) = (\theta_0(t), \theta_1(t))$, where $\theta_0(t), \theta_1(t): \R \to [0,1]$ are functional variables representing the outcome distributions for $Y|Z=0$ and $Y|Z=1$. Then, given that $\bm{Z}$ is conditioned on, the {\it simple version binomial log-likelihood function} $\ell^{simple}_{\bm{Y}|\bm{Z}}(\bm\theta)$ is 
\begin{align}\label{bl:simple_version}
	\ell^{simple}_{\bm{Y}|\bm{Z}}(\bm\theta) &= \frac{1}{m}\sum_{i=1}^{n} \sum_{j=1}^{m} \bm 1(Z_i=0, Y_i \leq t_j) \log\theta_0(t_j)
	+ \bm 1(Z_i=0, Y_i \leq t_j) \log(1-\theta_0(t_j)) \nonumber\\
	&\quad\quad\quad\quad +\bm 1(Z_i=1, Y_i \leq t_j) \log\theta_1(t_j)
	+ \bm 1(Z_i=1, Y_i \leq t_j) \log(1-\theta_1(t_j))\nonumber\\
	&= \frac{1}{m} \sum_{j=1}^{m} n_0  J(\overline{F}_0(t_j), \theta_0(t_j)) + n_1  J(\overline{F}_1(t_j), \theta_1(t_j)),
\end{align}
where $J(x,y) = x \log y + (1-x)\log(1-y)$. The {\it simple version BLRT statistic} $T_n^{simple}$ is defined as
$$
T_n^{simple} = 2 \left(\max_{\theta_0, \theta_1 \in \cP([0, 1]^\R)}\ell^{simple}_{\bm{Y}|\bm{Z}}(\bm\theta) - \max_{\theta_0=\theta_1 \in \cP([0, 1]^\R)}\ell^{simple}_{\bm{Y}|\bm{Z}}(\bm\theta)   \right).
$$ 
Since this test does not use any information of the compliance class behaviors, it does not require estimation of the proportions, and estimation of the outcome distribution for the compliance classes. The following gives the asymptotic approximation of $T_n^{simple}$. 

\begin{theorem}\label{TH:BLRT_simple}
	The test statistic $T_n^{simple}$ has an explicit form as
	$$
	T_n^{simple} = 2 \left( \ell^{simple}_{\bm{Y}|\bm{Z}}(\overline{F}_0, \overline{F}_1)  - \ell^{simple}_{\bm{Y}|\bm{Z}}(\overline{H}, \overline{H})\right).
	$$
	If we assume that the knots are $\bm{t} = (Y_{(1)}, \ldots, Y_{(n)})$, then the test statistic $T_n^{simple}$ is asymptotically equivalent to the two-sample Anderson-Darling test statistic, that is, 
	$$
	T_n^{simple} = \frac{n_0 n_1}{n} \int_{\overline{J}_{\kappa}} \frac{\bar{F}_{0}(t) - \bar{F}_{1}(t)}{\bar{H}(t) (1-\bar{H}(t))} {\rm d} \bar{H}(t) +o_P(1). 
	$$
\end{theorem}

\begin{proof}
	See Section~E in the Supplementary Material. 
\end{proof}

Theorem~\ref{TH:BLRT_simple} shows that, as in the case of the full version BLRT, the simple version BLRT is asymptotically equivalent to the two-sample Anderson-Darling test, and, therefore, is consistent against all fixed alternatives, as well. The difference is that $T_n^{simple}$ has a closed form and does not need the EM-PAV algorithm that will be introduced in the next section. However, $T_n^{simple}$ does not involve any estimation procedure of outcome distributions for compliance classes, and, hence, cannot be applied for estimation purposes.

\section{Computation and Simulation}
\label{SEC:ALGO}

\subsection{EM-PAV Algorithm for Computing the MBL Estimates}

There are no closed form solutions to the MBL estimates. However, the estimates can be computed efficiently by using a combination of the expectation-maximization (EM) algorithm and the pool-adjacent-violator(PAV) algorithm. We call it the \textit{EM-PAV} algorithm. 
To begin with, we introduce the {\it complete-data} $\overline \cD_n$, which includes the compliance class $\bm S$, $\overline \cD_n=(\bm Z, \bm S, \bm D, \bm Y)^{\T}$. If $Z_i$ and $S_i$ are known, then $D_i$ is determined; for example, if $Z_i=0$ and $S_i=co$, then $D_i=0$. Denote the event $\overline{K}_{zs}^{ij}=\{ Z_i=z, S_i=s \mid t_j\}$, where $s\in \{co, at, nt\}$. 

Given the complete data, we can define a {\it point-knot-specific complete-data binomial likelihood function} for the data point $(Z_i, S_i, D_i, Y_i)$ at knot $t_j$, 
\begin{align*}
	\overline{L}_{ij}(\bm\theta, \bm\chi, \chi_1 | \overline{\cD}_n) = \prod_{z \in \{0,1\}} \prod_{ s \in \{co, nt, at\}} \pr(\overline{K}_{zs}^{ij})^{\bm 1(\overline{K}_{zs}^{ij})} \times \left(\theta_{s}(t_j)^{\bm 1(Y_i \leq t_j)} (1-\theta_{s}(t_j))^{\bm 1(Y_i > t_j)}\right).
\end{align*}
The {\it complete-data binomial likelihood} is obtained by combining all point-knot-specific complete-data likelihood functions in the same way to define the BL, 
\begin{align*}
	\overline L(\bm \theta, \bm \chi, \chi_1 \mid \overline \cD_n)= \prod_{j=1}^{m} \left\{ \prod_{i=1}^{n}  \overline{L}_{ij}(\bm\theta, \bm\chi, \chi_1 | \overline{\cD}_n) \right\}^{1/m}.
\end{align*}

As in the BL, the dependence on $\chi_1$ in the complete-data likelihood is separable, that is, 
\begin{equation*}\label{eq:cbinom}
	\log \overline L(\bm \theta, \bm \chi, \chi_1 \mid \overline \cD_n)=\ell(\chi_1)+\log \overline L(\bm \theta, \bm \chi \mid \overline \cD_n)
\end{equation*}
where $\ell(\chi_1)=(n_{00}+n_{01}) \log (1-\chi_1) +  (n_{10}+n_{11}) \log \chi_1 $, and $\log \overline L(\bm \theta, \bm \chi \mid \overline \cD_n)$ does not depend on $\chi_1$. Hereafter, we will refer to $\log \overline L(\bm \theta, \bm \chi \mid \overline \cD_n)$ as the {\it complete-data binomial log-likelihood}. To find the maximizer of $\log \overline L(\bm \theta, \bm \chi \mid \overline \cD_n)$, our algorithm is initiated by specifying the initial values $(\bm\theta_{(0)}, \bm\chi_{(0)})$ that lie in the parameter space $\bm\vartheta_+ \times \bm\varphi_+$. Then the following steps are repeated until the values converge: 

\begin{algo}{(EM-PAV algorithm)} 
	Let $\bm{\hat \theta}_{(k)}=(\bm{\hat\theta}_{co, (k)}^{(0)}, \bm{\hat\theta}_{nt, (k)}, \bm{\hat\theta}_{co, (k)}^{(1)}, \bm{\hat\theta}_{at, (k)})$ and $\bm{\hat \chi}_{(k)}=(\hat{\chi}_{nt, (k)}, \hat\chi_{at, (k)})$ be the outputs after the $k$th step of the iteration. The following shows the $(k+1)$th step.
	
	({\it Expectation Step}) Given these outputs $(\bm{\hat \theta}_{(k)}, \bm{\hat \chi}_{(k)})$and the observed data $\cD_n$, the expected complete-data binomial log-likelihood is
	\begin{align*}
		Q_k(\bm\theta,  \bm\chi|\bm{\hat \theta}_{(k)},  \bm{\hat \chi}_{(k)})= E_{\bm{\hat \theta}_{(k)},  \bm{\hat \chi}_{(k)}}\left(\log \overline L(\bm \theta, \bm \chi|\overline \cD_n) \, | \,  \cD_n \right).
	\end{align*} 
	The expectation can be easily calculated; see Section~F2 in the Supplementary Material for computational details.
	
	({\it Maximization Step}) To begin with, define 
	\begin{align*}
		(\breve{\bm\theta}_{(k+1)}, \breve{\bm \chi}_{(k+1)})=\arg\max_{\bm\theta \in \bm \vartheta,  \bm\chi \in \bm\varphi}Q_k(\bm\theta,  \bm\chi|\hat{\bm\theta}_{(k)},  \hat{\bm \chi}_{(k)}).
	\end{align*}
	Note that $
	\breve{\bm\theta}_{(k+1)}=
	(\breve{\bm \theta}_{co, (k+1)}^{(0)} \breve{\bm \theta}_{nt, (k+1)}, \breve{\bm \theta}_{co, (k+1)}^{(1)}, \breve{\bm \theta}_{at, (k+1)}),$ where $\breve{\bm \theta}_{co, (k+1)}^{(0)}$ is evaluated at knots $Y_{(j)}$, and similarly for other estimates. Observe that $(\breve{\bm\theta}_{(k+1)}, \breve{\bm \chi}_{(k+1)})$ is the unrestricted maximizer of $Q_k(\bm\theta,  \bm\chi|\hat{\bm\theta}_{(k)},  \hat{\bm \chi}_{(k)})$. 
	These estimates can be computed explicitly; see Section~F3 in the Supplementary Material. It can be shown that $\breve{\bm \chi}_{(k+1)}=(\breve \chi_{nt, (k+1)} , \breve \chi_{at, (k+1)} )$ is actually in the restricted space $\bm\varphi_+$, that is, $\breve \chi_{nt, (k+1)}(t_j) , \breve \chi_{at, (k+1)}(t_j) \in [0, 1]$ and $0 \leq \breve\chi_{nt, (k+1)}(t_j) + \breve\chi_{at, (k+1)}(t_j) \leq 1$ for any knot $t_j$. Define $\hat{\bm \chi}_{(k+1)}=\breve{\bm \chi}_{(k+1)}$. In general, however, $\breve{\bm\theta}_{(k+1)} \notin \bm \vartheta_+$, because $\breve{\bm\theta}_{(k+1)}$ may not satisfy the non-decreasing condition of distribution functions. To ensure the monotonicity constraint we apply the PAV algorithm to the estimate $\breve{\bm\theta}_{(k+1)}$,
	
	\begin{align*}
		\hat{\bm\theta}_{(k+1)} = \text{PAV}_{\bm w}(\breve{\bm \theta}_{co, (k+1)}^{(0)}, \breve{\bm \theta}_{nt, (k+1)} , \breve{\bm \theta}_{co, (k+1)}^{(1)}, \breve{\bm \theta}_{at, (k+1)}),
	\end{align*}
	where the operation PAV$_{\bm w}$ is applied coordinate-wise and the weight vector is $\bm w_{(k+1)} = (\bm w_{co, (k+1)}^{(0)},$ $\bm w_{nt,  (k+1)}, \bm w_{co,  (k+1)}^{(1)}, \bm w_{at,  (k+1)})$, where the weights are defined in Section~F in the Supplementary Material. 
	\label{algo:empava}
\end{algo}

The following proposition establishes the correctness of this algorithm. 

\begin{proposition}\label{ppn1} Let $\hat{\bm\theta}_{(k+1)}, \hat{\bm \chi}_{(k+1)}$ be as defined above. Then $$(\hat{\bm\theta}_{(k+1)}, \hat{\bm \chi}_{(k+1)})=\arg\max_{\bm\theta \in \bm \vartheta_+,  \bm\chi \in \bm\varphi_+}Q_k(\bm\theta,  \bm\chi|\hat{\bm\theta}_{(k)},  \hat{\bm \chi}_{(k)}).$$
\end{proposition}

\begin{proof}
	See Section~F in the Supplementary Material. 
\end{proof}

\subsection{Simulation: Performance of BLRT}
\label{subsec:sim_blrt}

\begin{table}
	\begin{center}
		\caption{\label{power_table1} Size and power of the different tests with a significance level 0.05 }
		\begin{tabular}{lllccccc}
			&&&\multicolumn{5}{c}{$N(-\mu, 1)$ vs. $N(\mu, 1)$} \\
			$(\mu_{nt}, \mu_{at})$ &IV & $\mu$ & $T_n$ (boot.) & $T_n$ (asymp.) & $T_n^{simple}$ & $T_{AD}$ & $T_{KS}$ \\
			$(-1, 1)$ & Strong  & 0  & 0.054 & 0.042  & 0.056 & 0.047 & 0.047\\
			& & 0.3 & 0.313 & 0.277 & 0.270 & 0.287 & 0.249 \\
			& & 0.6 & 0.839 & 0.802 & 0.796 & 0.777 & 0.779 \\
			& & 0.9 & 0.983 & 0.968 & 0.963 & 0.971 & 0.954 \\[0.2cm]
			&Weak  & 0 & 0.044 & 0.027 & 0.053 & 0.052 & 0.050\\
			& & 0.3 & 0.200 & 0.149 & 0.146 & 0.120 & 0.134\\
			& & 0.6 & 0.486 & 0.393 & 0.385 & 0.356 & 0.362\\
			& & 0.9 & 0.721 & 0.637 & 0.624 & 0.646 & 0.584\\[0.2cm]
			$(-2, 2)$ & Strong  & 0 & 0.041 & 0.022 & 0.047 & 0.050 & 0.046\\
			& & 0.3 & 0.258 & 0.188 & 0.171 & 0.170 & 0.237 \\
			& & 0.6 & 0.668 & 0.590 & 0.536 & 0.529 & 0.663 \\
			& & 0.9 & 0.918 & 0.877 & 0.841 & 0.831 & 0.928\\[0.2cm]
			& Weak  & 0 & 0.045 & 0.026 & 0.042 & 0.043 & 0.033\\
			& & 0.3 & 0.142 & 0.095 & 0.080 & 0.081 & 0.101\\
			& & 0.6 & 0.319 & 0.236 & 0.177 & 0.185 & 0.267\\
			& & 0.9 & 0.504 & 0.423 & 0.336 & 0.339 & 0.457\\
		\end{tabular}
	\end{center}
\end{table}

To assess the performance of the two versions of the proposed BLRT, we compare them to the Kolmogorov--Smirnov test with $T_{KS}$ in a simulation study. Note that both $T_n^{simple}$ and $T_{KS}$ do not use the variable $\bm D$, but use $\bm Z$ and $\bm Y$. The null distribution of $T_{KS}$ can be obtained by permuting $\bm Z$ multiple times while the Anderson-Darling distribution $A^2$ is used as the asymptotic null distribution of $T_{n}^{simple} $. The distribution $A^2$ is the limiting distribution of the two-sample Anderson-Darling test statistic $T_{AD}$ \citep{pettitt1976}. Also, $A^2$ is used as the limiting distribution of the full version BLRT $T_n$ as well.  

In the simulation study, assume that all four potential outcome distributions are normal distributions with variance 1, but with different means:  $F_{co}^{(0)} \sim N(\mu_{co}^{(0)}, 1)$, $F_{co}^{(1)} \sim N(\mu_{co}^{(1)}, 1)$, $F_{nt} \sim N(\mu_{nt}, 1)$ and $F_{at} \sim N(\mu_{at}, 1)$. Two simulation factors are considered: (1) how far the distributions are from each other, (2) how strong the IV is. To see the impact of the first factor, we consider two simulation settings with $(\mu_{nt}, \mu_{at})=(-1,1)$ (close) and $(\mu_{nt}, \mu_{at})=(-2,2)$ (far). We evaluate these settings with $(\mu_{co}^{(0)}, \mu_{co}^{(1)}) = (-\mu, \mu)$ for various $\mu$ values. In addition, to assess the second factor, we consider the weak IV setting with the proportions $(\phi_{co}, \phi_{nt}, \phi_{at}) = (0.2, 0.4, 0.4)$ and the strong IV setting with $(\phi_{co}, \phi_{nt}, \phi_{at}) = (1/3, 1/3, 1/3)$. We consider four simulations settings, and, in each simulation setting, various values of $\mu$ and $n$ are considered.

\begin{figure}
	\centering
	\includegraphics[width=150mm]{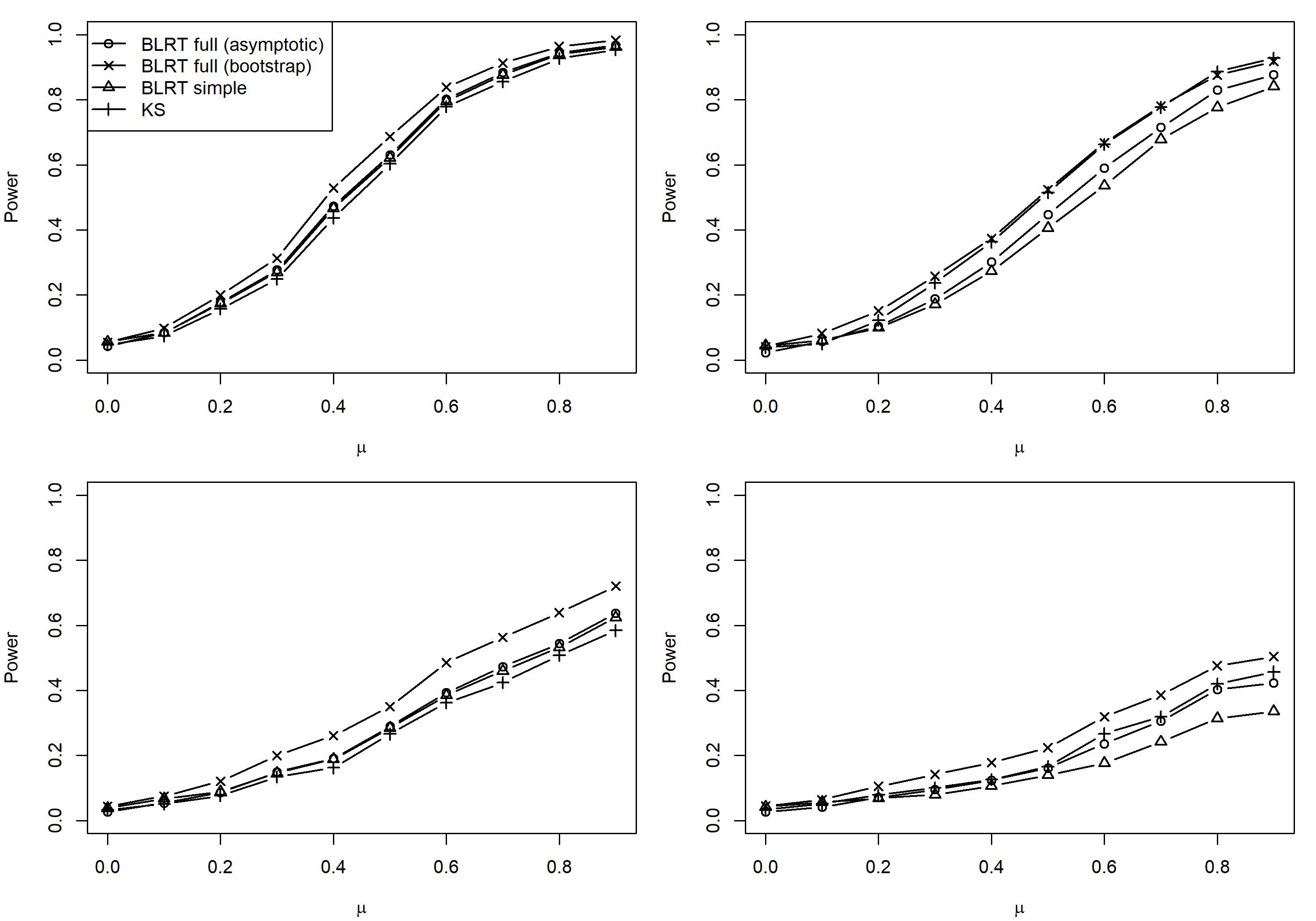}
	\caption{\small{Power of $T_n$, $T_n^{simple}$ and $T_{KS}$. Power is calculated given a significance level $\alpha=0.05$. Upper left: $(\mu_{nt}, \mu_{at})=(-1,1)$ and strong IV, Upper right:  $(\mu_{nt}, \mu_{at})=(-2,2)$ and strong IV, Lower left: $(\mu_{nt}, \mu_{at})=(-1,1)$ and weak IV, lower right:  $(\mu_{nt}, \mu_{at})=(-2,2)$ and weak IV.}}
	\label{power_plot1}
\end{figure}

Table~\ref{power_table1} shows estimated size and power of $T_n$, $T_n^{simple}$, $T_{AD}$ and $T_{KS}$ from 1000 simulated datasets. This table reports the four simulation settings for $n=300$ and $\mu=(0, 0.3, 0.6, 0.9)$. Other values of $\mu$ are not reported in this table, but are plotted in Figure~\ref{power_plot1}. More simulation results are reported in the Supplementary Materials (Section G). The first row of each simulation setting shows the simulated size. For power comparisons, one of the main findings is that the shape difference between $F_0(t)$ and $F_1(t)$ is important for the performances of the tests. When $(\mu_{nt}, \mu_{at}) = (-1,1)$, $F_0(t)$ and $F_1(t)$ differ at tails, and $T_n$ and $T_n^{simple}$ outperforms $T_{KS}$. However, when $(\mu_{nt}, \mu_{at}) = (-2, 2)$, $F_0(t)$ and $F_1(t)$ differ mostly at the middle, and $T_{KS}$ outperforms the others. For another finding, when an IV is weak, meaning that $\phi_{co}$ is small, power is reduced, but at the same time, the shape difference between $F_0(t)$ and $F_1(t)$ is less centered since $F_{nt}(t)$ and $F_{at}(t)$ dominate the shape. Therefore, as $\phi_{co}$ decreases, $T_n$ can capture the distributional difference more and produce better performance than $T_{KS}$. This can be found in Figure~\ref{power_plot1} by comparing the upper right plot and lower right plot; the asymptotic $T_n$ is less powerful than $T_{KS}$ when an IV is strong, but they have the almost same power in the weak IV setting. In summary, when the difference of the distributions in two samples is not concentrated in the middle, $T_n$ and $T_n^{simple}$ can be powerful.

As we pointed out in the previous section, the simulation results indicate that using the limiting distribution $A^2$ for $T_n$ is conservative. The simulated sizes do not reach the nominal level 0.05. We conducted additional simulations for various $n$ values when there is no effect at all. We conducted simulations for different sample sizes $n= (500, 1000, 1500, 2000)$ and the estimated sizes from 10,000 simulations for each $n$ are $(0.027, 0.029, 0.032, 0.034)$ when $(\mu_{nt}, \mu_{at})=(-2,2)$. As we expected, the size approaches to the correct nominal level $\alpha=0.05$ as $n$ increases. However, the convergence for $T_{n}$ is not satisfactory for a moderately large $n$. This conservativeness essentially lowers the performance of $T_n$ in finite samples. 

To boost the finite-sample performance, we can consider the bootstrapping method that simulates the true null distribution of $T_n$ under the null hypothesis for given $n$. Bootstrapping can be done using the estimates $\hat{\bm{\psi}}$ and $\hat{\bm{\xi}}$ that are obtained under the null hypothesis. For the $b$th procedure of bootstrapping, first fix $\bm{Z}$ and sample the compliance class membership $\bm{S}^{(b)}$ using  $\hat{\bm\xi}$. Second, determine $\bm{D}^{(b)}$ based on $\bm{Z}$ and $\bm{S}^{(b)}$; for instance, if $Z_i=0$ and $S_i^{(b)}=co$, then $D_i^{(b)}=0$. Third, take a sample $\bm{Y}^{(b)}$ based on $\bm{Z}$ and $\bm{S}^{(b)}$ using the estimate $\hat{\bm \psi}$. Finally, repeat the entire process for $1 \leq b \leq B$ to obtain the bootstrapped samples $\{(\bm Z, \bm D^{(b)}, \bm Y^{(b)})\}_{1 \leq b \leq B}$. Table~\ref{power_table1} reports simulated size and power based on $B=1000$  bootstrapped samples for each simulated dataset. The column of $T_n$ (boot.) in Table~\ref{power_table1} shows the estimated size and power from the bootstrap procedure. All the values are improved from the asymptotic-version values. The bootstrap-version $T_n$ can reduce the performance gap in cases where $T_{KS}$ is superior, and in some cases, can overtake $T_{KS}$. 

\subsection{Simulation: Performance of the MBL Method}
\label{subsec:sim_mbl}

In this section, we evaluate the MBL estimates by comparing it with the plug-in estimates \eqref{eqn:plugin} and the estimates obtained from the rearrangement method proposed by \citet{chernozhukov2010}. 

We consider the four situations in simulation studies. In the first three situations, all distributions are Gamma distributions: $F_{co}^{(0)}=F_{co}^{(1)} \sim Gamma(1.2^2, 1), F_{nt} \sim Gamma(1^2, 1)$ and $F_{at} \sim Gamma(1.4^2, 1)$. The compliance class proportions are (1) $(\phi_{co}, \phi_{nt}, \phi_{at})=(0.10, 0.45, 0.45)$, (2) $(\phi_{co}, \phi_{nt}, \phi_{at})=(0.2, 0.4, 0.4)$, (3) $(\phi_{co}, \phi_{nt}, \phi_{at})=(1/3, 1/3, 1/3)$. In the fourth situation, all distributions are normal distributions: $F_{co}^{(0)}=F_{co}^{(1)} \sim N(0, 1), F_{nt} \sim N(-1, 1)$ and $F_{at} \sim N(1, 1)$ with  $(\phi_{co}, \phi_{nt}, \phi_{at})=(0.10, 0.45, 0.45)$. The sample size is $n=1000$. To compute the average performance, we consider 1000 simulated datasets. For each dataset, we compute the $L_2$ distance between the estimated function $\hat{F}$ and the true function $F$, $L_2 (\hat{F}, F) = \int (\hat{F}-F)^2 \mathrm d F$.

\begin{table}
\begin{center}
\caption{\small{Average performance of the three estimation methods}}
		\begin{tabular}{rlrrr}
			Situation & Method & Bias & SE & 1000MSE \\
			1 & Plug-in  & 0.0514 & 0.0923 & 11.16\\ 
			& MBL & 0.0288 & 0.0330 & 1.92\\
			& Rearrangement & 0.0359 & 0.0740 & 6.76\\[0.1cm]
			2 & Plug-in  & 0.0098 & 0.0101 & 0.20\\ 
			& MBL & 0.0088 & 0.0089 & 0.16\\
			& Rearrangement& 0.0087 & 0.0094 & 0.16\\[0.1cm]
			3 & Plug-in & 0.0032 & 0.0032 & 0.02\\
			& MBL & 0.0031 & 0.0031 & 0.02\\
			& Rearrangement & 0.0030 & 0.0031 & 0.02\\[0.1cm]
			4 & Plug-in & 0.0612 & 0.1745 & 34.20\\
			& MBL & 0.0276 & 0.0328 & 1.84\\
			& Rearrangement & 0.0255 & 0.0404 & 2.28\\
	\end{tabular}
\end{center}
	\label{tab:mbl}
\end{table}

Table~\ref{tab:mbl} shows the average performance of three considered estimation methods. Biases, standard errors and mean squared errors are reported. 
The MBL method has the least bias in situation 1, but the rearrangement method has the least bias in situations 2, 3 and 4. However, the MBL method has the least standard errors in every situation. Moreover, it has the best mean squared error in all the situations, although it has similar performance to the rearrangement method when an instrument is not weak.

\section{Oregon Health Insurance Experiment: The Effect of Medicaid Coverage on Mental Health }
\label{sec:example}

We consider the 2008 Oregon health insurance experiment data that is publicly available from \url{https://www.nber.org/oregon/1.home.html}. To investigate the effect of Medicaid on health outcomes, Oregon opened a waiting list for a limited number of spots in its Medicaid program for low-income, uninsured, able-bodied adults between 19-64, which had previously been closed to new enrollment.  From the waiting list, people selected by random lottery drawings, won the opportunity for themselves, and any household member, to apply for Oregon Health Program (OHP) Standard.  However, not
all persons selected by the lottery enrolled in Medicaid, either because they did not apply or
because they were deemed ineligible. The lottery process and OHP standard are described in more detail in \cite{finkelstein2012oregon}. This random assignment embedded in the lottery allows us to study the effect of Medicaid coverage in a random encouragement design. An indicator of winning a lottery is the instrumental variable. Also, enrollment to the Medicaid program is the non-randomized treatment variable. Approximately 2 years after the lottery, health outcomes are measured for persons who responded to the follow-up survey. In our example, we use a self-reported mental health outcome by using scores on the Medical Outcome Study 8-Item Short-Form Survey (SF-8). The scores range from 0 to 100, with higher scores indicating better self-reported health-related quality of life. The scale is normalized to yield a mean of 50 and a standard deviation of 10 in the general U.S. population. Details of other health outcomes and data collection have been provided in \cite{baicker2013oregon}. 

From the data, we consider a sample of 1,117 African Americans  (single-person households) who signed themselves up for the lottery. Among them, 546 people (48.9\%) were selected by the lottery drawings. The probability of Medicaid coverage is 0.511 in the lottery winning group and 0.226 in the other group. The plug-in estimates of the proportions for compliance classes are $(\breve{\phi}_{co}, \breve{\phi}_{nt}, \breve{\phi}_{at}) = (0.285, 0.489, 0.226)$. Lottery selection increased the probability of Medicaid coverage by 28.5 points among the single-person  African American households. The MBL estimates of the proportions are the same as the plug-in estimates up to three decimal places. The two-stage least squares (2SLS) estimate is 4.88 (95\% CI, 0.01 to 9.75) with $p$-value 0.050. The magnitude of improvement was approximately half of the standard deviation of the mental-component score. Furthermore, we can restrict our attention to a subsample of African Americans aged between 19-34 ($N = 378$). The 2SLS estimate is 9.92 (95\% CI, 0.98 to 18.86) with $p$-value 0.030. The estimated proportions of compliance classes are $(0.284, 0.495, 0.221)$ which are almost identical to the estimated proportions for the total African American  population. 

\begin{figure}[!h]
	\includegraphics[width=150mm]{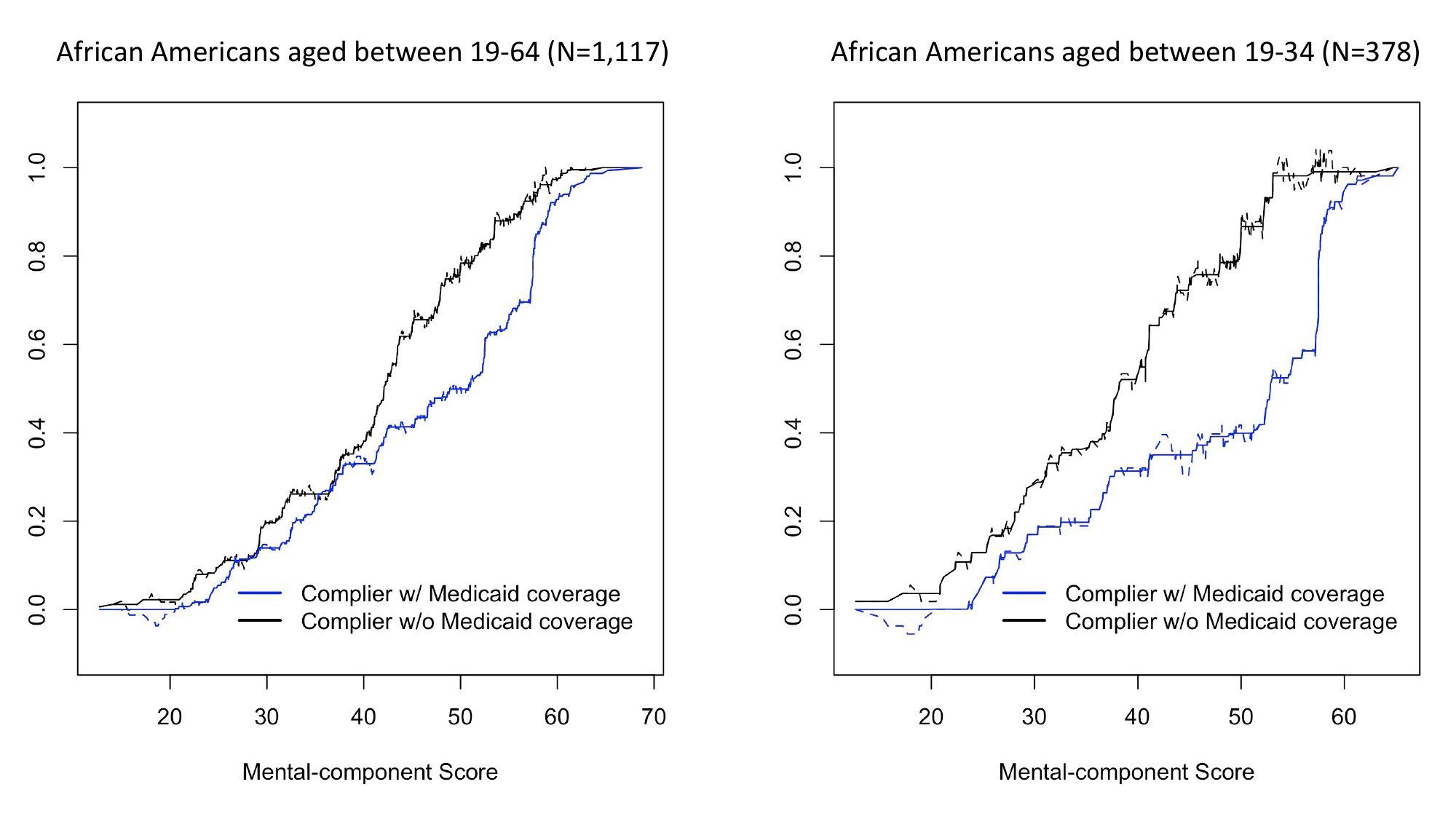}
	\caption{\small{Estimated distribution functions of the mental-component scores for compliers in the African American population. Higher mental-component scores indicates better self-reported mental health. The dotted blue and black lines are the plug-in estimates and the solid blue and black lines are the BL estimates of the distribution functions of the complier with Medicaid coverage and the complier without medicaid coverage, respectively.}}
	\label{fig2}
\end{figure}

Figure~\ref{fig2} shows the estimated distribution functions of the potential outcomes of mental-component scores for compliers when enrolled in the Medicaid program and when not enrolled. The left plot shows the plug-in estimates described in Section \ref{subsec:review} and the MBL estimates for African Americans aged between 19-64 (the full sample), and the right plot shows them for African Americans aged between 19-34. In both of the plots, we see that the estimated distribution function for complier without Medicaid coverage is almost always above the other. The gap between the two functions is wider at higher mental-component scores. Unlike the plug-in estimate, the MBL estimate satisfies the non-decreasing condition, and, as a result, there is a unique value of estimated scores corresponding to a specific quantile level. This feature can be useful for those who want to estimate the treatment effect at a certain quantile level using the estimated distribution functions. For example, the MBL method estimates that Medicaid coverage led to an increase of 8.70 points in the median score on the mental component for compliers. However, from the plug-in method, there are two values that correspond to the value 0.5 of the distribution for complier with Medicaid coverage, making it unclear how to compare the medians of the two distribution functions. For the young African American population (aged between 19-34), Medicaid coverage increased the median mental-component score by 14.68 points from 38.13 to 52.81. 

Furthermore, the BLRT can be conducted for testing the null hypothesis $H_0: F_{co}^{(0)} = F_{co}^{(1)}$. We apply both versions of the proposed binomial likelihood test, $T_{n}$ and $T_n^{simple}$, and compare them with $T_{KS}$. Using the asymptotic null distribution $A^2$, the $P$-values are computed as $(0.021, 0.020, 0.031)$ for $(T_n, T_n^{simple}, T_{KS})$. Moreover, the $P$-value of $T_n$ can be computed by the bootstrap procedure with $B=10,000$ described in Section~\ref{subsec:sim_blrt}, and the $P$-value is 0.021, which agrees with the asymptotic version of the $P$-value. Similarly, for the young African American  population, the estimated $P$-values are $(0.012, 0.012, 0.030)$ for $(T_n, T_n^{simple}, T_{KS})$. For a smaller sample size, the proposed BLRT produced a smaller $P$-value. For all considered tests, we reject the null hypothesis of equality of distributions at a significance level $\alpha=0.05$. 

Finally, using the BLRT, we also test another hypothesis $H_0^{location}: F_{co}^{(0)}(t-\mu) = F_{co}^{(1)}(t)$ for testing treatment effect heterogeneity. All possible values of $\mu$ are examined for the African American  population and the subpopulation of African Americans  aged between 19-34. For the African American  population, $H_0^{location}$ is not rejected for values between $0.76 \leq \mu \leq 10.59$. Also, for African Americans  aged between 19-34, $H_0^{location}$ is not rejected for values between $1.92 \leq \mu \leq 21.38$. These results show that there is no evidence of treatment effect heterogeneity.

\section{Discussion}
\label{sec:discussion}

We propose a non-parametric composite likelihood approach, referred to as the binomial likelihood (BL) method, for making causal inferences about the distributional treatment effect in a randomized experiment with an instrumental variable. The BL approach provides a non-parametric inferential tool similar to the classical parametric likelihood. The maximum binomial likelihood (MBL) method provides estimates of the outcome distributions, which are proper distribution functions and the binomial likelihood ratio test (BLRT) is a powerful technique to detect distributional changes when the outcome distributions are close to each other, especially when the IV is weak. 

Several extensions and generalizations of the BL method are possible. For instance, while constructing the BL functions one requires specification of the knots as evaluating points. We recommended to use all the observed outcomes for the knots, but a different specification can be considered depending on the question of interest. For example, in the Oregon Health Insurance experiment, if one wants to examine whether there is any effect for people above mental score 40, then only outcomes above 40 can be chosen as knots. In such a setting, a rejection implies that there is evidence of distributional changes for this specific subpopulation. 

The results in this paper show that the BL approach works well in randomized encouragement experiments where a compliance class is latent. A possible future direction could be to study the performance of the BL approach in general mixture models when there is a latent variable.

\section*{Acknowledgement}
The authors thank Abhishek Chakrabortty and Shirshendu Ganguly for helpful discussions. 

%
%

\bibliographystyle{apalike}
\bibliography{bl_library}

\end{document}



\def\spacingset#1{\renewcommand{\baselinestretch}%
{#1}\small\normalsize} \spacingset{1}


\title{\bf A Nonparametric Likelihood Approach for Inference in Instrumental Variable Models: Supplementary Materials}
\author{}
\date{}

\maketitle

\spacingset{1.25}

\section{Preliminaries}
\label{sec:prelim}

In this section, we start with reviewing functions that are defined in the main manuscript. Also, we define new functions that are necessary for later proofs. First, we defined the potential outcome distribution for compliance classes. $\bm F=(F_{co}^{(0)}, F_{co}^{(1)}, F_{nt}, F_{at})$ is a collection of distribution functions for compliers without treatment, compliers with treatment, never-takers, and always-takers respectively. Also, we defined the proportions of compliance classes as $\bm \phi = (\phi_{nt}, \phi_{at})$. The proportion of compliers $\phi_{co}$ was defined as $1-\phi_{nt} - \phi_{at}$. $\chi_1$ is the proportion of $Z=1$. These functions represent the true values. Table~\ref{tab:summary} summarizes other defined functions. 

We define new functions $\M_n (\bm \theta, \bm \chi) $ and $\M (\bm \theta, \bm \chi)$ for fixed knots $\bm{t} = (t_1, \ldots, t_m)$ based on the binomial log-likelihood $\ell_{\bm{Y}, \bm{D} | \bm{Z}}(\bm \theta, \bm \chi)$, 
\begin{align*}
	\M_n(\bm \theta, \bm \chi) &= \frac{1}{n} \ell_{\bm{Y}, \bm{D} | \bm{Z}}(\bm \theta, \bm \chi) = \frac{1}{n} \sum_{i=1}^{n} \ell_{Y_i, D_i | Z_i}(\bm \theta, \bm \chi) \\
	\M(\bm \theta, \bm \chi)  &=  \E [ \ell_{Y_i, D_i | Z_i}(\bm \theta, \bm \chi) ].
\end{align*} 
 Since $\M_n$ is just $n$ times samller than the binomial log-likelihood $\ell_{\bm{Y}, \bm{D} | \bm{Z}}(\bm \theta, \bm \chi) $, the maximum binomial likelihood estimator $(\hat{\bm \theta}, \hat{\bm \chi})$ that maximizes $\ell_{\bm{Y}, \bm{D} | \bm{Z}}(\bm \theta, \bm \chi) $ can be obtained by maximizing $\M_n$. To apply the theory of M-estimators, we use mostly $\M_n$ instead of $\ell_{\bm{Y}, \bm{D} | \bm{Z}}(\bm \theta, \bm \chi) $ in later sections. 
 
\begin{table}[t]
	\centering
	\caption{Summary of defined functions}\label{tab:summary}
	\begin{tabular}{ll}
		\toprule
		Functions & Descriptions \\
		$\bm{t} = (t_1, \ldots, t_m)$ & Knots where the likelihood functions are evaluated \\
		$\bm \theta = (\theta_{co}^{(0)}, \theta_{co}^{(1)}, \theta_{nt}, \theta_{at})$ & A collection of functional variables that estimates $\bm F$ \\
		$\bm \chi = (\chi_{nt}, \chi_{at})$ & A collection of functional variables that estimates $\bm \phi$ \\
		$F_{zd}(t)$ & $\pr (Y \leq t \>|\> Z=z, D=d)$ \\
		$\theta_{zd}$ & A functional variable that estimates $F_{zd}$\\
		$\eta_{zd}$ & $\pr(Z=z, D=d)$ \\
		$n_{zd}$ & $\sum_{i=1}^{n} \bm{1} (Z_i=z, D_i=d)$ \\
		$\overline{F}_{zd}$ & The empirical distribution function of $\bm{Y}$ in the cell $Z=z, D=d$, \\
		& equivalently $\overline{F}_{zd}(t)=(1/n_{zd}) \cdot \sum_{i=1}^{n} \bm{1} (Z_i=z, D_i=d, Y_i \leq t)$ \\
		$\breve{\bm F} = (\breve{F}_{co}^{(0)}, \breve{F}_{co}^{(1)}, \breve{F}_{nt}, \breve{F}_{at})$ & The plug-in estimator of $\bm F$ \\ 
		$\breve{\bm \phi} = (\breve{\phi}_{nt}, \breve{\phi}_{at})$ & The plug-in estimator of $\bm \phi$ \\
		$H(t)$ & $\sum_{z, d \in \{0, 1\}} \eta_{zd} F_{zd}(t)$ \\
		$\overline{H}(t)$ & The emprirical distribution of $\bm{Y}$, $\sum_{z, d \in \{0,1\}} (n_{zd}/n) \cdot \overline{F}_{zd}(t)$ \\
		$\bm\vartheta \times \bm\varphi $ & A {\it unrestricted} parameter space for $(\bm \theta, \bm \chi)$ \\
		$\bm\vartheta_{0} \times \bm\varphi $ & A {\it unrestricted} parameter space for $(\bm \theta, \bm \chi)$ under the null\\
		$\bm\vartheta_{+} \times \bm\varphi_{+} $ & A {\it restricted} parameter space for $(\bm \theta, \bm \chi)$ \\
		$\bm\vartheta_{+, 0} \times \bm\varphi_{+} $ & A {\it restricted} parameter space for $(\bm \theta, \bm \chi)$ under the null \\
	\end{tabular}
\end{table}

\subsection{Basic inequalities}
\label{sec:pfblI1}

In this section we collect some basic inequalities and properties of the objective function $\M_n$. We begin with a few preliminary observations which will be used later in our proofs: 

\begin{obs}[A1] Fix $a\in [0, 1]$. Then for every $x\in [0, 1]$, $J(a, x)=a\log x+(1-a)\log(1-x)\leq a\log a+(1-a)\log (1-a)=I(a)$.
	\label{obs:ineq}
\end{obs}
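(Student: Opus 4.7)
The plan is to recognize Observation \ref{obs:ineq} as essentially the classical Gibbs inequality (non-negativity of Kullback--Leibler divergence) for two-point distributions. Specifically, for fixed $a \in [0,1]$ the quantity $I(a) - J(a,x)$ equals the KL divergence between the Bernoulli$(a)$ and Bernoulli$(x)$ distributions, which is non-negative. Adopting the standard convention $0\log 0 = 0$ handles the boundary cases $a \in \{0,1\}$ and $x \in \{0,1\}$ cleanly.

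Concretely, I would proceed by a direct calculus argument on the interior $(0,1)$ and then extend to the boundary by continuity. First, fix $a \in (0,1)$ and consider $x \mapsto J(a,x)$ as a function on $(0,1)$. Differentiating gives
\[
\frac{\partial J}{\partial x}(a,x) = \frac{a}{x} - \frac{1-a}{1-x} = \frac{a-x}{x(1-x)},
\]
which is positive for $x < a$, zero at $x=a$, and negative for $x > a$. Hence $J(a,\cdot)$ attains its unique maximum on $(0,1)$ at $x=a$, with value $J(a,a) = a\log a + (1-a)\log(1-a) = I(a)$.

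Next, I would check the boundary values of $x$: as $x \downarrow 0$, $J(a,x) \to -\infty$ if $a > 0$ and $J(0,0) = 0 = I(0)$ under the $0\log 0 = 0$ convention; symmetrically at $x \uparrow 1$. For the corner cases $a \in \{0,1\}$ the inequality reduces to $\log(1-x) \leq 0$ or $\log x \leq 0$ on $[0,1]$, which is immediate. In all cases the maximum of $J(a,\cdot)$ on $[0,1]$ equals $I(a)$, proving the claim.

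There is no real obstacle here; the only subtlety is the bookkeeping for the boundary terms under the $0\log 0 = 0$ convention, and the observation that the proof simultaneously shows the maximizer is unique whenever $a \in (0,1)$, a fact that is used implicitly elsewhere in the paper when identifying the unrestricted MBL optimum with the plug-in estimator.
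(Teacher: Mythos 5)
Your proof is correct, but it takes a different route from the paper. The paper fixes $a\in(0,1)$, introduces the two-point random variable $W$ taking values $x/a$ and $(1-x)/(1-a)$ with probabilities $a$ and $1-a$ (so that $\E W = 1$), and applies Jensen's inequality to get $J(a,x)-I(a)=\E(\log W)\leq \log \E W = 0$; the corner cases $a\in\{0,1\}$ are dismissed as trivial, exactly as you do. You instead differentiate $x\mapsto J(a,x)$ and read off that the unique interior critical point is $x=a$, where the maximum $I(a)$ is attained. Both arguments are standard proofs of Gibbs' inequality for Bernoulli distributions. The Jensen route is shorter and sidesteps any discussion of the behavior of $J(a,\cdot)$ near $x\in\{0,1\}$, which your argument must address explicitly (and does, correctly, under the $0\log 0=0$ convention). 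Your calculus route buys a little more: the sign analysis of $\partial J/\partial x$ shows $J(a,\cdot)$ is strictly increasing below $a$ and strictly decreasing above it, so the maximizer is unique for $a\in(0,1)$ --- a fact the paper uses implicitly when identifying the unrestricted maximizer of $\M_n$ with the plug-in estimator in Lemma \ref{lem:Mnopt}, and which the Jensen argument would only yield after invoking strictness of Jensen for the non-degenerate case $x\neq a$.
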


\begin{proof}The inequality is trivially satisfied when $a\in\{0, 1\}$. Therefore, assume that $a\in (0, 1)$, and define a random variable $W$ which takes values $\frac{x}{a}$ and $\frac{1-x}{1-a}$ with probabilities $a$ and $1-a$, respectively. Note that $\E W=1$. Then by Jensen's inequality, $\E(\log W)=a\log \frac{x}{a}+(1-a)\log\frac{1-x}{1-a}\leq \log \E W=0$, which completes the proof of the result.
\end{proof}

\begin{obs}[A2]\label{obs:cdiff} Fix $a\in [0, 1]$. Then for every $x\in [0, 1]$, $a \log\frac{x}{a} + (1-a) \log \frac{1-x}{1-a} \leq  - \frac{1}{2}(x-a)^2$. 
\end{obs}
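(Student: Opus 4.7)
The plan is to recognize the left-hand side as $-D(\mathrm{Ber}(a)\Vert\mathrm{Ber}(x))$, the negative KL divergence between two Bernoulli distributions, so the claim is a (weak) Pinsker-type inequality. Rather than quoting Pinsker, I will give a direct single-variable calculus proof, which is self-contained and gives exactly the constant $1/2$ that is needed.

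Concretely, fix $a\in(0,1)$ and define, for $x\in(0,1)$,
\[
f(x) := a\log\frac{x}{a} + (1-a)\log\frac{1-x}{1-a} + \tfrac{1}{2}(x-a)^2,
\]
with the convention $0\log 0 = 0$ at the boundary. The goal is to show $f(x)\le 0$. First, $f(a)=0$. Next, a direct computation gives
\[
f'(x) = \frac{a}{x}-\frac{1-a}{1-x}+(x-a) = \frac{a-x}{x(1-x)}+(x-a) = (a-x)\!\left[\frac{1}{x(1-x)}-1\right].
\]
Since $x(1-x)\le 1/4$ for $x\in(0,1)$, the bracketed factor satisfies $\tfrac{1}{x(1-x)}-1\ge 3>0$. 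Hence $f'(x)>0$ for $x<a$ and $f'(x)<0$ for $x>a$, so $f$ attains its maximum on $(0,1)$ at $x=a$, where $f(a)=0$. This establishes $f(x)\le 0$ on $(0,1)$, which is the desired inequality.

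It then remains to dispose of the boundary cases. If $x\in\{0,1\}$ and $a\in(0,1)$, the corresponding logarithm equals $-\infty$, so the LHS is $-\infty$ and the inequality is trivial. If $a=0$, the convention $a\log(x/a)=0$ reduces the claim to $\log(1-x)\le-\tfrac12 x^2$, which follows from $\log(1-x)\le -x\le -\tfrac12 x^2$ on $[0,1]$; the case $a=1$ is symmetric. The main conceptual point is the factorization of $f'(x)$, and the key quantitative input is the elementary bound $x(1-x)\le 1/4$; no deeper obstacle arises.
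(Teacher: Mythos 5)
Your proof is correct, but it takes a genuinely different route from the paper's. The paper writes the left-hand side as $f_a(x)$ with $f_a(a)=0$, $f_a'(a)=0$, and uses the Lagrange form of the second-order Taylor remainder, $f_a(x)=\tfrac12(x-a)^2 f_a''(\gamma_{x,a})$; the work then goes into showing the uniform curvature bound $-f_a''(x)=\tfrac{a}{x^2}+\tfrac{1-a}{(1-x)^2}\ge 1$, which requires locating the minimizer of this convex function in $x$ and checking the value there. You instead move the quadratic to the left, set $f(x)=f_a(x)+\tfrac12(x-a)^2$, and observe the clean factorization $f'(x)=(a-x)\bigl[\tfrac{1}{x(1-x)}-1\bigr]$, so that the elementary bound $x(1-x)\le 1/4$ immediately forces $f$ to increase up to $x=a$ and decrease afterwards, giving $f\le f(a)=0$. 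Your argument is somewhat more economical: the only quantitative input is $x(1-x)\le 1/4$, versus the paper's minimization of $\tfrac{a}{x^2}+\tfrac{1-a}{(1-x)^2}$ (whose write-up in the paper is also slightly garbled, with the sign of $f_a''$ and the location of its extremum misstated). The paper's approach does yield the stronger pointwise statement that $-f_a$ is $1$-strongly convex on $(0,1)$, which is more than is needed here. You are also more careful than the paper about the boundary cases $a\in\{0,1\}$ and $x\in\{0,1\}$, which the statement formally includes; your treatment of those is correct.
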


\begin{proof} For a given $a \in (0,1)$, let $f_a(x) = a \log\frac{x}{a} + (1-a) \log \frac{1-x}{1-a}$.   By a second order Taylor expansion around the point $a$, $f_a(x)=\frac{1}{2}(x-a)^2 f_a''(\gamma_{x, a})$ where $\gamma_{x, a} \in [x \wedge a, x \vee a]$\footnote{For $a, j \in \R$, define $a\wedge b=\min\{a, b\}$ and $a\vee b=\max\{a, b\}$.} and $f_a''(x)=-\frac{a}{x^2} - \frac{1-a}{(1-x)^2}$. Note that, for $a\in (0, 1)$ fixed, the function $f_a''(x)$ is convex. It is easy to check that the minimum is attained at $x_0 = (\frac{a}{1-a})^{\frac{1}{3}}$, and $f''_a(x)\geq f''_a(x_0)> 1$. This implies, $f_a(x)\leq - \frac{1}{2}(x-a)^2$.
\end{proof}

\begin{obs}[A3] \label{obs:orderstatistics}
	Fix $0< t < 1$. Suppose $Y_1, Y_2, \ldots, Y_n$ are i.i.d. samples from the distribution $H(t)$. Then, for $z, d \in \{0, 1\}$ $$F_{zd}(Y_{\ceil{nt}})\xrightarrow{P} H_{zd}^{-1}(t),$$ where $H_{zd}(t)= (H \circ F_{zd}^{-1})(t)$.
\end{obs}

\begin{proof} Without of generality, take $z=0$ and $d=0$. Then the distribution of $$W_1=F_{00}(Y_1), ~ W_2=F_{00}(Y_2), \ldots, ~ W_n=F_{00}(Y_n)$$ are i.i.d. samples with distribution function $H_{00}(t)=\eta_{00}t+\eta_{01}(F_{01}\circ F_{00}^{-1})(t)+\eta_{10}(F_{10}\circ F_{00}^{-1})(t)+\eta_{11}(F_{11}\circ F_{00}^{-1})(t)$. This implies, for any $0<t< 1$, $F_{00}(Y_{(\ceil{nt})})=W_{(\ceil{nt})}\xrightarrow{P}  H^{-1}_{00}(t)$, where the last step uses the convergence of sample quantiles to the corresponding population quantiles \citet{walker1968}. 
\end{proof}



\subsection{Proof of Proposition 1}
\label{sec:pfbl} 

To begin recall the definition of $K_{zd}^{ij}=\{Z_i=z, D_i=d \>|\> t_j \}$. Then, for all $i$ and $j$, 
\begin{align}\label{eq:prob}
\pr(K_{zd}^{ij}) = \left\{
\begin{array}{ccc}
(1-\chi_1) \cdot (1-\chi_{at}(t_j))  &    \text{for} & z=0, d=0,   \\
(1-\chi_1) \cdot \chi_{at}(t_j)   &   \text{for} & z=0, d=1,  \\
\chi_1 \cdot \chi_{nt}(t_j)   &  \text{for} & z=1, d=0, \\
\chi_1 \cdot (1-\chi_{nt}(t_j))  &  \text{for}  & z=1, d=1.      
\end{array}
\right.
\end{align}

Next, to describe the binomial log-likelihood function $\log L(\bm \theta, \bm \chi, \chi_1 | \cD_n)$, we first define the functions $S_{zd}(\cD_{n,i}=(Z_i, D_i, Y_i), t_j)$ where $\cD_{n, i}$ is single data point of $(Z_i, D_i, Y_i)$ for $z, d \in \{0, 1\}$ that are defined as 
{\small
\begin{align}\label{eq:Suv}
S_{00}(\cD_{n,i}, t_j) & = \bm 1\{K_{00}^{ij}\} \log \left\{ 1-\chi_{at}(t_j) \right \} \nonumber \\
& + \bm 1\{K_{00}^{ij}, Y_i \leq t_j\}  \log\left\{ \frac{1-\chi_{nt}(t_j)-\chi_{at}(t_j)}{1-\chi_{at}(t_j)} \theta_{co}^{(0)}(t_j) + \frac{\chi_{nt}(t_j)}{1-\chi_{at}(t_j)} \theta_{nt}(t_j) \right\},  \nonumber \\
&+\bm 1\{K_{00}^{ij}, Y_i > t_j\}  \log\left\{ \frac{1-\chi_{nt}(t_j)-\chi_{at}(t_j)}{1-\chi_{at}(t_j)} (1-\theta_{co}^{(0)}(t_j)) + \frac{\chi_{nt}(t_j)}{1-\chi_{at}(t_j)} (1-\theta_{nt}(t_j)) \right \},  \nonumber \\ 
S_{01}(\cD_{n,i}, t_j) & =\bm 1\{K_{01}^{ij}\}  \log  \chi_{at}(t_j) +\bm 1  \{K_{01}^{ij}, Y_i  \leq t_j\} \log \theta_{at}(t_j)  + \bm 1\{K_{01}^{ij}, Y_i > t_j\} \log (1-\theta_{at}(t_j)), \nonumber \\
S_{10}(\cD_{n,i}, t_j) &= \bm 1\{K_{10}^{ij}\}  \log  \chi_{nt}(t_j)   + \bm 1  \{K_{10}^{ij}, Y_i \leq t_j\} \log\theta_{nt}(t_j) +\bm 1\{K_{10}^{ij}, Y_i > t_j\} \log(1-\theta_{nt}(t_j)), \nonumber \\
S_{11}(\cD_{n,i}, t_j) & = \bm 1\{K_{11}^{ij}\} \log \left\{   1- \chi_{nt}(t_j)\right\}   \nonumber \\
&+\bm 1\{K_{11}^{ij}, Y_i \leq t_j\}  \log\left\{ \frac{1-\chi_{nt}(t_j)-\chi_{at}(t_j)}{1-\chi_{nt}(t_j)} \theta_{co}^{(1)}(t_j) + \frac{\chi_{at}(t_j)}{1-\chi_{nt}(t_j)} \theta_{at}(t_j)\right \}  \nonumber \\
&+\bm 1\{K_{11}^{ij}, Y_i > t_j\} \log \left\{  \frac{1-\chi_{nt}(t_j)-\chi_{at}(t_j)}{1-\chi_{nt}(t_j)} (1-\theta_{co}^{(1)}(t_j)) + \frac{\chi_{at}(t_j)}{1-\chi_{nt}(t_j)} (1-\theta_{at}(t_j))\right\}. 
\end{align}
}
\normalsize
With \eqref{eq:prob} and \eqref{eq:Suv}, the binomial log-likelihood can be re-written as follows:
\begin{align}\label{eq:binom}
\log L(\bm \theta, \bm\chi, \chi_1|\cD_n)=  \ell_{\bm{Z}}(\chi_1) +  \frac{1}{m}\sum_{i=1}^{n}  \sum_{j=1}^{m}  \sum_{z, d \in\{0, 1\}} S_{zd}(\cD_{n,i}, t_j),
\end{align}
where $ \ell_{\bm{Z}}(\chi_1) =\frac{1}{n} \left\{ (n_{00} + n_{01}) \log (1-\chi_1) +  (n_{10} + n_{11}) \log \chi_1 \right\}$. Note that the second-term in the RHS of \eqref{eq:binom} above does not depend on $\chi_1$. Hence, 
$$\hat \phi_1=\arg\max_{\chi_1 \in [0, 1]}\ell_{\bm{Z}}(\chi_1)=\frac{n_{10}+n_{11}}{n},$$ 
as required.

Furthermore, the defined quantities $\ell_{zd} (\bm \theta, \bm \chi)$ in the main manuscript can be represented by $\ell_{zd}(\bm \theta, \bm \chi) = \frac{1}{m} \sum_{i=1}^{n}\sum_{j=1}^{m} S_{zd}(\cD_{n,i}, t_j)$. Also, since the sub-loglikelihood function $\ell_{\bm{Y}, \bm{D} | \bm{Z}}(\bm \theta, \bm \chi)$ is defined as $\sum_{z,d \in \{0,1\}} \ell_{zd}(\bm \theta, \bm \chi)$, the second term on the right hand side in \eqref{eq:binom} is  $\ell_{\bm{Y}, \bm{D} | \bm{Z}}(\bm \theta, \bm \chi)$. Therefore,  $(\hat{\bm F}, \hat{\bm \phi})=\arg \max_{\bm \theta\in \bm \vartheta_+, \bm \chi \in \bm \varphi_{+}} \ell_{\bm{Y}, \bm{D} | \bm{Z}}(\bm \theta, \bm \chi)$.

\subsection{Unrestricted maximization of the sample objective function}
\label{sec:Mnmax}

Here we show that maximizing the sample objective  function $\M_n$  over the unrestricted parameter space gives the plug-in estimates, justifying the choice of $\M_n$ as an approximate surrogate for the actual likelihood. As a consequence,  it follows that the plug-in and the maximum binomial likelihood estimates of the compliance classes are equal with probability 1. In the following proofs for theoretical arguments, we choose knots based on actual observations of $\bm{Y}$. As we discussed in the main manuscript, to avoid the vanishing issue, we use the truncation parameter $\kappa$ for theoretical purposes. We set $I_\kappa=[\ceil{n \kappa}, \ceil{n(1-\kappa)}]$ for a fixed small constant $\kappa$. We can simply consider this as a special choice of knots by defining $ |I_{\kappa}|=m$ and $\bm{t} = (t_{1}, \ldots, t_{m}) = (Y_{(\ceil{n\kappa})}, \ldots, Y_{(\ceil{n(1-\kappa)})})$. 

\begin{lemma}\label{lemma:Mnopt} Let $\hat{\bm \phi}$ be the maximum binomial likelihood estimate and $\breve{\bm \phi}$ be the plug-in estimate. Then 
	\begin{align}\label{eq:Mn2}
	\arg\max_{\bm \theta \in \bm \vartheta, \bm \chi \in \bm\varphi} \M_n(\bm \theta, \bm\chi)=(\breve{\bm G},  \breve{\bm \phi}),
	\end{align}
	where $\breve{\bm G}\in \breve \cF=\{\breve{\bm G}\in \bm\vartheta: \breve{\bm G}(Y_{(j)})=\breve{\bm F}(Y_{(j)}) \text{ for } j \in I_\kappa\}$ and $\breve{\bm \phi} = (\breve{\phi}_{nt}(Y_{(j)}), \breve{\phi}_{at}(Y_{(j)}))$ with $\breve{\phi}_{nt}(Y_{(j)}) \equiv \breve{\phi}_{nt}= n_{10}/(n_{10}+n_{11})$ and $\breve{\phi}_{nt}(Y_{(j)}) \equiv \breve{\phi}_{at}= n_{01}/(n_{00}+n_{01})$ for all $j \in I_{\kappa}$. 
\end{lemma}

\begin{proof} Define $n_{00}+n_{01}=n_0$ and $n_{10}+n_{11}=n_1$. Then by Observation \ref{obs:ineq}, 
	\begin{align*}
	\M(\bm \theta, \bm\chi) &\leq \frac{1}{|I_{\kappa}|} \sum_{j \in I_{\kappa}} \left\{ \frac{n_{0}}{n} I\left(\frac{n_{10}}{n_1}\right) + \frac{n_{1}}{n} I\left(\frac{n_{01}}{n_0}\right)  + \sum_{z, d \in \{0, 1\}}\frac{n_{zd}}{n} I(\overline F_{zd}(Y_{(j)}))\right\}.
	\end{align*}
	Moreover, the equality is attained when 
	$$\begin{pmatrix}
	\theta_{00}(Y_{(j)}) \\
	\theta_{10}(Y_{(j)}) \\ 
	\theta_{01}(Y_{(j)}) \\
	\theta_{11}(Y_{(j)})
	\end{pmatrix}=
	\begin{pmatrix}
	\overline F_{00}(Y_{(j)}) \\
	\overline F_{10}(Y_{(j)}) \\
	\overline F_{01}(Y_{(j)}) \\ 
	\overline F_{11}(Y_{(j)}))
	\end{pmatrix}$$ 
	and $(\chi_{nt}(Y_{(j)}), \chi_{at}(Y_{(j)})) = (n_{10}/n_1, n_{01}/n_0)$ for all $j \in I_\kappa$ . This implies that the equality is attained when $(\chi_{nt}(Y_{(j)}), \chi_{at}(Y_{(j)})) = (\breve \phi_{nt}, \breve \phi_{at})$ and $\theta_{at}(Y_{(j)})=\overline F_{01}(Y_{(j)})=\breve F_{at}(Y_{(j)})$, $\theta_{nt}(Y_{(j)})=\overline F_{10}(Y_{(j)})=\breve F_{at}(Y_{(j)})$ and 
	\begin{align}
	\theta_{co}^{(0)}(Y_{(j)})&=\frac{\overline F_{00}(Y_{(j)})- \frac{\breve \phi_{nt}}{\breve \phi_{co}+\breve \phi_{nt}}\overline F_{10}(Y_{(j)})}{\frac{\breve \phi_{co}}{\breve \phi_{co}+\breve \phi_{nt}}}=\breve F_{co}^{(0)}(Y_{(j)}) \\
	\theta_{co}^{(1)}(Y_{(j)})&=\frac{\overline F_{11}(Y_{(j)})- \frac{\breve \phi_{at}}{\breve \phi_{co}+\breve \phi_{at}}\overline F_{01}(Y_{(j)})}{\frac{\breve \phi_{co}}{\breve \phi_{co}+\breve \phi_{at}}}=\breve F_{co}^{(1)}(Y_{(j)})
	\end{align}
	where $\breve{\phi}_{co} = 1- \breve{\phi}_{nt} - \breve{\phi}_{at}$ for $j \in I_\kappa$. This completes the proof of \eqref{eq:Mn2}. 
\end{proof}


%
%
%
%
%
%
%
%
%
%
%
%
%
%
%
%
%
%
%
%
%
%
%
%

%
%
%
%
%
%
%
%
%
%
%
%
%
%

\section{Proof of Theorem 1}
\label{sec:pfthblII}

In this section we prove Theorem 1, which shows that the plug-in estimate $\breve{\bm F}$ and the maximum binomial likelihood estimate $\hat {\bm F}$ have the same limiting distribution.

\subsection{Comparing the (sample) objective functions}

From Lemma \ref{lemma:Mnopt}, we know that $\M_n(\breve{\bm F}, \breve{\bm \phi}) -\M_n(\hat{\bm F}, \hat{\bm \phi}) \geq 0$. One of the main steps towards the proof of  $\frac{1}{|I_{\kappa}|} \sum_{j \in I_\kappa} ||\sqrt n\{\bm{\hat{F}}(Y_{(j)})-\breve{\bm F}(Y_{(j)})\}||^2_2=o_P(1)$, is to show that this difference is small, more precisely, 
\begin{align}\label{eq:MnTSI} 
\M_n(\breve{\bm F}, \breve{\bm \phi}) - \M_n(\hat{\bm F}, \hat{\bm \phi}) = o_P(n^{-\frac{5}{4}}).
\end{align} 
To this end, we have the following definition:

\begin{definition}\label{definition:tildeF}
	Define $\tilde{\bm F}=(\tilde F_{co}^{(0)}, \tilde F_{nt}, \tilde F_{co}^{(1)}, \tilde F_{at})' \in \cP([0, 1]^\R)^4$, as follows:  $$\tilde F_{nt}=\breve F_{10}=\overline F_{10} \quad and \quad \tilde F_{at}=\breve F_{01}=\overline F_{01},$$ are the corresponding empirical distribution functions, and 
	\begin{align}\label{eq:incF}
	\tilde F_{co}^{(0)}=\arg\min_{\theta \in \cP([0, 1]^\R)}\sum_{j \in I_\kappa} (\breve F_{co}^{(0)}(Y_{(j)})-\theta(Y_{(j)}))^2, \nonumber \\
	\tilde F_{co}^{(1)}=\arg\min_{\theta \in \cP([0, 1]^\R)}\sum_{j \in I_\kappa} (\breve F_{co}^{(1)}(Y_{(j)})-\theta(Y_{(j)}))^2.
	\end{align}
	Note that this only defines $\tilde F_{co}^{(0)}$ and $\tilde F_{co}^{(1)}$ at the knots $\{Y_{(j)}\}_{j \in I_\kappa}$. To ensure \eqref{eq:incF} is well-defined we extend $\tilde F_{co}^{(0)}$ and $\tilde F_{co}^{(1)}$ between the knots by right-continuous interpolation, and extrapolate it beyond the knots to 0 and 1. Moreover, define 
	\begin{align}\label{eq:tildeFuv}
	\tilde F_{00}&= \breve \lambda_0 \tilde F_{co}^{(0)}+(1-\breve \lambda_0) \tilde F_{nt},  \nonumber \\
	\tilde F_{11}&= \breve \lambda_1 \tilde F_{co}^{(1)}+(1-\breve \lambda_1)\tilde F_{at}, 
	\end{align}
	where $\breve \lambda_0 = \frac{1-\breve{\phi}_{nt} - \breve{\phi}_{at}}{1-\breve{\phi}_{at}}$ and $\breve \lambda_1 = \frac{1-\breve{\phi}_{nt} - \breve{\phi}_{at}}{1-\breve{\phi}_{nt}}$. Also, define $\tilde F_{10}=\tilde F_{nt}$ and $\tilde F_{01}=\tilde F_{at}$. 
\end{definition}

Note, since the true proportion $\bm \phi$ is in the interior of $\bm \varphi_{+}$, which is an open subset of $\bm \varphi$, there exists an $\varepsilon >0$ such that $B(\bm \phi, \varepsilon)=\{\bm \chi \in \bm \varphi: ||\bm \chi(t)-\bm \phi(t)||_2 < \varepsilon \> \text{for all } \> t \} \subset \bm \varphi_+$. Moreover, there exists $n \geq N(\varepsilon, \delta)$ such  that $\pr(\breve {\bm \phi} \notin B(\bm \phi, \varepsilon) ) < \delta$. Therefore, $\pr( \breve {\bm \phi} \text{ not in the interior of } \bm \varphi_+) \leq \pr(\breve {\bm \phi} \notin B(\bm \phi, \varepsilon) )  <\delta$. To this end, let $$\sB_1=\left\{\breve {\bm \phi} \text{ is in the interior of } \bm \varphi_+ \right\} \bigcap \left\{\breve {\bm F} \text{ is coordinate-wise in the interior of } \R^{[0, 1]}) \right\}.$$ From the discussion above it is clear that the $\pr(\sB_1^c)\rightarrow 0$. Therefore, for the remainder of this section all events will be on the set $\sB_1$.

To begin with this gives, 
$$ \M_n(\hat{\bm F}, \hat{\bm \phi}) \geq  \M_n(\tilde{\bm F}, \breve{\bm \phi}),$$
since $(\tilde{\bm F}, \hat{\bm \phi}) \in \bm\vartheta_+\times \bm \varphi_+$ and $(\hat{\bm F}, \hat{\bm \phi})$ maximizes $\M_n$ over $\bm\vartheta_+ \times \bm \varphi_+$, by definition. Therefore, to show \eqref{eq:MnTSI} it suffices to prove that 
\begin{align}\label{eq:MnTSII} 
\M_n(\breve{\bm F}, \breve{\bm \phi}) - \M_n(\tilde{\bm F}, \breve{\bm \phi}) =o_P(n^{-\frac{5}{4}}).
\end{align} 

\begin{obs}[B4] \label{obs:md2} For both estimate $\breve{\bm F}$ and $\tilde{\bm F}$, we have 
	$$\M_n(\breve{\bm F}, \breve{\bm \phi}) - \M_n(\tilde{\bm F}, \breve{\bm \phi})=O_P(1) \left[\frac{1}{|I_{\kappa}|}\sum_{z \in \{0,1\}}\sum_{j \in I_\kappa} (\breve{F}_{co}^{(z)}(Y_{(j)})-  \breve{F}_{co}^{(z)}(Y_{(j)}))^2\right],$$
	whenever  $|\tilde F_{zz}(Y_{(\ceil{n\kappa})})- \overline F_{zz}(Y_{(\ceil{n\kappa})})|=o_P(1)$ and $|\tilde F_{zz}(Y_{(\ceil{n(1-\kappa)})})- \overline F_{zz}(Y_{(\ceil{n(1-\kappa)})})|=o_P(1)$.  
\end{obs}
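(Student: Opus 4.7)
The key structural observation is that the projection $\tilde{\bm F}$ modifies only the two complier CDFs: by Definition~\ref{defn:tildeF} we have $\tilde F_{nt}=\overline F_{10}=\breve F_{nt}$ and $\tilde F_{at}=\overline F_{01}=\breve F_{at}$. Inspecting the decomposition \eqref{eq:MnFn}, the summands $T_{10}^{(n)}(\theta_{nt},\bm\chi)$ and $T_{01}^{(n)}(\theta_{at},\bm\chi)$ depend on $\bm\theta$ only through $\theta_{nt}$ and $\theta_{at}$ respectively, so they take identical values at $(\breve{\bm F},\breve{\bm\phi})$ and $(\tilde{\bm F},\breve{\bm\phi})$. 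Hence the whole difference $\M_n(\breve{\bm F},\breve{\bm\phi})-\M_n(\tilde{\bm F},\breve{\bm\phi})$ reduces to the sum of the contributions from $T_{00}^{(n)}$ and $T_{11}^{(n)}$.

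For the $T_{00}^{(n)}$ piece, I would use Lemma~\ref{lem:Mnopt} (and the algebra in its proof) to note that at $(\breve{\bm F},\breve{\bm\phi})$ the argument of $J$ collapses to $\overline F_{00}(Y_{(b)})$, while at $(\tilde{\bm F},\breve{\bm\phi})$ it equals $\tilde F_{00}(Y_{(b)})=\breve\lambda_0\tilde F_{co}^{(0)}(Y_{(b)})+(1-\breve\lambda_0)\tilde F_{nt}(Y_{(b)})$ by \eqref{eq:tildeFuv}. Since $\tilde F_{nt}=\breve F_{nt}$, subtracting gives the exact identity
\begin{equation*}
\overline F_{00}(Y_{(b)})-\tilde F_{00}(Y_{(b)})=\breve\lambda_0\bigl(\breve F_{co}^{(0)}(Y_{(b)})-\tilde F_{co}^{(0)}(Y_{(b)})\bigr),
\end{equation*}
and the $T_{00}^{(n)}$ difference becomes $\frac{n_{00}}{n^2}\sum_{b\in I_\kappa}\bigl[I(\overline F_{00}(Y_{(b)}))-J(\overline F_{00}(Y_{(b)}),\tilde F_{00}(Y_{(b)}))\bigr]$. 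An analogous representation holds for the $T_{11}^{(n)}$ piece with $\breve\lambda_1$ replacing $\breve\lambda_0$ and $\breve F_{co}^{(1)},\tilde F_{co}^{(1)}$ replacing $\breve F_{co}^{(0)},\tilde F_{co}^{(0)}$.

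The remaining work is to bound $I(a)-J(a,x)$ by a constant times $(a-x)^2$, uniformly over the values arising in the sum. A second-order Taylor expansion of $J(a,\cdot)$ around $a$ (the first derivative vanishes there) gives $I(a)-J(a,x)=\tfrac{1}{2}(a-x)^2\bigl[a/\gamma^2+(1-a)/(1-\gamma)^2\bigr]$ for some $\gamma$ between $a$ and $x$, so an $O(1)$ constant is obtained as long as both $a$ and $x$ stay in a compact subinterval of $(0,1)$. This is precisely where the hypothesis is used: Observation~\ref{obs:orderstatistics} gives $\overline F_{uu}(Y_{(\lceil n\kappa\rceil)})\xrightarrow{P}H_{uu}^{-1}(\kappa)\in(0,1)$ and $\overline F_{uu}(Y_{(\lceil n(1-\kappa)\rceil)})\xrightarrow{P}H_{uu}^{-1}(1-\kappa)\in(0,1)$, and the assumed $o_P(1)$ approximation transfers these bounds to $\tilde F_{uu}$ at the same endpoints; monotonicity of $\overline F_{uu}$ and $\tilde F_{uu}$ then confines both quantities to a compact subinterval of $(0,1)$ for every $b\in I_\kappa$, on an event of probability tending to one. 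Combined with $\breve\lambda_u=O_P(1)$ and $n_{uu}/n=O_P(1)$, summing the $T_{00}^{(n)}$ and $T_{11}^{(n)}$ bounds yields the claim.

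The main obstacle is exactly this uniform control of the Taylor remainder across all order statistics in $I_\kappa$; without the boundary hypothesis, $\tilde F_{uu}(Y_{(b)})$ could degenerate to $0$ or $1$ and inflate the quadratic prefactor, so the truncation to the central order statistics and the given control at the endpoints are essential.
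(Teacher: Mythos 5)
Your proposal is correct and follows essentially the same route as the paper's proof: you isolate the $T_{00}^{(n)}$ and $T_{11}^{(n)}$ contributions using $\tilde F_{nt}=\breve F_{nt}$ and $\tilde F_{at}=\breve F_{at}$, convert $\overline F_{uu}-\tilde F_{uu}$ into $\breve\lambda_u(\breve F_{co}^{(u)}-\tilde F_{co}^{(u)})$, and control the second-order Taylor remainder of $J(a,\cdot)$ uniformly over $b\in I_\kappa$ via Observation~\ref{obs:orderstatistics} together with the assumed boundary control on $\tilde F_{uu}$ at $Y_{(\lceil n\kappa\rceil)}$ and $Y_{(\lceil n(1-\kappa)\rceil)}$. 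This matches the paper's argument step for step, so no further comparison is needed.
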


\begin{proof} 
	\begin{align}\label{diff2}
	&\M_n(\breve{\bm F},  \breve {\bm \phi})  - \M_n(\tilde{\bm F}, \breve {\bm \phi}) \nonumber \\
	&=\frac{1}{|I_{\kappa}|}\sum_{z, d \in \{0, 1\}}\sum_{j \in I_\kappa} \frac{n_{zd}}{n} (J(\overline F_{zd}(Y_{(j)}), \breve  F_{zd}(Y_{(j)}))-J(\overline F_{zd}(Y_{(j)}), \tilde  F_{zd}(Y_{(j)}))) \nonumber \\
	&=\frac{1}{|I_{\kappa}|}\sum_{z \in \{0, 1\}} \sum_{j \in I_\kappa}\frac{n_{zz}}{n} (J(\overline F_{zz}(Y_{(j)}), \breve  F_{zz}(Y_{(j)}))-J(\overline F_{zz}(Y_{(j)}), \tilde  F_{zz}(Y_{(j)}))) \tag*{(since $\tilde F_{10}=\breve F_{10}$ and $\tilde F_{01}=\breve F_{01}$)}\nonumber \\
	&=\frac{1}{|I_{\kappa}|}\sum_{z \in \{0, 1\}} \sum_{j \in I_\kappa}\frac{n_{zz}}{n} \left( J(\overline F_{zz}(Y_{(j)}), \overline F_{zz}(Y_{(j)})) - J(\overline F_{zz}(Y_{(j)}), \tilde  F_{zz}(Y_{(j)}))\right) \nonumber \\
	&=\frac{1}{|I_{\kappa}|} \sum_{z \in \{0, 1\}} \frac{n_{zz}}{n} \sum_{j \in I_\kappa}T_{z}(Y_{(j)}),
	\end{align}
	where 
	\begin{align*}
	T_{z}(Y_{(j)})&=\overline F_{zz}(Y_{(j)}) \log \frac{ \overline F_{zz}(Y_{(j)}) }{\tilde  F_{zz}(Y_{(j)})} +(1-\overline F_{zz}(Y_{(j)}) )\log\frac{1- \overline  F_{zz}(Y_{(j)})}{1- \tilde F_{zz}(Y_{(j)})}.
	\end{align*}

	Now, a two-term Taylor expansion of the function $f_a(x)=-a \log\frac{x}{a} - (1-a) \log \frac{1-x}{1-a}$,  at $x=a$, gives 
	\begin{align}\label{eq:tb}
	T_{z}(Y_{(j)})&=\frac{(\tilde{F}_{zz}(Y_{(j)})-\overline F_{zz}(Y_{(j)}))^2}{2} \bigg\{ \frac{\overline F_{zz}(Y_{(j)})}{(\omega_{zz}(Y_{(j)}))^2} - \frac{1-\overline F_{zz}(Y_{(j)})}{ (1-\omega_{zz}(Y_{(j)}))^2} \bigg\},
	\end{align}
	and $\omega_{zz}(Y_{(j)}) \in [\overline F_{zz}(Y_{(j)})\wedge \tilde F_{zz}(Y_{(j)}), \tilde F_{zz}(Y_{(j)})\vee \overline F_{zz}(Y_{(j)})]$. 
	
	Note that $\omega_{zz}(Y_{(j)}) \geq \overline F_{zz}(Y_{(\ceil{n\kappa})})\wedge \tilde F_{zz}(Y_{(\ceil{n\kappa})})$ and $\overline F_{zz}(Y_{(j)}) \leq \overline F_{zz}(Y_{(\ceil{n(1-\kappa)})})$. Therefore, 
	$$\frac{\overline F_{zz}(Y_{(j)})}{(\omega_{zz}(Y_{(j)}))^2}  \leq \frac{\overline F_{zz}(Y_{(\ceil{n(1-\kappa)})})}{\overline F_{zz}(Y_{(n\kappa)})\wedge \tilde F_{zz}(Y_{(n\kappa)})} = O_P(1),$$ since $\overline F_{zz}(Y_{(\ceil{n\kappa})})\pto H_{zz}^{-1}(\kappa)$, $\overline F_{zz}(Y_{(\ceil{n(1-\kappa)})})\pto H_{zz}^{-1}(1-\kappa)$ using Observation \ref{obs:orderstatistics}, and $|\tilde F_{zz}(Y_{(\ceil{n\kappa})})- \overline F_{zz}(Y_{(\ceil{n\kappa})})|=o_P(1)$ by assumption. Similarly, 
	$$\frac{1-\overline F_{zz}(Y_{(j)})}{ (1-\omega_{zz}(Y_{(j)}))^2} =O_P(1).$$ 
	Therefore, 
	\begin{align*}
	\sum_{j \in I_\kappa} |T_{z}(Y_{(j)})| & \leq \sum_{j \in I_\kappa} \frac{(\tilde{F}_{zz}(Y_{(j)})-  \overline{F}_{zz}(Y_{(j)}))^2}{2} \left\{\frac{\overline F_{zz}(Y_{(j)})}{(\omega_{zz}(Y_{(j)}))^2} + \frac{1-\overline F_{zz}(Y_{(j)})}{ (1-\omega_{zz}(Y_{(j)}))^2}\right\} \nonumber \\
	& =O_P(1) \sum_{j \in I_\kappa} (\tilde{F}_{zz}(Y_{(j)})-  \overline{F}_{zz}(Y_{(j)}))^2 \nonumber \\
	&=O_P(1)\sum_{j \in I_\kappa} (\tilde{F}_{co}^{(z)}(Y_{(j)})-  \breve{F}_{co}^{(z)}(Y_{(j)}))^2,
	\end{align*}
	where the last step uses $\tilde F_{00}= \breve \lambda_0 \tilde F_{co}^{(0)}+(1-\breve \lambda_0) \tilde F_{nt}$, $\overline F_{00}= \breve \lambda_0 \breve F_{co}^{(0)}+(1-\breve \lambda_0) \breve F_{nt}$, and $\tilde F_{nt}=\breve F_{nt}$, and similarly for $\tilde F_{11}$ and $\overline F_{11}$. 
\end{proof}

Therefore, to show that $\M_n(\breve{\bm F}, \breve{\bm \phi}) - \M_n(\tilde{\bm F}, \breve{\bm \phi}) =o_P(1/n)$, it suffices to show that $$\sum_{j \in I_\kappa} (\tilde{F}_{co}^{(z)}(Y_{(j)})-  \breve{F}_{co}^{(z)}(Y_{(j)}))^2=o_P(1),$$
for $z \in \{0, 1\}$. This is the content of the following proposition, gives an error rate of $o_P(n^{-\frac{1}{4}})$.

\begin{proposition} \label{proposition:pavabl} For $z \in \{0, 1\}$, 
	\begin{align}\label{eq:pavabl}
	\sum_{j \in I_\kappa}\left(\tilde{F}_{co}^{(z)}(Y_{(j)}) - \breve{F}_{co}^{(z)}(Y_{(j)})\right)^2=o_P(n^{-\frac{1}{4}}).
	\end{align} 
	This implies, $\M_n(\breve{\bm F}, \breve{\bm \phi}) - \M_n(\hat{\bm F}, \hat{\bm \phi}) =o_P(n^{-\frac{5}{4}})$. 
\end{proposition}

\subsection{Proof of Proposition \ref{proposition:pavabl}} To begin with, note that \eqref{eq:pavabl} implies 
$$|\tilde F_{zz}(Y_{(\ceil{n\kappa})})- \overline F_{zz}(Y_{(\ceil{n\kappa})})|=o_P(1), \quad |\tilde F_{zz}(Y_{(\ceil{n(1-\kappa)})})- \overline F_{zz}(Y_{(\ceil{n(1-\kappa)})})|=o_P(1).$$ Therefore, using Observation \ref{obs:md2} gives $\M_n(\breve{\bm F}, \breve{\bm \phi}) - \M_n(\tilde{\bm F}, \breve{\bm \phi})=o_P(n^{-\frac{5}{4}})$, which implies $\M_n(\breve{\bm F}, \breve{\bm \phi}) - \M_n(\hat{\bm F}, \hat{\bm \phi}) =o_P(n^{-\frac{5}{4}})$, from \eqref{eq:MnTSII}.  Therefore, the rest of this section is devoted to the proof of \eqref{eq:pavabl}.

We will prove the result for $z=0$. The other case for $z=1$ can be done similarly. To begin with note that on the event $\sB_1$, $|\breve{F}^{(0)}_{co}(t)| \in [0, 1]$ for all $t \in \R$, and, therefore, maximum in \eqref{eq:incF} can be taken over $\mathbb I([0,1]^\R)$, the set of increasing functions from $\R \rightarrow [0, 1]$. Then the well-know result of \citet{robertson1988} gives 
\begin{equation}\label{min_max}
\tilde{F}_{co}^{(0)}(Y_{(j)}) = \min_{\ell \geq j} \max_{k \leq j} \frac{\sum_{i=k}^{\ell}\breve{F}_{co}^{(0)}(Y_{(i)})}{\ell-k+1}
\end{equation}

Now, define $f(k, \ell) = \frac{ \sum_{i=k}^\ell \breve{F}_{co}^{(0)}(Y_{(i)}) }{\ell-k+1}-\breve{F}_{co}^{(0)}(Y_{(j)})$. We will use the following lemma.

\begin{lemma}\label{lemma1}
	Let $j \in I_\kappa$. For any function $f: [n] \times [n] \rightarrow \R$, the following inequality holds 
	\begin{equation}
	\left( \min_{\ell \geq j} \max_{k \leq j} f(k, \ell) \right)^2 \leq \left(\max_{k \leq j}f_+(k, j) \right)^2 + \left(\max_{\ell \geq j} f_-(j, \ell) \right)^2, 
	\end{equation}
	where $f_+ =\max(f,0)$ and $f_- = (-f)_+$. 
\end{lemma}

\begin{proof} We have 
	\begin{align*}
	\min_{\ell \geq j} \max_{k \leq j} f (k, \ell) \leq  \max_{k \leq j} f (k, j) \leq \max_{k \leq j} f_+(k, j).
	\end{align*}
	Similarly, 
	\begin{align*}
	\min_{\ell \geq j} \max_{k \leq j} f (k, \ell) \geq  \min_{\ell \geq j} f (j, \ell)= - \max_{\ell \geq j} -f(j, \ell) \geq - \max_{\ell \geq j} f_-(j, \ell)
	\end{align*}
	Therefore, the lemma follows.
\end{proof}

To prove \eqref{eq:pavabl}, we have to show $\pr_{\sB_1}\left(\sum_{j \in I_\kappa}(\tilde{F}_{co}^{(0)}(Y_{(j)}) - \breve{F}_{co}^{(0)}(Y_{(j)}))^2 > \delta n^{-\frac{1}{4}}\right)=o(1)$, for any $\delta >0$, where $\pr_{\sB_1}(\cdot)=\pr(\cdot \cap \sB_1)$. To begin with, by the triangle inequality and Lemma \ref{lemma1}, we get 
\begin{align}\label{eq:dfineq1}
& \pr_{\sB_1}\left(\sum_{j \in I_\kappa}\left(\tilde{F}_{co}^{(0)}(Y_{(j)}) - \breve{F}_{co}^{(0)}(Y_{(j)})\right)^2 > \delta n^{-\frac{1}{4}} \right) \nonumber \\
& \leq \sum_{j \in  I_\kappa}\pr_{\sB_1}\left(\left(\tilde{F}_{co}^{(0)}(Y_{(j)}) - \breve{F}_{co}^{(0)}(Y_{(j)})\right)^2 >\delta n^{-\frac{5}{4}} \right) \nonumber \\
& = \sum_{j \in  I_\kappa} \pr_{\sB_1}\left(\left( \min_{\ell \geq j} \max_{k \leq j} f (k, \ell) \right)^2 > \delta n^{-\frac{5}{4}} \right) \nonumber \\
& \leq \sum_{j \in  I_\kappa} \pr_{\sB_1}\left(\left(\max_{k \leq j}f_+(k, j) \right)^2 > \frac{\delta n^{-\frac{5}{4}}}{2} \right)+\sum_{j \in  I_\kappa} \pr_{\sB_1}\left(\left(\max_{\ell \geq j}f_-(j, \ell) \right)^2 > \frac{\delta n^{-\frac{5}{4}}}{2} \right) \nonumber \\
& \leq \sum_{j \in  I_\kappa} \pr_{\sB_1}\left(\max_{k \leq j}f_+(k, j)  >\delta_0 n^{-\frac{5}{8}} \right)+ \sum_{j \in  I_\kappa} \pr_{\sB_1}\left( \max_{\ell \geq j}f_-(j, \ell)  >\delta_0 n^{-\frac{5}{8}}\right), 
\end{align}
where $\delta_0=\sqrt{\delta/2}$. 

\begin{lemma} Let $\varepsilon=\delta_0/n^{5/8}$. Then for any $j \in I_\kappa$, 
	\begin{align}\label{ineq1}
	\pr_{\sB_1}\left(\max_{\ell \in[j, \ceil{n(1-\kappa)}]} f_-(j, \ell)  \geq \varepsilon \right) \leq \sum_{\ell=j}^{\ceil{n(1-\kappa)}} \pr_{\sB_1}\left(\breve{F}_{co}^{(0)}(Y_{(j)})- \breve{F}_{co}^{(0)}(Y_{(\ell)}) \geq \varepsilon \right).
	\end{align}
	Similarly, 
	\begin{align}\label{ineq2}
	\pr_{\sB_1}\left(\max_{k \in[\ceil{n \kappa}, j]} f_+(k, j)  \geq \varepsilon \right) \leq \sum_{k=\ceil{n \kappa}}^{j} \pr_{\sB_1}\left(\breve{F}_{co}^{(0)}(Y_{(k)})- \breve{F}_{co}^{(0)}(Y_{(j)}) \geq \varepsilon \right).
	\end{align}
\end{lemma}

\begin{proof} Note that $$f_-(j, \ell) =\max(0, \breve{F}_{co}^{(0)}(Y_{(j)})-\frac{\breve{F}_{co}^{(0)}(Y_{(j)})+ \cdots + \breve{F}_{co}^{(0)}(Y_{(\ell)})}{\ell-j+1})=\max\left(0, \frac{\sum_{i=j}^{\ell}(\breve{F}_{co}^{(0)}(Y_{(j)})-\breve{F}_{co}^{(0)}(Y_{(i)})) }{\ell-j+1}\right).$$ Then 
	\begin{align*}
	&\pr_{\sB_1}\left(\max_{\ell \in[j, \ceil{n(1-\kappa)}]} f_{-}(j, \ell) \leq \varepsilon \right) \\
	&\pr_{\sB_1}\left(\max_{\ell \in[j, \ceil{n(1-\kappa)}]} \frac{\sum_{i=j}^{\ell}(\breve{F}_{co}^{(0)}(Y_{(j)})-\breve{F}_{co}^{(0)}(Y_{(i)})) }{\ell-j+1} \leq \varepsilon \right) \\
	&= \pr_{\sB_1}\left(\frac{1}{2}\sum_{i=j}^{j+1}(\breve{F}_{co}^{(0)}(Y_{(j)})-\breve{F}_{co}^{(0)}(Y_{(i)})) \leq \varepsilon, 
	\ldots, \frac{\sum_{i=j}^{n}\breve{F}_{co}^{(0)}(Y_{(j)})-\breve{F}_{co}^{(0)}(Y_{(i)})) }{n-j+1} \leq \varepsilon \right) \\
	& \geq \pr_{\sB_1}\left(\breve{F}_{co}^{(0)}(Y_{(j)}) - \breve{F}_{co}^{(0)}(Y_{(b+1)}) \leq 2\varepsilon, 
	\ldots, \breve{F}_{co}^{(0)}(Y_{(j)}) - \breve{F}_{co}^{(0)}(Y_{(\ceil{n(1-\kappa)})} \leq \varepsilon \right) \\
	& \geq \pr_{\sB_1}\left(\breve{F}_{co}^{(0)}(Y_{(j)}) - \breve{F}_{co}^{(0)}(Y_{(b+1)}) \leq \varepsilon, 
	\ldots, \breve{F}_{co}^{(0)}(Y_{(j)}) - \breve{F}_{co}^{(0)}(Y_{(\ceil{n(1-\kappa)})}) \leq \varepsilon \right) \\
	&= \pr_{\sB_1}\left(\max_{\ell \in[j, \ceil{n(1-\kappa)}]} (\breve{F}_{co}^{(0)}(Y_{(j)})-\breve{F}_{co}^{(0)}(Y_{(\ell)}) \leq \varepsilon \right).
	\end{align*}
	The above inequality and a union bound implies \eqref{ineq1}.

	Similarly, 
	\begin{align*}
	&\pr_{\sB_1}\left(\max_{k \in[\ceil{n \kappa}, j]} f_{+}(k, j) \leq \varepsilon \right) \\
	&\pr_{\sB_1}\left(\max_{k \in[\ceil{n \kappa}, j]} \frac{\sum_{i=k}^{j}(\breve{F}_{co}^{(0)}(Y_{(k)})-\breve{F}_{co}^{(0)}(Y_{(j)})) }{j-k+1} \leq \varepsilon \right) \\
	&= \pr_{\sB_1}\left(\frac{1}{2} \sum_{i=j-1}^{j}(\breve{F}_{co}^{(0)}(Y_{(i)})-\breve{F}_{co}^{(0)}(Y_{(j)})) \leq \varepsilon, 
	\ldots, \frac{\sum_{i=\ceil{n \kappa}}^{j}\breve{F}_{co}^{(0)}(Y_{(j)})-\breve{F}_{co}^{(0)}(Y_{(i)})) }{j} \leq \varepsilon \right) \\
	& \geq \pr_{\sB_1}\left(\breve{F}_{co}^{(0)}(Y_{(j-1)}) - \breve{F}_{co}^{(0)}(Y_{(j)}) \leq 2\varepsilon, 
	\ldots, \breve{F}_{co}^{(0)}(Y_{(\ceil{n \kappa})}) - \breve{F}_{co}^{(0)}(Y_{(j)} \leq \varepsilon \right) \\
	&= \pr_{\sB_1}\left(\max_{k \in[\ceil{n \kappa}, j]} (\breve{F}_{co}^{(0)}(Y_{(k)})-\breve{F}_{co}^{(0)}(Y_{(j)}) \leq \varepsilon \right).
	\end{align*}
	This implies \eqref{ineq2} by a union bound.
\end{proof}

In light of \eqref{eq:dfineq1} and above lemma, to prove Proposition \ref{proposition:pavabl}, we need bounds on the upper tail of the difference $\breve{F}_{co}^{(0)}(Y_{(j)})- \breve{F}_{co}^{(0)}(Y_{(\ell)})$, for $\ell \geq j$ and lower tail of $\breve{F}_{co}^{(0)}(Y_{(k)})- \breve{F}_{co}^{(0)}(Y_{(j)})$, for $k \leq j$. To this end, let $\breve \eta_{zd}=n_{zd}/n$, for $\{z, d\} \in \{0, 1\}$, and $\breve \eta_{-}=\min_{z, d}\breve\eta_{zd}$. 

\begin{lemma}\label{lm:diffprob} Let $\varepsilon=\delta_0/n^{5/8}$. Then for any $j \in I_\kappa$, such that 
	\begin{align}\label{c1}
	\sum_{\ell=j}^{\ceil{n(1-\kappa)}} \pr_{\sB_1}\left(\breve{F}_{co}^{(0)}(Y_{(j)})- \breve{F}_{co}^{(0)}(Y_{(\ell)}) \geq \varepsilon |(\bm Z, \bm D) \right) \leq O(1/n^3)+ 2 n \exp\left\{-\frac{ \breve \eta_{-}^{3/2} \breve \lambda_0^3 \delta_0^2  \sqrt n }{48 \log n}\right\},
	\end{align}
	where the constant in the $O(1/n^3)$-term is non-random. Similarly, 
	\begin{align}\label{c2}
	\sum_{k=\ceil{n \kappa}}^j \pr_{\sB_1}\left(\breve{F}_{co}^{(1)}(Y_{(k)})- \breve{F}_{co}^{(1)}(Y_{(j)}) \geq \varepsilon |(\bm Z, \bm D) \right) \leq O(1/n^3)+ 2 n \exp\left\{-\frac{ \breve \eta_{-}^{3/2} \breve \lambda_0^3 \delta_0^2  \sqrt n }{48 \log n}\right\}.
	\end{align}
\end{lemma}

The proof of the above lemma is given below. Using this, the proof of Proposition \ref{proposition:pavabl} can be easily completed as follows: Note that $$n^2 \exp\left\{-\frac{ \breve \eta_-^{3/2} \breve \lambda_0^3 \delta_0^2  \sqrt n }{48 \log n}\right\} \dto 0, \quad \E\left(n^2 \exp\left\{-\frac{ \breve \eta_-^{3/2} \breve \lambda_0^3 \delta_0^2  \sqrt n }{48 \log n}\right\} \right) \rightarrow 0,$$ by the dominated convergence theorem. Then 
first taking expectation over $\bm Z, \bm D$ gives 
\begin{align}
\sum_{\ell=j}^{\ceil{n(1-\kappa)}} \pr_{\sB_1}\left(\breve{F}_{co}^{(0)}(Y_{(j)})- \breve{F}_{co}^{(0)}(Y_{(\ell)}) \geq \varepsilon \right) \leq O(1/n^3)+ 2 n \E\exp\left\{-\frac{ \breve \eta_-^{3/2} \breve \lambda_0^3 \delta_0^2  \sqrt n }{48 \log n}\right\} .
\end{align}
Therefore, from \eqref{ineq2}, 
\begin{align}\label{eq:f1}
\pr_{\sB_1}\left(\max_{k \in[\ceil{n \kappa}, j]} f_+(k, j)  >\delta_0 n^{-\frac{5}{8}} \right) \leq O(1/n^2)+ 2 n^2 \E \exp\left\{-\frac{ \breve \eta_-^{3/2} \breve \lambda_0^3 \delta_0^2  \sqrt n }{48 \log n}\right\} \rightarrow 0.
\end{align}
Similarly, from \eqref{ineq1} and \eqref{c2}, it can be shown that 
\begin{align}\label{eq:f2}
\pr_{\sB_1}\left( \max_{k \in[j, \ceil{n (1-\kappa)}]} f_-(k, j)  >\delta_0 n^{-\frac{5}{8}} \right) \rightarrow 0.
\end{align}
Adding \eqref{eq:f1} and \eqref{eq:f2} and using \eqref{eq:dfineq1}, completes the proof of \eqref{eq:pavabl}, since $\pr(\sB_1^c)\rightarrow 0$. \\

\noindent \textbf{\textit{Proof of Lemma \ref{lm:diffprob}}} Throughout the proof, all events will be conditional of $(\bm Z, \bm D)$, and, for notational brevity, we will omit the conditioning event in all the expressions. Let $$L=H^{-1}(\kappa)-1 \quad \text{and} \quad R=H^{-1}(1-\kappa)+1,$$ and let $\sB_2$ be the event that $\{Y_{(\ceil{n\kappa})},  Y_{(\ceil{n(1-\kappa)})},  \in [L, R]\}$. Using $\E Y_{(\ceil{n\kappa})} \rightarrow H^{-1}(\kappa)$, $\E(Y_{(\ceil{n\kappa})}-\E Y_{(\ceil{n\kappa})})^6=O(1/n^3)$ (see \citet{sen1959}), and the Chebyshev's inequality it follows that 
\begin{align}\label{eq:probcomp2}
\pr(Y_{(\ceil{n\kappa})} \notin [L, R])  \leq \frac{\E(Y_{(\ceil{n\kappa})}-\E Y_{(\ceil{n\kappa})})^6}{(L-\E Y_{(\ceil{n\kappa})})^6}=O(1/n^3).
\end{align}
Similarly, $\pr(Y_{(\ceil{n(1-\kappa)})} \notin [L, R])=O(1/n^3)$, which gives $\pr(\sB_2^c)=O(1/n^3)$. 

Now, partition $[L, R]$ in a grid $L=t_0< t_1< \ldots < t_M=R$, of size $ \frac{4\log n}{C_1 n}$, that is, $t_a=L+a \left(\frac{4 \log n }{C_1 n} \right)$, for $0\leq a \leq M=\frac{C_1(R-L) n}{4 \log n}$. Define $N_a=\sum_{s=1}^n \bm 1\{Y_s \in (t_a, t_{a+1}]\}$, the number of observations in the interval $(t_a, t_{a+1}]$ (where $C_1$ and $C_2$ are the constants in Assumption 2 with $K=[L, R]$). Let $$\sB_3=\left\{\min_{0 \leq a \leq M} N_a \geq 1\right\} \cap \left\{\max_{0 \leq a \leq M} N_a \leq \frac{12 C_2}{C_1} \log n  \right\}.$$

\begin{lemma}\label{lm:probcomp3}
	$\pr(\sB_3^c)= O(1/n^3).$
\end{lemma}

\begin{proof} Note that $\pr(Y_j \notin (t_a, t_{a+1}])=1- (H(t_{a+1})- H(t_a)) \leq 1- \frac{4  \log n}{n}$, by Assumption 2 in the main text. Then 
	\begin{align}\label{eq:B1}
	\pr\left (\min_{0 \leq a \leq M} N_a=0 \right)=\sum_{a=0}^ M \pr(N_a=0) \lesssim_\kappa n\left(1-\frac{4 \log n}{n}\right)^n=O(1/n^3).
	\end{align}
	
	Next, note that $\E(N_a)=n (F(t_{a+1})-F(t_a)) \in [ 4 \log n,  \frac{4 C_2}{C_1}\log n]$, by Assumption 2. Therefore, by the union bound followed by a Chernoff bound,\footnote{Suppose $X_1, X_2, \ldots, X_n$ are independent random variables taking values in $\{0, 1\}$. Let $X=\sum_{i=1}^n X_i$ denote their sum and let $\mu = \E(X)$. Then $\pr(X \geq (1+\delta)\mu) \leq e^{-\frac{\delta^2 \mu}{3}}$, for $0 < \delta < 1$ and $\pr(X \geq (1+\delta)\mu) \leq e^{-\frac{\delta \mu}{2}}$, for $\delta \geq 1 $.} gives 
	\begin{align}\label{eq:B2}
	\pr\left(\max_{0 \leq a \leq M} N_a > \frac{12 C_2}{C_1} \log n\right)&=\sum_{a=0}^M \pr\left(N_a -\E(N_a)> \frac{8 C_2}{C_1} \log n \right) \nonumber \\
	& \lesssim n e^{-\frac{4 C_2 \log n}{C_1}}=n e^{-4 \log n }=O(1/n^3).
	\end{align}
	Combining \eqref{eq:B1} and \eqref{eq:B2} the proof of the lemma follows. 
\end{proof}

Finally, let $\sB_4=\{|\breve \eta_{zd}/\eta_{zd}-1|\leq 1: \text{ for all } z, d \in \{0, 1\} \}$, and set $\sB_0=\sB_2\cap \sB_3 \cap \sB_4$. From \eqref{eq:probcomp2} and  Lemma \ref{lm:probcomp3}, 
\begin{align}\label{eq:probcomplement}
\pr_{\sB_1}(\sB_0^c)\leq \pr(\sB_0^c) \leq  \pr(\sB_2^c)+ \pr(\sB_3^c) + \pr(\sB_3^4)=O(1/n^3).
\end{align} 
Therefore, it suffices to consider events on $\sB=\sB_0\cap\sB_1$. Now, fix $j \in I_\kappa$. For any $\ell \geq j$ denote by $I_{p(\ell)}=(t_{p(\ell)},~ t_{p(\ell)+1}]$ the interval which contains $Y_{(\ell)}$, and $\overline F_{zd}((a, b])=\overline F_{zd}(a)-F_{zd}(j)$, for $z, d \in \{0, 1\}$. Then, by triangle inequality, on the set $\sB$, 
\begin{align}\label{eq:tineq1}
|\overline F_{00}((t_{p(j)}, t_{p(\ell)+1}])-\overline F_{00}((Y_{(j)}, Y_{(\ell)}])| & \leq |\overline F_{00}(t_{p(j)})-\overline F_{00}(Y_{(j)})|+|\overline F_{00}(t_{p(\ell)+1})-\overline F_{00}(Y_{(\ell)})| \nonumber \\
&=O\left(\frac{\log n}{n_{00}}\right)=O\left(\frac{\log n}{n}\right).
\end{align}

Now, take $\varepsilon=\delta_0/\sqrt n$. Then recalling the definition of $\breve{F}_{co}^{(0)}(Y_{(j)}) = \frac{\overline{F}_{00}(Y_{(j)})-(1-\breve \lambda_0)\overline{F}_{10}(Y_{(j)})}{\breve \lambda_0}$, and using triangle inequality gives, 
\begin{align*}
\pr_{\sB}& \left(\breve{F}_{co}^{(0)}(Y_{(j)})- \breve{F}_{co}^{(0)}(Y_{(\ell)}) \geq \varepsilon \right) \\
& = \pr_{\sB}\left(\frac{\overline F_{00}((Y_{(j)}, Y_{(\ell)}])}{\breve \lambda_0} -\frac{(1-\breve \lambda_0) (\overline F_{01}((Y_{(j)}, Y_{(\ell)}])) }{\breve \lambda_0} \geq \varepsilon \right) \\
& \leq \pr_{\sB}\left(\frac{\overline F_{00}((t_{p(j)}, t_{p(\ell)+1}])}{\breve \lambda_0} -\frac{1-\breve \lambda_0(\overline F_{01}((t_{p(j)}, t_{p(\ell)+1}]))}{\breve \lambda_0}  \geq \frac{\varepsilon}{2}  \right) \tag*{(by \eqref{eq:tineq1})}\\
& \leq T_1+T_2,
\end{align*}
where 
$$T_1=\pr\left(|\overline F_{00}((t_{p(j)}, t_{p(\ell)+1}])-F_{00}((t_{p(j)}, t_{p(\ell)+1}])| \geq \frac{\breve \lambda_0(\frac{\varepsilon}{2} -F_{co}((t_{p(j)}, t_{p(\ell)+1}])}{2}  \right) $$
and 
$$T_2=\pr\left(|\overline F_{01}((t_{p(j)}, t_{p(\ell)+1}])-F_{01}((t_{p(j)}, t_{p(\ell)+1}])| \geq \frac{\breve \lambda_0(\frac{\varepsilon}{2} -F_{co}((t_{p(j)}, t_{p(\ell)+1}])}{2 (1-\breve \lambda_0)}  \right).$$

Now, we will bound $T_1$. To begin with note that $$-n_{00}\overline F_{00}((t_{p(j)}, t_{p(\ell)+1}]) \sim \dBin(n_{00}, - F_{00}((t_{p(j)}, t_{p(\ell)+1}])).$$ Moreover, by Assumption 2, $ - F_{co}((t_{p(j)}, t_{p(\ell)+1}]) \geq C_1(t_{p(\ell)+1}-t_{p(j)})\geq K (\ell-j)/n$, for some constant $K >0$. Then for $|\ell-j|> \frac{4}{K \breve \lambda_0 \sqrt{\breve \eta_{00}}} \sqrt n \log n$, where $\breve \eta_{00}=n_{00}/n$, we have  
\begin{align*}
&\pr\left(|\overline F_{00}((t_{p(j)}, t_{p(\ell)+1}])-F_{00}((t_{p(j)}, t_{p(\ell)+1}])| \geq \frac{\breve \lambda_0(\frac{\varepsilon}{2} -F_{co}((t_{p(j)}, t_{p(\ell)+1}])}{2}  \right) \\
& \leq \pr\left(|\overline F_{00}((t_{p(j)}, t_{p(\ell)+1}])-F_{00}((t_{p(j)}, t_{p(\ell)+1}])| \geq -\frac{\breve \lambda_0 F_{co}((t_{p(j)}, t_{p(\ell)+1}])}{2}  \right)  \\
& \leq \pr\left(|n_{00}\overline F_{00}((t_{p(j)}, t_{p(\ell)+1}])-n_{00}F_{00}((t_{p(j)}, t_{p(\ell)+1}])| \geq \frac{K \breve \lambda_0 n_{00}(\ell-j)}{2n}  \right)  \\
& \leq 2 e^{-\frac{\breve \eta_{00} K^2\breve \lambda_0^2 (\ell-j)^2}{2n}}=O(1/n^8),
\end{align*}
where the last step follows by the Hoeffding's inequality.

Now, suppose $|\ell-j| \leq \frac{4}{K \breve \lambda_0 \sqrt{\breve \eta_{00}}} \sqrt n \log n$. Let $t=-\frac{\breve \lambda_0 \varepsilon}{2 F_{00}((t_{p(j)}, t_{p(\ell)+1}])}$. Then 
\begin{align*}
& \pr\left(|\overline F_{00}((t_{p(j)}, t_{p(\ell)+1}])-F_{00}((t_{p(j)}, t_{p(\ell)+1}])| \geq \frac{\breve \lambda(\varepsilon - F_{co}((t_{p(j)}, t_{p(\ell)+1}])}{2}  \right) \\
& \leq \pr\left(|\overline F_{00}((t_{p(j)}, t_{p(\ell)+1}])-F_{00}((t_{p(j)}, t_{p(\ell)+1}])| \geq \frac{\breve \lambda_0 \varepsilon}{2}  \right) \\
& \leq \pr\left(|n_{00}\overline F_{00}((t_{p(j)}, t_{p(\ell)+1}])-n_{00}F_{00}((t_{p(j)}, t_{p(\ell)+1}])| \geq -t n_{00} F_{00}((t_{p(j)}, t_{p(\ell)+1}])\right) \\ 
& \leq 2\exp\left\{\frac{ t^2 \breve \eta_{00} n F_{00}((t_{p(j)}, t_{p(\ell)+1}]) }{3}\right\} \tag*{(by Chernoff bound)}\\
& \leq 2\exp\left\{\frac{\breve \eta_{00} \breve \lambda_0^2 \delta_0^2  }{12 F_{00}((t_{p(j)}, t_{p(\ell)+1}]) }\right\} \\
& \leq 2\exp\left\{-\frac{ \breve \eta_{00} \breve \lambda_0^2 \delta_0^2  n}{12 K (\ell-j) }\right\} \tag*{(since $ - F_{co}((t_{p(j)}, t_{p(\ell)+1}]) \geq K (\ell-j)/n$)} \\
& \leq 2\exp\left\{-\frac{ \breve \eta_{-}^{3/2} \breve \lambda_0^3 \delta_0^2  \sqrt n }{48 \log n}\right\} \tag*{(recall $ \breve \eta_{-}=\min_{z, d}\breve \eta_{zd}$).}
\end{align*}
This implies $T_1 \leq O(1/n^8)+2\exp\left\{-\frac{ \breve \eta_{-}^{3/2} \breve \lambda_0^3 \delta_0^2  \sqrt n }{48 \log n}\right\}$, and similarly, for $T_2$. These combined with $\pr_{\sB_1}(\sB_0^c) =O(1/n^3)$ completes the proof of Lemma \ref{lm:diffprob}. \hfill $\Box$.

\subsection{Completing the proof of Theorem 1}
\label{sec:pfBLplugin}

In this section we complete the proof of Theorem 1. To begin, we compute the difference $\M_n(\breve{\bm F}, \breve {\bm \phi}) - \M_n(\hat{\bm F}, \hat {\bm \phi})$, 
\begin{align}
& \M_n(\breve{\bm F}, \breve {\bm \phi}) - \M_n(\hat{\bm F}, \hat {\bm \phi}) \nonumber \\
&= \frac{1}{|I_{\kappa}|} \left[ \sum_{z, d \in \{0,1\}} \sum_{j \in I_\kappa} \frac{n_{zd}}{n} \left\{ \overline F_{zd}(Y_{(j)}) \log \frac{\overline F_{zd}(Y_{(j)})}{\hat F_{zd}(Y_{(j)})} + (1-\overline F_{zd}(Y_{(j)}) \log \frac{1-\hat F_{zd}(Y_{(j)})}{1-\hat F_{zd}(Y_{(j)})} \right\}\right] \nonumber \\
& +\frac{1}{|I_{\kappa}|} \sum_{j \in I_\kappa}  \left\{ \frac{n_{00}}{n} \log \frac{1-\breve \phi_{at}}{1-\hat \phi_{at}(Y_{(j)})}+\frac{n_{01}}{n} \log \frac{\breve \phi_{at}}{\hat \phi_{at}(Y_{(j)})} +  \frac{n_{10}}{n} \log \frac{\breve \phi_{nt}}{\hat \phi_{nt}(Y_{(j)})}+ \frac{n_{11}}{n}  \log \frac{1-\breve \phi_{nt}}{1-\hat \phi_{nt}(Y_{(j)})} \right \}, \nonumber \\
&=\frac{1}{|I_{\kappa}|} \left\{ \sum_{z, d \in \{0, 1\}} \frac{n_{zd}}{n} \sum_{j \in I_\kappa}T_{zd}(Y_{(j)}) \right\} +  \frac{1}{|I_{\kappa}|} \sum_{j \in I_\kappa}   \left\{ \frac{n_{0}}{n} R_{nt}(Y_{(j)}) + \frac{n_{1}}{n} R_{at}(Y_{(j)}) \right\},
\end{align}
where  
\begin{align*}
T_{zd}(Y_{(j)})&=\overline F_{zd}(Y_{(j)}) \log \frac{ \overline F_{zd}(Y_{(j)}) }{\hat F_{zd}(Y_{(j)})} +(1-\overline F_{zd}(Y_{(j)}) )\log\frac{1- \overline  F_{zd}(Y_{(j)})}{1- \hat F_{zd}(Y_{(j)})}, \\
R_{at}(Y_{(j)})& = (1-\breve \phi_{at}) \log \frac{1-\breve \phi_{at}}{1-\hat \phi_{at}(Y_{(j)})}+\breve \phi_{at} \log \frac{\breve \phi_{at}}{\hat \phi_{at}(Y_{(j)})}, \\
R_{nt}(Y_{(j)})&= (1-\breve \phi_{nt}) \log \frac{1-\breve \phi_{nt}}{1-\hat \phi_{nt}(Y_{(j)})}+\breve \phi_{nt} \log \frac{\breve \phi_{nt}}{\hat \phi_{nt}(Y_{(j)})}.
\end{align*}

Now,  using $a \log\frac{a}{x} + (1-a) \log \frac{1-a}{1-x} \geq  \frac{1}{2}(x-a)^2$ (Observation \ref{obs:cdiff}) gives $T_{zd}(Y_{(j)}) \gtrsim (\hat F_{zd}(Y_{(j)}) - \overline F_{zd}(Y_{(j)}))^2$, $R_{at}(Y_{(j)})\gtrsim (\hat \phi_{at}(Y_{(j)})-\breve \phi_{at})^2$ and $R_{nt}(Y_{(j)})\gtrsim (\hat \phi_{nt}(Y_{(j)})-\breve \phi_{nt})^2$.  Therefore, 
\begin{align*}
\M_n(\breve{\bm F}, \breve {\bm \phi}) - \M_n(\hat{\bm F}, \hat {\bm \phi})  & \gtrsim \frac{1}{|I_{\kappa}|}\sum_{z, d \in\{0, 1\}} \frac{n_{zd}}{n} \sum_{j \in I_\kappa} (\hat F_{zd}(Y_{(j)})-  \overline{F}_{zd}(Y_{(j)}))^2 \\
&+ K_n \frac{1}{|I_{\kappa}|}\sum_{j \in I_\kappa} \left( \hat {\bm \phi}(Y_{(j)})-\breve{\bm \phi} (Y_{(j)})\right)^2,
\end{align*}
for some constant $K_n \pto K >0$. Therefore, using $\M_n(\breve{\bm F}, \breve {\bm \phi}) - \M_n(\hat{\bm F}, \hat {\bm \phi}) \leq \M_n(\breve{\bm F}, \breve {\bm \phi}) - \M_n(\tilde{\bm F}, \breve {\bm \phi})$,  since $(\tilde{\bm F}, \breve {\bm \phi}) \in \bm \vartheta_+ \times \bm \varphi_+$  on the set $\sB_1$  gives
\begin{align}\label{eq:t1}
&\frac{1}{|I_{\kappa}|}\sum_{z, d \in\{0, 1\}} \frac{n_{zd}}{n} \sum_{j \in I_\kappa} (\hat F_{zd}(Y_{(j)})-  \overline{F}_{zd}(Y_{(j)}))^2 + K_n \frac{1}{|I_{\kappa}|}\sum_{j \in I_\kappa} \left( \hat {\bm \phi}(Y_{(j)})-\breve{\bm \phi} (Y_{(j)})\right)^2\\
 & \lesssim  \M_n(\breve{\bm F}, \breve {\bm \phi}) - \M_n(\hat{\bm F}, \hat {\bm \phi}) \nonumber \\
& \lesssim \M_n(\breve{\bm F}, \breve {\bm \phi}) - \M_n(\tilde{\bm F}, \breve {\bm \phi}) \nonumber \\ 
&=o_P(n^{-\frac{5}{4}}), 
\end{align}
by Proposition \ref{proposition:pavabl}. 

Therefore, \eqref{eq:t1} implies $\frac{1}{|I_{\kappa}|}\sum_{j \in I_\kappa} \left\| \sqrt n \left(\hat {\bm \phi}(Y_{(j)})-\breve{\bm \phi}(Y_{(j)}) \right) \right\|_2^2=o_P(n^{-\frac{1}{4}})$. Furthermore, by Jensen's inequality, this implies 
\begin{align*}
\sqrt n \left( \frac{1}{|I_{\kappa}|}\sum_{j \in I_\kappa}\hat {\phi}_{nt}(Y_{(j)})- \frac{1}{|I_{\kappa}|}\sum_{j \in I_\kappa} \breve{\phi}_{nt}(Y_{(j)}) \right) &=o_P(1) \\
\sqrt n \left( \frac{1}{|I_{\kappa}|}\sum_{j \in I_\kappa}\hat {\phi}_{at}(Y_{(j)})- \frac{1}{|I_{\kappa}|}\sum_{j \in I_\kappa} \breve{\phi}_{at}(Y_{(j)}) \right) &=o_P(1).
\end{align*}
 Moreover, 
\begin{align}\label{eq:diffFatnt}
&\frac{1}{|I_{\kappa}|}\sum_{j \in I_\kappa} \left\{ \sqrt n(\hat F_{nt}(Y_{(j)})-  \breve {F}_{nt}(Y_{(j)}))\right\}^2=o_P(n^{-\frac{1}{4}}), \nonumber \\
&\frac{1}{|I_{\kappa}|}\sum_{j \in I_\kappa} \left\{\sqrt n (\hat F_{at}(Y_{(j)})-  \breve {F}_{at}(Y_{(j)})) \right\}^2=o_P(n^{-\frac{1}{4}}),
\end{align}
since  $\hat F_{10}=\hat F_{nt}$, $\overline F_{10}=\breve F_{nt}$, $\hat F_{01}=\hat F_{at}$, $\overline F_{01}=\breve F_{at}$, and $n_{zd}/n \pto \eta_{zd}$. Next, define $\hat{\lambda}_{0}(t) = \frac{1-\hat{\chi}_{nt}(t) - \hat{\chi}_{at}(t)}{1-\hat{\chi}_{at}(t)}$ and $\hat{\lambda}_{1}(t) = \frac{1-\hat{\chi}_{nt}(t) - \hat{\chi}_{at}(t)}{1-\hat{\chi}_{nt}(t)}$. Observe that on $\sB_1$, $|\breve F_{00}(t)| \leq 1$, and 
\begin{align}\label{eq:fdiffco}
& (\hat F_{co}^{(0)}(t)-  \breve {F}_{co}^{(0)}(t))^2 \nonumber \\
& =\left(\frac{\hat F_{00}(t)-(1-\hat \lambda_0 (t)) \hat F_{nt}(t)}{\hat \lambda_0(t)}-\frac{\breve F_{00}(t)-(1-\breve \lambda_0) \breve F_{nt}(t)}{\breve \lambda_0}\right) ^2\nonumber \\
&\lesssim \left(\frac{\hat F_{00}(t)}{\hat \lambda_0(t)}-\frac{\breve F_{00}(t)}{\breve \lambda_0} \right)^2 +\left(\frac{(1-\hat \lambda_0(t)) \hat F_{nt}(t)}{\hat \lambda_0(t)}-\frac{(1-\breve \lambda_0) \breve F_{nt}(t)}{\breve \lambda_0}\right)^2 \nonumber \\
&= O_P(1) (\hat F_{00}(t)-\breve F_{00}(t))^2 + O_P(1) (\hat F_{nt}(t)- \breve F_{nt}(t))^2 +\left(\frac{1}{\hat \lambda_0(t)}-\frac{1}{\breve \lambda_0}\right)^2.
\end{align}
We take the sum of \eqref{eq:fdiffco} over $t=Y_{(j)}$, $j \in I_\kappa$ and use the fact  $\left(\frac{1}{\hat \lambda_0(t)}-\frac{1}{\breve \lambda_0}\right)^2=o_P(n^{-\frac{5}{4}}) $ for all $t$. From \eqref{eq:t1}, \eqref{eq:diffFatnt} and $\frac{1}{|I_{\kappa}|}\sum_{j \in I_\kappa} \left\| \sqrt n \left(\hat {\bm \phi}(Y_{(j)})-\breve{\bm \phi}(Y_{(j)}) \right) \right\|_2^2=o_P(n^{-\frac{1}{4}})$, we have
$$
\frac{1}{|I_{\kappa}|}\sum_{j \in I_\kappa} \left\{ \sqrt n (\hat F_{co}^{(0)}(Y_{(j)})-  \breve {F}_{co}^{(0)}(Y_{(j)}) \right\}^2=o_P(n^{-\frac{1}{4}}).
$$
Similarly, we can show that $\frac{1}{|I_{\kappa}|}\sum_{j \in I_\kappa} \left\{ \sqrt n (\hat F_{co}^{(1)}(Y_{(j)})-  \breve {F}_{co}^{(1)}(Y_{(j)}) \right\}^2=o_P(n^{-\frac{1}{4}})$. The following lemma summarizes these findings together with \eqref{eq:diffFatnt}:

\begin{lemma}\label{lm:BLpluginsqrtn} The maximum binomial likelihood estimates $(\hat {\bm  F}, \hat {\bm \phi})$ and the plug-in estimates $(\breve {\bm F},  \breve {\bm \phi})$ satisfy 
	\begin{align*}
	\frac{1}{|I_{\kappa}|}\sum_{j \in I_\kappa} \left\| \sqrt n \left\{\hat {\bm \phi}(Y_{(j)})-\breve{\bm \phi}(Y_{(j)}) \right\}\right\|^2_2 &=o_P(n^{-\frac{1}{4}}) \\
	\frac{1}{|I_{\kappa}|}\sum_{j \in I_\kappa} \left\| \sqrt n \left\{ \bm{\hat{F}}(Y_{(j)})-\breve {\bm F}(Y_{(j)}) \right\}\right\|^2_2 &= o_P(n^{-\frac{1}{4}}).
	\end{align*}
\end{lemma}

This shows the maximum binomial likelihood estimates and the plug-in estimates are close in average squared error with respect to the empirical distribution $\overline{H}(t)=\sum_{z, d \in \{0, 1\}} \frac{n_{zd}}{n} \overline F_{zd}(t)$. To complete the proof of Theorem 1, we need to show that the average with respect to the empirical distribution can be replaced by the average (integral) with respect to the population distribution function $H(t)=\sum_{z, d \in \{0, 1\}} \eta_{zd} \overline F_{zd}(t)$.

\begin{lemma}\label{lm:BLpluginintegral} The maximum binomial likelihood estimates $\hat {\bm  F}$ and the plug-in estimates $\breve {\bm F}$ satisfy 
	$$ \int_{J_\kappa} ||\sqrt n\{\bm{\hat{F}}(t)-\breve {\bm F}(t)\}||^2_2 \mathrm d H= o_P(1),$$ 
	where the $o_P(1)$ terms goes to zero as $n \rightarrow \infty$.
\end{lemma}

\begin{proof}For $j \in I_\kappa$ and $Y_{(j)} \leq t < Y_{(b+1)}$, $\breve {\bm F}(t)=\breve {\bm F}(Y_{(j)})$. Moreover, $\bm{\hat{F}}(t) \leq \bm{\hat{F}}(Y_{(b+1)})$, where the inequality holds coordinate-wise. Then,  for $Y_{(j)} \leq t < Y_{(b+1)}$, 
	\begin{align*}
	||\bm{\hat{F}}(t)-\breve {\bm F}(t)||_2 & \leq ||\bm{\hat{F}}(Y_{(j)})-\breve {\bm F}(Y_{(j)})||_2 + ||\bm{\breve{F}}(Y_{(j)})-\breve {\bm F}(Y_{(b+1)})||_2 \\
	& \leq ||\bm{\hat{F}}(Y_{(j)})-\breve {\bm F}(Y_{(j)})||_2 + O(1/n).
	\end{align*}
	Therefore, 
	\begin{align}\label{eq:yb}
	\sum_{j \in I_\kappa} \int_{Y_{(j)}}^{Y_{(b+1)}} & ||\sqrt n\{\bm{\hat{F}}(t)-\breve {\bm F}(t)\}|| ^2_2  \mathrm d H \nonumber \\
	& \lesssim_\kappa \Delta n \sum_{j \in I_\kappa} ||\bm{\hat{F}}(Y_{(j)})-\breve {\bm F}(Y_{(j)})||^2_2+ \Delta,
	\end{align}
	where $\Delta=\sup_{j \in [n]}(H(Y_{(b+1)})-H(Y_{(j)}))\stackrel{D}=\sup_{j \in [n]}(U_{(b+1)}-U_{(j)})=O_P(\log n/n)$, by \citet{holst1980}. Then, by \eqref{eq:yb} 
	\begin{align*}
	\int_{Y_{(\ceil{n \kappa})}}^{Y_{(\ceil{n (1-\kappa)})}}  ||\sqrt n\{\bm{\hat{F}}& (t)-\breve {\bm F}(t)\}||^2_2 \mathrm d H \\
	& =\sum_{j \in I_\kappa} \int_{Y_{(j)}}^{Y_{(b+1)}} ||\sqrt n\{\bm{\hat{F}}(t)-\breve {\bm F}(t)\}||^2_2 \mathrm d H  \\
	& \leq O_P(\log n)   \sum_{j \in I_\kappa} ||\{\bm{\hat{F}}(Y_{(j)})-\breve {\bm F}(Y_{(j)})\}||^2_2+o_P(1) \\
	&=o_P(1),
	\end{align*} 
	where the last step follows from Lemma \ref{lm:BLpluginsqrtn}. 
	
	To complete the proof we need to take care of the boundary effects. As before, by triangle inequality
	\begin{align}\label{eq:kappaI}
	\int_{H^{-1}(\kappa)}^{Y_{\ceil{n\kappa}}}  & ||\sqrt n\{\bm{\hat{F}} (t)-\breve {\bm F}(t)\}||^2_2 \mathrm d H  \nonumber \\
	& \lesssim ||\sqrt n\{\bm{\hat{F}} (Y_{\ceil{n\kappa}})-\breve {\bm F}(Y_{\ceil{n\kappa}})\}||^2_2 (Y_{\ceil{n\kappa}}-\kappa)+o_P(1) \nonumber \\
	&=o_P(1),
	\end{align}
	where the last step uses $\sqrt n\{\bm{\hat{F}} (Y_{\ceil{n\kappa}})-\breve {\bm F}(Y_{\ceil{n\kappa}})\}=o_P(1)$ (which follows by a simple modification of the proof of Theorem 1) and $\sqrt n (Y_{\ceil{n\kappa}}-\kappa)=O_P(1)$. Similarly, it can be shown that 
	\begin{align}\label{eq:kappaII}
	\int_{Y_{\ceil{n(1-\kappa)}}}^{H^{-1}(1-\kappa)} ||\sqrt n\{\bm{\hat{F}} (t)-\breve {\bm F}(t)\}||^2_2 \mathrm d H=o_P(1).
	\end{align} 
	
	The proof now follows by combining \eqref{eq:kappaI} and \eqref{eq:kappaII} with \eqref{eq:yb}.
\end{proof}

\section{Limiting distribution of the maximum likelihood estimates}
\label{sec:pfBLdist}

As the limiting distribution of the empirical distributions $\overline F_{zd}$ are well-known, Theorem 1 can be used to derive the limiting distribution of the maximum binomial likelihood estimates $\hat{\bm F}$.

\begin{corollary}\label{corollary:BLnulldist} Fix $0 <\kappa <1/2$. Then  for any continuous function $h: \R \rightarrow \R$, 
	\begin{align}\label{eq:BLestlimit}
	\int_{J_\kappa}  h(t) \cdot \sqrt n\{\hat{\bm F}(t)-\E(\breve{\bm F}(t)|\bm Z, \bm D)\} \mathrm d H \dto  \int_{J_\kappa} h(t) \cdot \bm G(t) \mathrm d H ,
	\end{align}
	where
	\begin{align*}
	\bm G(t)= \begin{pmatrix}
	\frac{1}{\phi_{co}} \left\{\sqrt{\frac{\phi_{co}+\phi_{nt}}{\phi_0}}B_{00}(F_{00}(t)) - \sqrt{\frac{\phi_{nt}}{\phi_1}}B_{10}(F_{10}(t)) \right\} \\
	\frac{B_{01}(F_{01}(t))}{\sqrt{\phi_0\phi_{at}}}\\
	\frac{1}{\phi_{co}} \left\{\sqrt{\frac{\phi_{co}+\phi_{at}}{\phi_1}}B_{11}(F_{11}(t)) - \sqrt{\frac{\phi_{at}}{\phi_0}}B_{01}(F_{01}(t)) \right\} \\
	\frac{B_{10}(F_{10}(t))}{\sqrt{\phi_0\phi_{nt}}} ,
	\end{pmatrix},
	\end{align*}
	and $B_{00}(\cdot), B_{01}(\cdot), B_{10}(\cdot)$, and $B_{11}(\cdot)$ are independent standard Brownian bridges, and the integrals in \eqref{eq:BLestlimit} are defined coordinate-wise. 
	\label{COR:BLDIST}
\end{corollary}


\noindent \textbf{ \textit{Proof of Corollary~\ref{COR:BLDIST}.}} The joint distribution of the process $\sqrt n (\overline{F}_{zd}(t) - F_{zd}(t))_{z, d \in \{0, 1\}}$ can be easily derived from empirical process theory. To this end, let $D[0, 1]$ be the space of all right-continuous functions on $[0,1]$ with left limits equipped with the supremum norm metric. A sequence of  random functions $\{X_n(\cdot)\}_{n \geq 1}$ in $D[0, 1]$ converges to $X(\cdot) \in D[0, 1]$, denoted by $X_n(t) \stackrel{w}\Rightarrow X(t)$, if $\E(f(X_n)) \rightarrow \E f(X)$, for all bounded continuous function $f: D[0, 1] \rightarrow \R$. Now, considering $\sqrt n (\overline{F}_{zd}(t) - F_{zd}(t))_{z, d \in \{0, 1\}}$ as a random element of $D[0, 1]^4$ equipped with the product topology, we have the following result:

\begin{lemma}\label{lm:Fhatdist} Let $B_{00}, B_{01}, B_{10}$, and $B_{11}$ be independent Brownian bridges. Then 
	\begin{align} 
	\sqrt n \begin{pmatrix}
	\overline{F}_{00}(t) - F_{00}(t) \\
	\overline{F}_{01}(t) - F_{01}(t) \\
	\overline{F}_{11}(t) - F_{11}(t) \\
	\overline{F}_{10}(t) - F_{10}(t) 
	\end{pmatrix}    \stackrel{w}\Rightarrow   \begin{pmatrix}
	\frac{B_{00}(F_{00}(t))}{\sqrt{\eta_{00}}} \\
	\frac{B_{01}(F_{01}(t))}{\sqrt{\eta_{01}}} \\
	\frac{B_{11}(F_{11}(t))}{\sqrt{\eta_{11}}} \\
	\frac{B_{10}(F_{10}(t))}{\sqrt{\eta_{10}}}
	\end{pmatrix}    
	\end{align} 
\end{lemma}

\begin{proof} Note that $\E(\breve{\bm F}(t)|\bm Z, \bm D)\}$ is the mean of the plug-in estimate conditional on the sigma-algebra generated by $(\bm Z, \bm D)$. Conditioning on this sigma-algebra, $\{n_{zd}\}_{z, d \in \{0, 1\}}$ are fixed, and $\E(\overline F_{zd}(t)|\bm Z, \bm D)=\mathbb{P}(Y_1 \leq t|Z_1=z, D_1=d)=F_{zd}(t)$. Moreover, if $s<t$, $\Cov(\overline F_{zd}(s), \overline F_{zd}(t)|\bm Z, \bm D)=\frac{1}{n_{zd}} F_{zd}(s) (1-F_{zd}(t))$, and for $zd \ne z'd'$, $\Cov(\overline F_{zd}(s), \overline F_{z'd'}(t)|\bm Z, \bm D)=0$ since, 
	\begin{align}
	&\E\overline F_{zd}(t) \overline F_{z'd'}(t) \nonumber \\
	&=\frac{1}{n_{zd} n_{z'd'}}\sum_{\substack{i, i' \>\text{with}\> \\ Z_i=z, D_i=d, Z_{i'}=z', D_{i'}=d'}} \pr(Y_i \leq s, Y_{i'} \leq t |Z_i=z, D_i=d, Z_{i'}=z', D_{i'}=d') \nonumber \\
	&=F_{zd}(s) F_{z'd'}(t),
	\end{align}
	whenever $zd \ne z'd'$.

	Now, it is well-known that for each $z$ and $d$, 
	$$
	\sqrt {n_{zd}} (\overline F_{zd}(t)- F_{zd}(t))) \stackrel{w}\Rightarrow (B_{zd}(F_{zd}(t))_{\{z, d\}\in \{0, 1\}}
	$$ and therefore, 
	$$(\sqrt {n_{zd}} (\overline F_{zd}(t)- F_{zd}(t)))_{\{z, d\}\in \{0, 1\}} \Rightarrow  (B_{zd}(F_{zd}(t))_{\{z, d\}\in \{0, 1\}}.
	$$ Then, $$(\sqrt {n} (\overline F_{zd}(t)- F_{zd}(t)))_{\{z, d\}\in \{0, 1\}} \stackrel{w}\Rightarrow  \left(\frac{B_{zd}(F_{zd}(t))}{\sqrt{\eta_{zd}}}\right)_{\{z, d\}\in \{0, 1\}}$$ and the result follows.
\end{proof}

For $\bm \chi(t) =(\chi_{nt}(t), \chi_{at}(t)) \in \R^2$, define 
$$C(\bm \chi(t))=\begin{pmatrix}
\frac{1-\chi_{at}(t)}{ 1-\chi_{nt}(t)-\chi_{at}(t)} & 0 & 0& -\frac{{\chi}_{nt}(t)}{ 1-\chi_{nt}(t)-\chi_{at}(t)}  \\
0 & 1 & 0 & 0 \\
0 & -\frac{{\chi}_{at}(t)}{ 1-\chi_{nt}(t)-\chi_{at}(t)} & \frac{1-\chi_{nt}(t)}{  1-\chi_{nt}(t)-\chi_{at}(t)}& 0  \\
0 & 0 & 0 & 1
\end{pmatrix}.
$$
Note that $\E(\breve{\bm F}(t)|\bm Z, \bm D)=C(\breve {\bm \phi}) (F_{00}(t), F_{01}(t), F_{11}(t), F_{10}(t))' $ and $$\bm G(t)=C(\bm \phi) \left(\frac{B_{00}(F_{00}(t))}{\sqrt{\eta_{00}}}, \frac{B_{01}(F_{01}(t))}{\sqrt{\eta_{01}}}, \frac{B_{11}(F_{11}(t))}{\sqrt{\eta_{11}}}, \frac{B_{10}(F_{10}(t))}{\sqrt{\eta_{10}}}\right)',$$ where $\bm G(\cdot)$ is as defined in the statement of Corollary \ref{corollary:BLnulldist}. Now, using the above lemma and the Donsker's invariance principle, and noting that $C(\breve {\bm \phi})\pto C(\bm \phi)$, it follows that 
\begin{align}\label{eq:plugindist}
\int_{J_\kappa} & h(t) \cdot \sqrt n (\breve{\bm F}(t)-\E(\breve{\bm F}(t)|\bm Z, \bm D)) \mathrm d  H \nonumber \\
& = \int_{J_\kappa} h(t) \cdot C(\breve {\bm \phi}) \cdot \sqrt{n} \begin{pmatrix}
\overline{F}_{00}(t) - F_{00}(t) \\
\overline{F}_{01}(t) - F_{01}(t) \\
\overline{F}_{11}(t) - F_{11}(t) \\
\overline{F}_{10}(t) - F_{10}(t) 
\end{pmatrix}  \mathrm d  H  \nonumber \\
&\dto  \int_{J_\kappa} h(t) \cdot \bm G(t)  \mathrm d  H,
\end{align}
for any continuous function $h: \R\rightarrow \R$. This implies 
\begin{align}
& \int_{J_\kappa}  h(t) \cdot \sqrt n\{\hat{\bm F}(t)-\E(\breve{\bm F}(t)|\bm Z, \bm D)\}  \mathrm d H \nonumber \\
&= \int_{J_\kappa}   h(t) \cdot \sqrt n\{\breve{\bm F}(t)-\E(\breve{\bm F}(t)|\bm Z, \bm D)\}  \mathrm d H +  \int_{J_\kappa}  h(t) \cdot \sqrt n\{\hat{\bm F}(t)-\breve{\bm F}(t) \}  \mathrm d H   \nonumber \\
& \dto  \int_{J_\kappa}  \bm G(t)  \mathrm d  H,
\end{align}
using \eqref{eq:plugindist} for the first term, and second term is $o_P(1)$ by applying the Cauchy-Schwarz inequality followed by Theorem 1.

\section{The maximum binomial likelihood estimates under the null}
\label{sec:under_null}

In this section we analyze the maximum binomial likelihood estimate of the distribution functions of the compliance classes under the null. To this end, define
$$(\hat{\bm \psi}, \hat{\bm \xi})=\arg\max_{(\bm \theta, \bm \chi) \in \bm \vartheta_{+, 0}\times \bm \varphi_+ } \M_n(\bm \theta, \bm \chi), $$
where $\bm \psi = (\psi_{co}, \psi_{nt}, \psi_{at})$ and $\bm \xi = (\xi_{nt}, \xi_{at})$. 

Next, define the {\it population objective  function},\footnote{Note that we are slightly abusing terminology here, because the population objective function depends on the sample $\{Y_1, Y_2, \ldots, Y_n\}$. Ideally, one should define $\M(\cdot, \cdot)$ as an integral with respect to the population distribution function $H$. However, for technical reasons, it is more convenient for us to define $\M(\cdot, \cdot)$ with respect to the empirical measure instead.}
under the null hypothesis, as follows: 
\begin{align}\label{eq:Mnull}
\M(\bm\theta, \bm \chi)=T_{00}(\bm\theta, \bm \chi)  +  T_{10}(\bm\theta, \bm \chi)  + T_{10}(\bm\theta, \bm \chi) + T_{11}(\bm\theta, \bm \chi),
\end{align}
where
\footnotesize{
\begin{align*}
T_{00}(\bm\theta, \bm \chi)&=\frac{1}{|I_\kappa|}\sum_{j \in I_\kappa} \frac{n_{00}}{n} \left\{ \log\left(1- \chi_{at}(Y_{(j)})\right)+ J(F_{00}(Y_{(j)}),  \frac{ (1-\chi_{nt}(Y_{(j)})-\chi_{at}(Y_{(j)}))\theta_{co}(Y_{(j)}) + \chi_{nt}(Y_{(j)}) \theta_{nt}(Y_{(j)}))}{1-\chi_{at}(Y_{(j)})} \right\}, \\
T_{10}(\bm\theta, \bm \chi)&= \frac{1}{|I_\kappa|}\sum_{j \in I_\kappa} \frac{n_{10}}{n} \left\{\log \chi_{nt}(Y_{(j)}) + J(F_{10}(Y_{(j)}), \theta_{nt}(Y_{(j)})) \right\}, \\
T_{01}(\bm\theta, \bm \chi)&= \frac{1}{|I_\kappa|}\sum_{j \in I_\kappa} \frac{n_{01}}{n} \left\{\log \chi_{at}(Y_{(j)}) + J(F_{01}(Y_{(j)}), \theta_{at}(Y_{(j)})) \right\}, \\
T_{11}(\bm\theta, \bm \chi)&= \frac{1}{|I_\kappa|} \sum_{j \in I_\kappa} \frac{n_{11}}{n} \left\{ \log \left(1-\chi_{nt}(Y_{(j)})\right) + J(F_{11}(Y_{(j)}), \frac{ (1-\chi_{nt}(Y_{(j)})-\chi_{at}(Y_{(j)}))\theta_{co}(Y_{(j)}) + \chi_{at}(Y_{(j)}) \theta_{at}(Y_{(j)}))}{1-\chi_{nt}(Y_{(j)})} \right\}.  
\end{align*}
}
\normalsize
Since there is no closed form solution for $(\hat{\bm \psi}, \hat{\bm \xi})$,  we instead find the asymptotically equivalent estimators $(\breve{\bm \tau}, \breve{\bm \rho})$. These equivalent estimators will be used for examining the large-sample performance of the binomial likelihood ratio test $T_n$. First, we need to define several new quantities to describe $(\breve{\bm \tau}, \breve{\bm \rho})$. Table~\ref{tab:quantity_summary} summarizes these quantities. 

\begin{table}[h]
	\centering
	\caption{Definition of new quantities}\label{tab:quantity_summary}
	\begin{tabular}{ll}
		\toprule
		Quantity & Definition \\
		$E_{1}(t)$ & $(F_{co}(t) - F_{nt}(t)) - \frac{\phi_{at}}{1-\phi_{nt}}(F_{co}(t) - F_{at}(t))$\\ 
		$E_2(t)$ & $\frac{\phi_{nt}}{1-\phi_{at}}(F_{co}(t) - F_{nt}(t)) - (F_{co}(t) - F_{nt}(t))$\\
		$Q_{zd}(t)$ & $\frac{1}{F_{zd}(t)(1-F_{zd}(t))}$ for $z, d \in \{0, 1\}$ \\ 
		$Q(t)$ & $\frac{\phi_1 (1-\phi_{at})}{Q_{00}(t)} + \frac{\phi_1 \phi_{at}}{Q_{01}(t)} + \frac{(1-\phi_1)\phi_{nt}}{Q_{10}(t)} + \frac{(1-\phi_1)(1-\phi_{nt})}{Q_{11}(t)}$ \\
		$C_{00}(t)$ & $\left( \frac{\phi_1 (1-\phi_{at})}{Q_{00}(t)} \right)/Q(t)$ \\
		$C_{01}(t)$ & $\left( \frac{\phi_1 \phi_{at}}{Q_{01}(t)} \right)/Q(t)$ \\
		$C_{10}(t)$ & $\left( \frac{(1-\phi_1)\phi_{nt}}{Q_{10}(t)} \right)/Q(t)$ \\
		$C_{11}(t)$ & $\left( \frac{(1-\phi_1)(1-\phi_{nt})}{Q_{11}(t)} \right)/Q(t)$ \\
		$r_{at}(t)$ & $\frac{1}{Q(t)} \frac{\phi_1(1-\phi_1)^2}{\phi_{at}(1-\phi_{at})}E_{1}(t)$ \\
		$r_{nt}(t)$ & $\frac{1}{Q(t)} \frac{\phi_1^2(1-\phi_1)}{\phi_{nt}(1-\phi_{nt})}E_{2}(t)$\\		
		$det(t)$ & $\frac{\phi_1(1-\phi_1)}{\phi_{nt}(1-\phi_{nt})\phi_{at}(1-\phi_{at})} + \frac{1}{Q(t)} \frac{\phi_1(1-\phi_1)^2}{\phi_{at}(1-\phi_{at})}E_{1}^2 (t) + \frac{1}{Q(t)} \frac{\phi_1^2(1-\phi_1)}{\phi_{nt}(1-\phi_{nt})}E_{2}^2 (t)$,\\
		& or equivalently $\frac{\phi_1(1-\phi_1)}{\phi_{nt}(1-\phi_{nt})\phi_{at}(1-\phi_{at})} + r_{at}(t)E_{1}(t) + r_{nt}(t)E_{2}(t)$
	\end{tabular}
\end{table}

Finally, we can define the equivalent estimator $\breve{\bm \tau} = (\breve{\tau}_{co}, \breve{\tau}_{nt}, \breve{\tau}_{at})$ based on both true values and observable quantities. Observable quantitiles are the empirical distribution functions $\overline{F}_{zd}(t)$, $\overline{F}_0(t)$, $\overline{F}_1(t)$ and plug-in estimators $(\breve{\phi}_{nt}, \breve{\phi}_{at})$. The estimator $\breve{\bm \tau}$ is defined as follows: 
\begin{align*}
	&\sqrt n (\breve{\tau}_{co}(t) - F_{co}(t))\\
	&=\sqrt n \frac{1}{1-\phi_{nt}-\phi_{at}} \Bigg[ (1-\breve{\phi}_{at})(\overline{F}_{00}(t) - F_{00}(t)) - \breve{\phi}_{nt}(\overline{F}_{10}(t) - F_{10}(t)) \\
	&- (C_{00}(t)+C_{10}(t))\left\{ (\overline{F}_0(t) - \overline{F}_1(t)) - \frac{r_{at}(t)E_1(t) + r_{nt}(t)E_2(t)}{det(t)}\right\}\\
	&+(F_{co}(t)-F_{nt}(t))\left\{ (\breve{\phi}_{nt}-\phi_{nt}) + \frac{\phi_{nt}}{1-\phi_{at}}(\breve{\phi}_{at}-\phi_{at}) - \frac{r_{at}(t) + \frac{\phi_{nt}}{1-\phi_{at}} r_{nt}(t)}{det(t)} \right\} \Bigg] \\
	\text{or} & =\sqrt n \frac{1}{1-\phi_{nt}-\phi_{at}} \Bigg[ (1-\breve{\phi}_{nt})(\overline{F}_{11}(t) - F_{11}(t)) - \breve{\phi}_{at}(\overline{F}_{01}(t) - F_{01}(t)) \\
	&+ (C_{01}(t)+C_{11}(t))\left\{ (\overline{F}_0(t) - \overline{F}_1(t)) - \frac{r_{at}(t)E_1(t) + r_{nt}(t)E_2(t)}{det(t)}\right\}\\
	&+(F_{co}(t)-F_{at}(t))\left\{ \frac{\phi_{at}}{1-\phi_{nt}}(\breve{\phi}_{nt}-\phi_{nt}) + (\breve{\phi}_{at}-\phi_{at}) - \frac{\frac{\phi_{at}}{1-\phi_{nt}} r_{at}(t) +  r_{nt}(t)}{det(t)} \right\} \Bigg]
\end{align*}
and 
\begin{align}\label{eq:null_equi_estimator_theta}
	\sqrt n (\breve{\tau}_{nt}(t) - F_{nt}(t))  &= \sqrt n \frac{1}{\phi_{nt}} \bigg[ \breve{\phi}_{nt} (\overline{F}_{10}(t) - F_{10}(t)) \nonumber \\
	&+ C_{10}(t) \left\{(\overline{F}_0(t) - \overline{F}_1(t)) - \frac{r_{at}(t)E_1(t) + r_{nt}(t)E_2(t)}{det(t)} \right\}\bigg] \nonumber\\
	\sqrt n (\breve{\tau}_{at}(t) - F_{at}(t))  &= \sqrt n \frac{1}{\phi_{at}} \bigg[ \breve{\phi}_{at} (\overline{F}_{01}(t) - F_{01}(t)) \nonumber\\
	&- C_{01}(t) \left\{(\overline{F}_0(t) - \overline{F}_1(t)) - \frac{r_{at}(t)E_1(t) + r_{nt}(t)E_2(t)}{det(t)} \right\}\bigg].
\end{align}
Furthermore, the estimator $\breve{\bm \rho}$ is defined as
\begin{align}\label{eq:null_equi_estimator_chi}
	\sqrt n (\breve{\rho}_{nt}(t) - \phi_{nt}) &= \sqrt n \left( (\breve{\phi}_{nt} - \phi_{nt}) - (\overline{F}_0(t) - \overline{F}_1(t)) \frac{r_{at}(t)}{det(t)} \right) \nonumber\\
	\sqrt n (\breve{\rho}_{at}(t) - \phi_{at}) &= \sqrt n \left( (\breve{\phi}_{at} - \phi_{at}) - (\overline{F}_0(t) - \overline{F}_1(t)) \frac{r_{nt}(t)}{det(t)} \right) 
\end{align}

Then, we can prove the following proposition that shows that both $(\hat{\bm \psi}, \hat{\bm \xi})$ and $(\breve{\bm \tau}, \breve{\bm \rho})$ are asymptotically equivalent: 
\begin{proposition}Fix $0< \kappa < 1$. Under the null hypothesis $H_0$, the maximum binomial likelihood estimators $\hat{\bm \psi}$ and $\hat{\bm \xi}$ satisfy
	\begin{align*}
	\frac{1}{|I_\kappa|}\sum_{j \in  I_\kappa} \left\|
	\sqrt n \left\{ \hat{\bm \psi}(Y_{(j)}) - \breve{\bm \tau}(Y_{(j)}) \right\}
	\right\|_2^2 &=o_P(1),\\
	\frac{1}{|I_\kappa|}\sum_{j \in  I_\kappa} \left\|
	\sqrt n \left\{ \hat{\bm \xi}(Y_{(j)}) - \breve{\bm \rho}(Y_{(j)}) \right\}
	\right\|_2^2 &=o_P(1).
	\end{align*} 
	\label{proposition:nullBL}
\end{proposition}

\subsection{Proof of Proposition \ref{proposition:nullBL}} 

To begin with define $$(\breve{\bm \psi}, \breve{\bm \xi})=\argmax_{\bm \theta \in \bm \vartheta_{0}, \bm \chi \in \bm \varphi} \M_n(\bm \theta, \bm \chi),$$ where $\bm \vartheta_0 = \left\{(\theta_{co}, \theta_{nt}, \theta_{at}):  \theta_{co}, \theta_{nt}, \theta_{at} \in \R^\R \right\}$, is the unrestricted null parameter space. In this case there is no-closed form expression of $(\breve{\bm \psi}, \breve{\bm \xi})$. However, by the asymptotic expansion of the sample null objective function we can find an asymptotically equivalent formula for $\breve{\bm \psi}$ and $\breve{\bm \xi}$.

\begin{lemma}\label{lm:nullplugin} Let $\breve{\bm \psi}$ and $\breve{\bm \tau}=(\breve \tau_{co}, \breve \tau_{nt}, \breve \tau_{at})$ be as defined above. Then 
	\begin{align}\label{eq:nullBL}
	\frac{1}{|I_{\kappa}|}\sum_{j \in I_\kappa} \left\|\sqrt n \left\{ \breve{\bm \psi}(Y_{(j)})- \breve{\bm \tau}(Y_{(j)}) \right\} \right\|^2_2 &= o_P(1), \nonumber \\
	\frac{1}{|I_{\kappa}|}\sum_{j \in I_\kappa} \left\|\sqrt n \left\{ \breve{\bm \xi}(Y_{(j)})- \breve{\bm \rho}(Y_{(j)}) \right\} \right\|^2_2 &= o_P(1)
	\end{align}
	whenever $|| \breve{\bm \psi}(Y_{(j)})- {\bm F_0}(Y_{(j)})||_2=o_P(1)$, for every $j \in I_\kappa$, where $\bm F_0=(F_{co}, F_{nt}, F_{at})$, with $F_{co}^{(0)}=F_{co}^{(1)}=F_{co}$, is the vector of true distribution functions under the null.
\end{lemma}

\begin{proof}
	We have
	\begin{align}
	&( \M_n - \M)(\breve{\bm \psi}, \breve{\bm \xi}) - (\M_n - \M)(\bm F_0, \bm\phi) \nonumber \\
	&= \frac{1}{|I_{\kappa}|} \sum_{z, d\in \{0, 1\}} \sum_{j \in I_\kappa} \frac{n_{zd}}{n} \frac{ (\overline F_{zd}(Y_{(j)}) - F_{zd}(Y_{(j)}) )}{F_{zd}(Y_{(j)})(1-F_{zd}(Y_{(j)}))}(\hat{\psi}_{zd}(Y_{(j)})- F_{zd}(Y_{(j)})) + O_P(n^{-\frac{3}{2}}) \nonumber\\
	&= \frac{1}{|I_{\kappa}|} \sum_{z, d\in \{0, 1\}} \sum_{j \in I_\kappa} Q_{zd}(Y_{(j)})(\hat{\psi}_{zd}(Y_{(j)})- F_{zd}(Y_{(j)})) + O_P(n^{-\frac{3}{2}}), \nonumber
	\end{align}
	where $Q_{zd}(Y_{(j)})=\frac{n_{zd}}{n} \cdot \frac{1}{F_{zd}(Y_{(j)})(1-F_{zd}(Y_{(j)}))}$. 
	
	Now, under the null hypothesis, $F_{co}^{(0)}=F_{co}^{(1)}=F_{co}$, we can re-group the terms in the above sum in terms of $\hat{\psi}_{co}(Y_{(j)}) - F_{co}(Y_{(j)}), \hat{\psi}_{nt}(Y_{(j)}) - F_{nt}(Y_{(j)})$ and $\hat{\psi}_{at}(Y_{(j)}) - F_{at}(Y_{(j)})$, to get 
	\begin{align}\label{eq:Mdiffnull1}
	&\sqrt n \big\{ (\M_n - \M)(\breve{\bm \psi}, \breve{\bm \xi}) - (\M_n - \M)(\bm{F}_0, \bm\phi) \big\} \nonumber \\
	&=\frac{\sqrt n}{|I_{\kappa}|} \left\{ \sum_{j \in I_\kappa} \begin{pmatrix}
	\breve{\xi}_{nt}(Y_{(j)}) - \phi_{nt} \\
	\breve{\xi}_{at}(Y_{(j)}) - \phi_{at} \\
	\breve{\psi}_{co}(Y_{(j)}) - F_{co}(Y_{(j)}) \\
	\breve{\psi}_{nt}(Y_{(j)}) - F_{nt}(Y_{(j)}) \\
	\breve{\psi}_{at}(Y_{(j)}) - F_{at}(Y_{(j)}) \\
	\end{pmatrix}^{\T}  \bm Z_{n}(Y_{(j)}) \right\}+ O_P(n^{-\frac{3}{2}}), 
	\end{align}
	where $\bm{Z}_n(t) = \begin{pmatrix}
	\bm{Z}_{n1}(t) \\ \bm{Z}_{n2}(t)
	\end{pmatrix}$ is a $5 \times 1$ matrix with 
	\begin{align}\label{eq:Znt}
		\bm Z_{n1}(t)  &= \frac{\sqrt n}{|I_{\kappa}|} \begin{pmatrix}
	\frac{\phi_1(\breve{\phi}_{nt} - \phi_{nt})}{\phi_{nt}(1-\phi_{nt})} - \frac{1}{1-\phi_{at}}\frac{n_{00}}{n} (F_{co}(t) - F_{nt}(t))(\overline{F}_{00}(t) - F_{00}(t)) Q_{00}(t) \\
	~~~~~~~~~~~~ - \frac{\phi_{at}}{(1-\phi_{nt})^2} \frac{n_{11}}{n} (F_{co}(t)-F_{at}(t))(\overline{F}_{11}(t)-F_{11}(t))Q_{11}(t) \\
	\frac{(1-\phi_1)(\breve{\phi}_{at} - \phi_{at})}{\phi_{at}(1-\phi_{at})} - \frac{\phi_{nt}}{(1-\phi_{at})^2}\frac{n_{00}}{n} (F_{co}(t) - F_{nt}(t))(\overline{F}_{00}(t) - F_{00}(t)) Q_{00}(t) \\
	~~~~~~~~~~~~ - \frac{1}{1-\phi_{nt}} \frac{n_{11}}{n} (F_{co}(t)-F_{at}(t))(\overline{F}_{11}(t)-F_{11}(t))Q_{11}(t)
	\end{pmatrix}, \\
	\bm Z_{n2}(t)  &= \frac{\sqrt n}{|I_{\kappa}|} \begin{pmatrix}
	\frac{1-\phi_{nt}-\phi_{at}}{1-\phi_{at}} \frac{n_{00}}{n} (\overline{F}_{00}(t) - F_{00}(t))Q_{00}(t) + \frac{1-\phi_{nt}-\phi_{at}}{1-\phi_{nt}} \frac{n_{11}}{n} (\overline{F}_{11}(t) - F_{11}(t))Q_{11}(t)\\
	\frac{\phi_{nt}}{1-\phi_{at}} \frac{n_{00}}{n} (\overline{F}_{00}(t) - F_{00}(t))Q_{00}(t) + \frac{n_{10}}{n} (\overline{F}_{10}(t) - F_{10}(t))Q_{10}(t) \\
	\frac{\phi_{at}}{1-\phi_{nt}} \frac{n_{11}}{n} (\overline{F}_{11}(t) - F_{11}(t))Q_{11}(t) + \frac{n_{01}}{n} (\overline{F}_{01}(t) - F_{01}(t))Q_{01}(t)
	\end{pmatrix}.
	\end{align}

	Next, denote by $\bm V_n$ the Hessian matrix of $\M(\bm \theta, \bm \chi)$ at the point $\left(\bm F_0(Y_{(j)}), \bm \phi(Y_{(j)}) \right)$ for $j \in I_{\kappa}$. Note that the Hessian matrix is block diagonal 
	\begin{align}\label{eq:V}
	\bm V_n=\diag(\bm V_n(Y_{(j)}))_{j \in I_\kappa},
	\end{align}
	where $\bm V_n(\cdot): \R \rightarrow \R^{5\times 5}$ is given by the following:
	$$
	\bm{V}_n(t) = \begin{bmatrix}
		\frac{\partial^2 \mathbb{M}(\bm\theta, \bm\chi)}{(\partial \chi_{nt}(t))^2} & \frac{\partial^2 \mathbb{M}(\bm\theta, \bm\chi)}{\partial \chi_{nt} \partial \chi_{at}(t)} & \frac{\partial^2 \mathbb{M}(\bm\theta, \bm\chi)}{\partial \chi_{nt}(t) \partial \theta_{co}(t)} & \frac{\partial^2 \mathbb{M}(\bm\theta, \bm\chi)}{\partial \chi_{nt}(t) \partial \theta_{nt}(t)} & \frac{\partial^2 \mathbb{M}(\bm\theta, \bm\chi)}{\partial \chi_{nt}(t) \partial \theta_{at}(t)} \\
		
		\frac{\partial^2 \mathbb{M}(\bm\theta, \bm\chi)}{\partial \chi_{at}(t) \partial \chi_{nt}(t)}   & \frac{\partial^2 \mathbb{M}(\bm\theta, \bm\chi)}{(\partial \chi_{at}(t))^2} & \frac{\partial^2 \mathbb{M}(\bm\theta, \bm\chi)}{\partial \chi_{at}(t) \partial \theta_{co}(t)} & \frac{\partial^2 \mathbb{M}(\bm\theta, \bm\chi)}{\partial \chi_{at}(t) \partial \theta_{nt}(t)} & \frac{\partial^2 \mathbb{M}(\bm\theta, \bm\chi)}{\partial \chi_{at}(t) \partial \theta_{at}(t)} \\
		
		\frac{\partial^2 \mathbb{M}(\bm\theta, \bm\chi)}{\partial \theta_{co}(t) \partial \chi_{nt}(t)}   & \frac{\partial^2 \mathbb{M}(\bm\theta, \bm\chi)}{\partial \theta_{co}(t) \partial \chi_{nt}(t)} & \frac{\partial^2 \mathbb{M}(\bm\theta, \bm\chi)}{(\partial \theta_{co}(t))^2 } & \frac{\partial^2 \mathbb{M}(\bm\theta, \bm\chi)}{\partial \theta_{co}(t) \partial \theta_{nt}(t)} & \frac{\partial^2 \mathbb{M}(\bm\theta, \bm\chi)}{\partial \theta_{co}(t) \partial \theta_{at}(t)} \\
		
		\frac{\partial^2 \mathbb{M}(\bm\theta, \bm\chi)}{\partial \theta_{nt}(t) \partial \chi_{nt}(t)}   & \frac{\partial^2 \mathbb{M}(\bm\theta, \bm\chi)}{\partial \theta_{nt}(t) \partial \chi_{nt}(t)} & \frac{\partial^2 \mathbb{M}(\bm\theta, \bm\chi)}{\partial \theta_{nt}(t) \partial \theta_{co}(t)} & \frac{\partial^2 \mathbb{M}(\bm\theta, \bm\chi)}{(\partial \theta_{nt}(t))^2 } & \frac{\partial^2 \mathbb{M}(\bm\theta, \bm\chi)}{\partial \theta_{nt}(t) \partial \theta_{at}(t)} \\
		
		\frac{\partial^2 \mathbb{M}(\bm\theta, \bm\chi)}{\partial \theta_{at}(t) \partial \chi_{nt}(t)}   & \frac{\partial^2 \mathbb{M}(\bm\theta, \bm\chi)}{\partial \theta_{at}(t) \partial \chi_{nt}(t)} & \frac{\partial^2 \mathbb{M}(\bm\theta, \bm\chi)}{\partial \theta_{at}(t) \partial \theta_{co}(t)} & \frac{\partial^2 \mathbb{M}(\bm\theta, \bm\chi)}{\partial \theta_{at}(t) \partial \theta_{nt}(t)} & \frac{\partial^2 \mathbb{M}(\bm\theta, \bm\chi)}{(\partial \theta_{at}(t))^2 }  \\
	\end{bmatrix}_{(\bm \theta, \bm \chi) = (\bm{F}_0, \bm \phi)}
	$$
	
	For each $\bm{V}_n(t)$, we simply make partitions such as $-\bm{V}_n(t) = \begin{pmatrix}
	A(t) & B(t) \\ B(t)^{\T} & C(t)
	\end{pmatrix}$
	where $A(t)$ is the upper-left $2 \times 2$ matrix, $B(t)$ is the upper right $2 \times 3$ matrix, and $C(t)$ is the lower right $3\times 3$ matrix. Both $A(t)$ and $C(t)$ are symmetric. The three matrices have the forms, 
	\begin{align*}
	A(t) &= \frac{\sqrt n}{|I_{\kappa}|}\begin{pmatrix}
	\frac{\phi_1}{\phi_{nt}(1-\phi_{nt})} + \frac{1-\phi_1}{1-\phi_{at}} \frac{(F_{co}(t) - F_{nt}(t))^2}{F_{00}(t)(1-F_{00}(t))}    &   \frac{(1-\phi_1)\phi_{nt}}{(1-\phi_{at})^2} \frac{(F_{co}(t) - F_{nt}(t))^2}{F_{00}(t)(1-F_{00}(t))} \\
	~~~~~+ \frac{\phi_1 \phi_{at}^2}{(1-\phi_{nt})^3} \frac{(F_{co}(t) - F_{at}(t))^2}{F_{11}(t)(1-F_{11}(t))} & ~~~~~+ \frac{\phi_1 \phi_{at}}{(1-\phi_{nt})^2} \frac{(F_{co}(t) - F_{at}(t))^2}{F_{11}(t)(1-F_{11}(t))} \\
	&\frac{1-\phi_1}{\phi_{at}(1-\phi_{at})} + \frac{(1-\phi_1)\phi_{nt}^2}{(1-\phi_{at})^3} \frac{(F_{co}(t) - F_{nt}(t))^2}{F_{00}(t)(1-F_{00}(t))} \\
	& ~~~~~+ \frac{\phi_1 }{1-\phi_{nt}} \frac{(F_{co}(t) - F_{at}(t))^2}{F_{11}(t)(1-F_{11}(t))} 
	\end{pmatrix}, \\
	B(t) &= \frac{\sqrt n}{|I_{\kappa}|} \begin{bmatrix}
	\frac{(1-\phi_1)(1-\phi_{nt}-\phi_{at})}{1-\phi_{at}}\cdot \frac{-(F_{co}(t) - F_{nt}(t))}{F_{00}(t)(1-F_{00}(t))}  & \frac{(1-\phi_1)\phi_{nt}}{1-\phi_{at}}   & \frac{\phi_1 \phi_{at}^2}{(1-\phi_{nt})^2} \\
	~~~~~+ \frac{\phi_1(1-\phi_{nt}-\phi_{at})\phi_{nt}}{(1-\phi_{nt})^2} \cdot \frac{-(F_{co}(t) - F_{at}(t))}{F_{11}(t)(1-F_{11}(t))} & ~~~~~ \times \frac{-(F_{co}(t) - F_{nt}(t))}{F_{00}(t)(1-F_{00}(t))}& ~~~~~ \times  \frac{-(F_{co}(t) - F_{at}(t))}{F_{11}(t)(1-F_{11}(t))} \\
	\frac{(1-\phi_1)(1-\phi_{nt}-\phi_{at})\phi_{nt}}{(1-\phi_{at})^2}\cdot \frac{-(F_{co}(t) - F_{nt}(t))}{F_{00}(t)(1-F_{00}(t))} & \frac{(1-\phi_1)\phi_{nt}^2}{(1-\phi_{at})^2}  & \frac{\phi_1 \phi_{at}}{(1-\phi_{nt})}  \\\
	~~~~~ + \frac{\phi_1(1-\phi_{nt}-\phi_{at})}{(1-\phi_{nt})} \cdot \frac{-(F_{co}(t) - F_{at}(t))}{F_{11}(t)(1-F_{11}(t))} & ~~~~~\times \frac{-(F_{co}(t) - F_{nt}(t))}{F_{00}(t)(1-F_{00}(t))} &  ~~~~~\times \frac{-(F_{co}(t) - F_{at}(t))}{F_{11}(t)(1-F_{11}(t))}
	\end{bmatrix}, \\
	C(t) &= \frac{\sqrt n}{|I_{\kappa}|} \begin{bmatrix}
	\frac{(1-\phi_1)(1-\phi_{nt}-\phi_{at})^2}{1-\phi_{at}}Q_{00}(t)& \frac{(1-\phi_1)(1-\phi_{nt}-\phi_{at})\phi_{nt}}{1-\phi_{at}}Q_{00}(t) & \frac{\phi_1(1-\phi_{nt}-\phi_{at})\phi_{at}}{1-\phi_{nt}}Q_{11}(t) \\
	~~~~~+ \frac{\phi_1(1-\phi_{nt}-\phi_{at})^2}{(1-\phi_{nt})} Q_{11}(t)  & & \\
	& \frac{(1-\phi_1)\phi_{nt}^2}{1-\phi_{at}}Q_{00}(t)  & 0 \\
	& ~~~~~+ \phi_1 \phi_{nt} Q_{10}(t) & \\
	& &  \frac{\phi_1\phi_{at}^2}{1-\phi_{nt}}Q_{11}(t) \\
	& & ~~~~~ + (1-\phi_1)\phi_{at} Q_{01}(t)
	\end{bmatrix}
	\end{align*}

	Now, by a second order Taylor expansion of $\M(\breve{\bm \psi}, \breve{\bm\xi}) - \M(\bm F_0,  \bm\phi)$ around the true values $(\bm{F}_0, \bm\phi)$ gives, 
	\begin{align*}
	&\M(\breve{\bm \psi}, \breve{\bm\xi})- \M(\bm F_0,\bm\phi)\\
	&=\frac{1}{2}\cdot \sum_{j \in I_\kappa}\begin{pmatrix}
	\breve{\xi}_{nt}(Y_{(j)}) - \phi_{nt} \\
	\breve{\xi}_{at}(Y_{(j)}) - \phi_{at} \\
	\breve{\psi}_{co}(Y_{(j)}) - F_{co}(Y_{(j)}) \\
	\breve{\psi}_{nt}(Y_{(j)}) - F_{nt}(Y_{(j)}) \\
	\breve{\psi}_{at}(Y_{(j)}) - F_{at}(Y_{(j)}) \\
	\end{pmatrix}^{\T} \bm V(Y_{(j)}) \begin{pmatrix}
	\breve{\xi}_{nt}(Y_{(j)}) - \phi_{nt} \\
	\breve{\xi}_{at}(Y_{(j)}) - \phi_{at} \\
	\breve{\psi}_{co}(Y_{(j)}) - F_{co}(Y_{(j)}) \\
	\breve{\psi}_{nt}(Y_{(j)}) - F_{nt}(Y_{(j)}) \\
	\breve{\psi}_{at}(Y_{(j)}) - F_{at}(Y_{(j)}) \\
	\end{pmatrix} + O_P(n^{-\frac{3}{2}}),
	\end{align*} 
	since the gradient of $\M(\bm \theta, \bm \chi)$ at the point $(\bm{F}_0, \bm\phi)$ is zero (by arguments similar to the proof of Lemma \ref{lemma:Mnopt}). Then from \eqref{eq:Mdiffnull1} 
	\begin{align}\label{Mdiffnull2}
	&\M_n(\breve{\bm \psi}, \breve{\bm \xi}) - \M_n (\bm F_0, {\bm \phi}) =\frac{\sqrt n }{|I_{\kappa}|} \left\{ \sum_{j \in I_\kappa}\begin{pmatrix}
	\breve{\xi}_{nt}(Y_{(j)}) - \phi_{nt} \\
	\breve{\xi}_{at}(Y_{(j)}) - \phi_{at} \\
	\breve{\psi}_{co}(Y_{(j)}) - F_{co}(Y_{(j)}) \\
	\breve{\psi}_{nt}(Y_{(j)}) - F_{nt}(Y_{(j)}) \\
	\breve{\psi}_{at}(Y_{(j)}) - F_{at}(Y_{(j)}) \\
	\end{pmatrix}^{\T} \bm Z_{n}(Y_{(j)}) \right\} \nonumber \\
	& + \frac{1}{2} \frac{n}{|I_{\kappa}|}\cdot \sum_{j \in I_\kappa}\begin{pmatrix}
	\breve{\xi}_{nt}(Y_{(j)}) - \phi_{nt} \\
	\breve{\xi}_{at}(Y_{(j)}) - \phi_{at} \\
	\breve{\psi}_{co}(Y_{(j)}) - F_{co}(Y_{(j)}) \\
	\breve{\psi}_{nt}(Y_{(j)}) - F_{nt}(Y_{(j)}) \\
	\breve{\psi}_{at}(Y_{(j)}) - F_{at}(Y_{(j)}) \\
	\end{pmatrix}^{\T} \bm V_n(Y_{(j)}) \begin{pmatrix}
	\breve{\xi}_{nt}(Y_{(j)}) - \phi_{nt} \\
	\breve{\xi}_{at}(Y_{(j)}) - \phi_{at} \\
	\breve{\psi}_{co}(Y_{(j)}) - F_{co}(Y_{(j)}) \\
	\breve{\psi}_{nt}(Y_{(j)}) - F_{nt}(Y_{(j)}) \\
	\breve{\psi}_{at}(Y_{(j)}) - F_{at}(Y_{(j)}) \\
	\end{pmatrix} + O_P(n^{-\frac{3}{2}}), 
	\end{align}
	
	Similarly, replacing $(\breve{\bm \xi}, \breve{\bm \psi})$ by $(\bm \phi, \bm F_0)-n^{-\frac{1}{2}}\bm V_n^{-1} \bm Z_n=(\breve{\bm \rho}, \breve{\bm \tau})$ (by Lemma \ref{lm:nullexp} below),  in \eqref{Mdiffnull2}, where $\bm Z_n=(\bm Z_n(Y_{(j)}))_{j \in I_\kappa}$ and $\bm V_n=\diag(\bm V_n(Y_{(j)}))_{j \in I_\kappa}$, gives 
	\begin{align} \label{eq:diffnormalexpansion}
	\M_n(\breve{\bm \tau}, \breve{\bm \rho}) - \M_n(\bm{F}_0, \bm \phi) &= -\frac{1}{2}\cdot \frac{1}{|I_{\kappa}|}\sum_{j \in I_{\kappa}}\bm  Z_n(Y_{(j)})^{\T} \bm V_n(Y_{(j)})^{-1}\bm  Z_n(Y_{(j)})+ O_P(n^{-\frac{3}{2}}),
	\end{align}
	since $\frac{1}{|I_{\kappa}|} \sum_{j \in I_\kappa} ||\breve{\bm \tau}(Y_{(j)})||^2_2=O_P(1/ n)$. This implies, subtracting \eqref{eq:diffnormalexpansion} from \eqref{Mdiffnull2} gives, 
	\begin{align} \label{eq:BLqf}
	&\M_n(\breve{\bm \psi}, \breve{\bm \xi}) - \M_n(\breve{\bm \tau}, \breve{\bm \rho}) \nonumber\\
	&=\frac{1}{2}\frac{n}{|I_{\kappa}|}\sum_{j \in I_\kappa} \begin{pmatrix}
	\breve{\xi}_{nt}(Y_{(j)}) - \breve{\rho}_{nt}(Y_{(j)}) \\
	\breve{\xi}_{at}(Y_{(j)}) - \breve{\rho}_{at}(Y_{(j)}) \\
	\breve{\psi}_{co}(Y_{(j)}) - \breve{\tau}_{co}(Y_{(j)}) \\
	\breve{\psi}_{nt}(Y_{(j)}) - \breve{\tau}_{nt}(Y_{(j)}) \\
	\breve{\psi}_{at}(Y_{(j)}) - \breve{\tau}_{at}(Y_{(j)}) \\
	\end{pmatrix}^{\T}\bm V_n(Y_{(j)}) \begin{pmatrix}
	\breve{\xi}_{nt}(Y_{(j)}) - \breve{\rho}_{nt}(Y_{(j)}) \\
	\breve{\xi}_{at}(Y_{(j)}) - \breve{\rho}_{at}(Y_{(j)}) \\
	\breve{\psi}_{co}(Y_{(j)}) - \breve{\tau}_{co}(Y_{(j)}) \\
	\breve{\psi}_{nt}(Y_{(j)}) - \breve{\tau}_{nt}(Y_{(j)}) \\
	\breve{\psi}_{at}(Y_{(j)}) - \breve{\tau}_{at}(Y_{(j)}) \\
	\end{pmatrix}+ O_P(n^{-\frac{3}{2}}). 
	\end{align}
	
	Now, since $\M_n(\breve{\bm \psi}, \breve{\bm \xi}) - \M_n(\breve{\bm \tau}, \breve{\bm \rho}) \geq 0$ and $\sup_{j \in I_\kappa}||-\bm V(Y_{(j)})^{-1}||_\infty=O_P(n)$ (seen from Lemma \ref{lm:nullexp} below), the result in \eqref{eq:nullBL} follows.\footnote{For a symmetric matrix $\bm A$, denote by $||\bm A||_\infty$ the maximum eigenvalue of $\bm A$.} 
\end{proof}

\begin{lemma}\label{lm:nullexp}For $t \in (0, 1)$, 
	\begin{align*}
	-\bm V_n^{-1}(t)\bm Z_n(t)=\sqrt n 
	\begin{pmatrix}
	\breve{\rho}_{nt}(t) - \phi_{nt} \\
	\breve{\rho}_{at}(t) - \phi_{at} \\ 
	\breve \tau_{co}(t) - F_{co}(t) \\
	\breve \tau_{nt}(t) - F_{nt}(t)\\
	\breve \tau_{at}(t) - F_{at}(t)
	\end{pmatrix}, 
	\end{align*} 
	where $\breve{\bm \tau} = (\breve{\tau}_{co}, \breve{\tau}_{nt}, \breve{\tau}_{at})$ is defined in \eqref{eq:null_equi_estimator_theta} and $\breve{\bm \rho} = (\breve{\rho}_{nt}, \breve{\rho}_{at})$ is defined in \eqref{eq:null_equi_estimator_chi}. 
\end{lemma}

\begin{proof} Recall $\bm V_n(t)$ from \eqref{eq:V} and $-\bm{V}_n$ that has four sub-matrices $A(t), B(t), B(t)^{\T}$ and $C(t)$. Then, the inverse matrix $-\bm V^{-1}_n (t)$ can be computed as:
$$
-\bm V^{-1}_n (t) = \begin{pmatrix}
A^{*}(t) & -A^{*}(t) B(t)C(t)^{-1} \\
-C(t)^{-1} B(t)^{\T} A^{*}(t) & C(t)^{-1} + C(t)^{-1}B(t)^{\T}A^{*}(t)B(t)C(t)^{-1}
\end{pmatrix}
$$
where $A^{*}(t) = (A(t) - B(t)C(t)^{-1}B(t)^{\T})^{-1}$. Also, the multiplication $-\bm{V}_n(t) \bm{Z}_n(t)$ can be represented by, 
	$$
	-\bm V^{-1}_n (t) \bm{Z}_n(t) =\begin{pmatrix}
	A^*(t) (Z_{n1}(t) - B(t)C(t)^{-1}Z_{n2}(t)) \\
	C(t)^{-1}Z_{n2}(t) - C(t)^{-1}B(t)^{\T}A^*(t) (Z_{n1}(t) - B(t)C(t)^{-1}Z_{n2}(t))
	\end{pmatrix}.
	$$
	Then, the proof can be completed by direct multiplication. 
\end{proof}

The proof Proposition \ref{proposition:nullBL} can now be completed by arguments similar to the proof of Proposition \ref{proposition:pavabl}. We outline the steps below, omitting the details: 

\begin{itemize}
	
	\item[--]To begin with define, $\tilde {\bm \tau}=(\tilde \tau_{co}, \tilde \tau_{nt}, \tilde \tau_{at})$, as follows: 
	\begin{align}\label{eq:inctau}
	\tilde \tau_{s}=\arg\min_{\theta \in \cP([0, 1]^\R)}\sum_{j \in I_\kappa} (\breve \tau_{s}(Y_{(j)})-\theta(Y_{(j)}))^2,
	\end{align}
	where $s \in \{co, nt, at\}$. Then as in Proposition \ref{proposition:pavabl}, it can shown that $$
	\frac{1}{|I_{\kappa}|}\sum_{j \in I_\kappa} \left\|\sqrt n \left\{\tilde{\bm \tau}(Y_{(j)})- \breve{\bm \tau}(Y_{(j)})\right\} \right\|^2_2=o_P(n^{-\frac{1}{4}}).$$
	Also, we can define $\tilde{\bm \rho} = (\tilde{\rho}_{nt}, \tilde{\rho}_{at})$ by using truncation $(\breve{\rho}_{nt}, \breve{\rho}_{at})$ with the interval $[0,1]$. Specifically, $$
	\tilde{\rho}_{nt}(t) = \begin{cases}
	0 & \text{if } \breve{\rho}_{nt}(t) < 0 \\
	\breve{\rho}_{nt}(t) & \text{if } 0 \geq \breve{\rho}_{nt}(t) \leq 1 \\
	1 & \text{if } \breve{\rho}_{nt}(t) > 1
	\end{cases}, \quad \tilde{\rho}_{at}(t) = \begin{cases}
	0 & \text{if } \breve{\rho}_{at}(t) < 0 \\
	\breve{\rho}_{at}(t) & \text{if } 0 \geq \breve{\rho}_{at}(t) \leq 1 \\
	1 & \text{if } \breve{\rho}_{at}(t) > 1
	\end{cases}.
	$$
	Then, we have   
	$$
	\frac{1}{|I_{\kappa}|}\sum_{j \in I_\kappa} \left\|\sqrt n \left\{\tilde{\bm \rho}(Y_{(j)})- \breve{\bm \rho}(Y_{(j)})\right\} \right\|^2_2=o_P(n^{-\frac{1}{4}}).
	$$
	This implies $\M_n(\breve{\bm \tau}, \breve{\bm \rho}) - \M_n(\hat{\bm \psi}, \hat{\bm \xi}) \leq \M_n(\breve{\bm \tau}, \breve{\bm \rho}) - \M_n(\tilde{\bm \tau}, \tilde{\bm \rho})=o_P(n^{-\frac{5}{4}})$.

	\item[--] Then as in the proof of Lemma \ref{lm:BLpluginsqrtn} in Section \ref{sec:pfBLplugin}, we can have $$
	\frac{1}{|I_{\kappa}|}\sum_{j \in I_\kappa} \left\| \sqrt n \left\{\hat{\bm \psi}(Y_{(j)})-  \breve{\bm \tau}(Y_{(j)})\right\}\right\|_2^2 =o_P(n^{-\frac{1}{4}}).
	$$
	Similarly, we have $$
	\frac{1}{|I_{\kappa}|}\sum_{j \in I_\kappa} \left\| \sqrt n \left\{\hat{\bm \xi}(Y_{(j)})-  \breve{\bm \rho}(Y_{(j)})\right\}\right\|_2^2 =o_P(n^{-\frac{1}{4}}),$$
	which completes the proof of Proposition \ref{proposition:nullBL}.  
	
\end{itemize}

\section{Proof of Theorems 2 and 3}
\label{sec:pfblrt}

\subsection{Proof of Theorem 2}
\label{subsec:thm2}

In this section we derive the asymptotic distribution of the binomial likelihood ratio statistic. The binomial likelihood ratio test statistic is defined as
$$T_n=2 \left( \max_{\bm \theta \in \bm \vartheta_{+}, \bm \chi \in \bm \varphi_+} \ell_{\bm{Y}, \bm{D} | \bm{Z}}(\bm \theta, \bm \chi)-\max_{\bm \theta \in \bm \vartheta_{+, 0}, \bm \chi \in \bm \varphi_+}\ell_{\bm{Y}, \bm{D} | \bm{Z}}(\bm \theta, \bm \chi) \right),$$
Denote 
\begin{align}\label{eq:BLRT}
(\hat {\bm \theta}, \hat{\bm \chi})=\argmax_{\bm \theta \in \bm \vartheta_{+}, \bm \chi \in \bm \varphi_+} \ell_{\bm{Y}, \bm{D} | \bm{Z}}(\bm \theta, \bm \chi), \quad (\hat {\bm \psi}, \hat{\bm \xi})=\argmax_{\bm \theta \in \bm \vartheta_{+, 0}, \bm \chi \in \bm \varphi_{+}} \ell_{\bm{Y}, \bm{D} | \bm{Z}}(\bm \theta, \bm \chi),
\end{align} 
where $\hat {\bm \theta}(t)=(\hat \theta_{co}^{(0)}(t), \hat \theta_{nt}(t), \hat \theta_{co}^{(1)}(t), \hat \theta_{at}(t))$ and $\hat {\bm \psi}(t)=(\hat \psi_{co}(t), \hat \psi_{nt}(t), \hat \psi_{at}(t))$. Furthermore, we denote
\begin{align*}
\hat{\theta}_{00}(t) &= \frac{(1-\hat{\chi}_{nt}(t)-\hat{\chi}_{at}(t))\hat{\theta}_{co}(t) + \hat{\chi}_{nt}(t)\hat{\theta}_{nt}(t)}{1-\hat{\chi}_{at}(t)}, \\
\hat{\theta}_{01}(t) &= \hat{\theta}_{at}(t), \quad \hat{\theta}_{10}(t) = \hat{\theta}_{nt}(t), \\
\hat{\theta}_{11}(t) &= \frac{(1-\hat{\chi}_{nt}(t)-\hat{\chi}_{at}(t))\hat{\theta}_{co}(t) + \hat{\chi}_{at}(t)\hat{\theta}_{at}(t)}{1-\hat{\chi}_{nt}(t)}, \\
\hat{\psi}_{00}(t) &= \frac{(1-\hat{\xi}_{nt}(t)-\hat{\xi}_{at}(t))\hat{\psi}_{co}(t) + \hat{\xi}_{nt}(t)\hat{\psi}_{nt}(t)}{1-\hat{\xi}_{at}(t)}, \\
\hat{\psi}_{01}(t) &= \hat{\psi}_{at}(t), \quad \hat{\psi}_{10}(t) = \hat{\psi}_{nt}(t), \\
\hat{\psi}_{11}(t) &= \frac{(1-\hat{\xi}_{nt}(t)-\hat{\xi}_{at}(t))\hat{\psi}_{co}(t) + \hat{\xi}_{at}(t)\hat{\psi}_{at}(t)}{1-\hat{\xi}_{nt}(t)}.
\end{align*}

We also define $\breve{\tau}_{zd}(t)$ for $z, d \in \{0,1\}$ based on $\breve{\bm \tau} = (\breve{\tau}_{co}, \breve{\tau}_{nt}, \breve{\tau}_{at})$ and $\breve{\bm \rho} = (\breve{\rho}_{nt}, \breve{\rho}_{at})$ as follows:
\begin{align}\label{eq:equi_estimator_tau_zd}
	\breve{\tau}_{00}(t) &= \overline{F}_{00}(t) - \frac{1}{1-\phi_{at}}C_{00}(t)\left\{(\overline{F}_0(t) - \overline{F}_1(t)) - \frac{r_{at}(t)E_1(t) + r_{nt}(t)E_2(t)}{det}\right\} \nonumber\\
	\breve{\tau}_{01}(t) &= \overline{F}_{01}(t) - \frac{1}{\phi_{at}}C_{01}(t)\left\{(\overline{F}_0(t) - \overline{F}_1(t)) - \frac{r_{at}(t)E_1(t) + r_{nt}(t)E_2(t)}{det}\right\} \nonumber\\
	\breve{\tau}_{10}(t) &= \overline{F}_{10}(t) + \frac{1}{\phi_{nt}}C_{10}(t)\left\{(\overline{F}_0(t) - \overline{F}_1(t)) - \frac{r_{at}(t)E_1(t) + r_{nt}(t)E_2(t)}{det}\right\} \nonumber\\
	\breve{\tau}_{11}(t) &= \overline{F}_{11}(t) + \frac{1}{1-\phi_{at}}C_{11}(t)\left\{(\overline{F}_0(t) - \overline{F}_1(t)) - \frac{r_{at}(t)E_1(t) + r_{nt}(t)E_2(t)}{det}\right\}
\end{align} 

We now have the following lemma, which shows that, under the null, the restricted (i.e., $\hat{\psi}_{zd}$) and unrestricted (i.e., $\hat{\theta}_{zd}$) maximum binomial likelihood estimators are asymptotically close to the estimators $\breve{\tau}_{zd}$ and $\overline{F}_{zd}$ respectively. 

\begin{lemma}\label{lm:blplugin} Under the null $H_0$, the following holds:
	\begin{align}\label{eq:blpluginnI}
	\frac{1}{|I_{\kappa}|}\sum_{j \in I_\kappa} \sum_{z, d \in \{0, 1\}} \left\{ \sqrt n \left(\hat{\theta}_{zd}(Y_{(j)})-\overline F_{zd}(Y_{(j)}) \right)\right\}^2= o_P(1), 
	\end{align}
	and 
	\begin{align}\label{eq:blpluginnII} 
	\frac{1}{|I_{\kappa}|}\sum_{j \in I_\kappa} \sum_{z, d \in \{0, 1\}} \left\{ \sqrt n \left( \hat{\psi}_{zd}(Y_{(j)})-\breve \tau_{zd}(Y_{(j)}) \right) \right\}^2= o_P(1).
	\end{align}
\end{lemma}

\begin{proof} The result in \eqref{eq:blpluginnI} can be shown by arguments similar to the proof Lemma of \ref{lm:BLpluginsqrtn}. Recall that  Lemma \ref{lm:BLpluginsqrtn}  shows that $\frac{1}{|I_{\kappa}|}\sum_{j \in I_\kappa} \sum_{z, d \in \{0, 1\}} \left\{\sqrt n(\hat{F}_{zd}(Y_{(j)})-\overline F_{zd}(Y_{(j)}) \right\}^2= o_P(1)$, where $\hat{\bm F}$ is the maximum binomial likelihood estimate of $\bm F$, the vector of true distribution functions, when the proportion of the compliance classes are estimated by maximizing the binomial likelihood function. On the other hand, $\hat{\bm \theta}$ is the maximum binomial likelihood estimate of $\bm F$, when the proportion of the compliance classes are estimated by the plug-in estimates. Nevertheless, the proof of Lemma \ref{lm:BLpluginsqrtn} can be repeated verbatim to show \eqref{eq:blpluginnI}. 
	
	The result in \eqref{eq:blpluginnII} follows from Proposition \ref{proposition:nullBL} and the definition of $\{\breve \tau_{zd}\}_{z, d \in \{0, 1\}}$. 
\end{proof}

Using this lemma leading term of the asymptotic expansion of the binomial likelihood ratio test can be derived as follows: 

\begin{lemma}Let $\{\breve \tau_{zd}\}_{z, d \in \{0, 1\}}$ be as defined above. Then the binomial likelihood ratio test statistic satisfies
	\begin{align}
	T_n &= \frac{1}{|I_{\kappa}|}\sum_{j \in I_\kappa} \Bigg[ \sum_{z, d \in \{0, 1\}} n_{zd}\left\{  \frac{(\breve{\tau}_{zd}(Y_{(j)})-\overline F_{zd}(Y_{(j)}))^2}{\overline F_{zd}(Y_{(j)})(1-\overline F_{zd}(Y_{(j)}))} \right\}  \\
	&+ n_0 \frac{(\breve{\rho}_{at}(Y_{(j)}) - \breve{\phi}_{at})^2}{\breve{\phi}_{at}(1-\breve{\phi}_{at})} + n_1 \frac{(\breve{\rho}_{nt}(Y_{(j)}) - \breve{\phi}_{nt})^2}{\breve{\phi}_{nt}(1-\breve{\phi}_{nt})} \Bigg] + o_P(1)
	\label{blrt_approx}
	\end{align}
\end{lemma}

\begin{proof} Recall that the definitions of  $(\hat {\bm \theta}, \hat{\bm \chi})$ and $(\hat {\bm \psi}, \hat{\bm \xi})$ from \eqref{eq:BLRT}. Then, the binomial likelihood ratio test can be rewritten as, 
	\begin{align}\label{eq:blrt_I}
	T_n&= 2\left( \ell_{\bm{Y}, \bm{D} | \bm{Z}}(\hat{\bm \theta}, \hat{\bm\chi}) - \ell_{\bm{Y}, \bm{D} | \bm{Z}}(\hat{\bm \psi}, \hat{\bm\xi}) \right) \nonumber \\
	&= \frac{2n}{|I_{\kappa}|}\sum_{j \in I_\kappa} \left[ \left\{\sum_{\{z, d\}\in \{0, 1\}}  T_{zd}(Y_{(j)}|\hat {\theta}_{zd})- T_{zd}(Y_{(j)}|\hat{\psi}_{zd}) \right\} + \left\{\sum_{s \in \{nt, at\}} T_{s}(Y_{(j)}|\hat{\chi}_{s}) - T_{s}(Y_{(j)}|\hat{\xi}_{s})\right\} \right],
	\end{align}
	where 
	\begin{align*}
	T_{zd}(Y_{(j)} |\hat {\theta}_{zd})=&\frac{n_{zd}}{n}  \left\{ \overline F_{zd}(Y_{(j)}) \log \hat{\theta}_{zd}(Y_{(j)})  +(1- \overline F_{zd}(Y_{(j)}))  \log (1-\hat{\theta}_{zd}(Y_{(j)})) \right\} \\
	T_{zd}(Y_{(j)}|\hat{\psi}_{zd})=&\frac{n_{zd}}{n} \left\{ \overline F_{zd}(Y_{(j)}) \log \hat{\psi}_{zd}(Y_{(j)})  +(1- \overline F_{zd}(Y_{(j)}))  \log (1-\hat{\psi}_{zd}(Y_{(j)})) \right\}.
	\end{align*}
	and for $s \in \{nt, at\}$, 
	\begin{align*}
	T_{nt}(Y_{(j)} |\hat{\chi}_{nt})=&\frac{n_{1}}{n}  \left\{ \breve{\phi}_{nt}(Y_{(j)}) \log \hat{\chi}_{nt}(Y_{(j)})  +(1- \breve{\phi}_{nt}(Y_{(j)}))  \log (1-\hat{\chi}_{nt}(Y_{(j)})) \right\} \\
	T_{at}(Y_{(j)} |\hat{\chi}_{at})=&\frac{n_{0}}{n}  \left\{ \breve{\phi}_{at}(Y_{(j)}) \log \hat{\chi}_{at}(Y_{(j)})  +(1- \breve{\phi}_{at}(Y_{(j)}))  \log (1-\hat{\chi}_{at}(Y_{(j)})) \right\} \\
	T_{nt}(Y_{(j)} |\hat{\xi}_{nt})=&\frac{n_{1}}{n}  \left\{ \breve{\phi}_{nt}(Y_{(j)}) \log \hat{\xi}_{nt}(Y_{(j)})  +(1- \breve{\phi}_{nt}(Y_{(j)}))  \log (1-\hat{\xi}_{nt}(Y_{(j)})) \right\} \\
	T_{at}(Y_{(j)} |\hat{\xi}_{at})=&\frac{n_{0}}{n}  \left\{ \breve{\phi}_{at}(Y_{(j)}) \log \hat{\xi}_{at}(Y_{(j)})  +(1- \breve{\phi}_{at}(Y_{(j)}))  \log (1-\hat{\xi}_{at}(Y_{(j)})) \right\}.
	\end{align*}
	
	Recall the definition of the (negative) binary entropy function $I(x)=x\log x+(1-x)\log (1-x)$. Then, note that 
	\begin{align}\label{eq:difference}
	&T_{zd}(Y_{(j)}|\hat {\bm\theta})-I(\overline F_{zd}(Y_{(j)})) \nonumber \\
	&=\frac{n_{zd}}{n}\left\{ \overline F_{zd}(Y_{(j)}) \log \frac{\hat{\theta}_{zd}(Y_{(j)})}{\overline F_{zd}(Y_{(j)})}  +(1- \overline F_{zd}(Y_{(j)}))  \log \frac{1-\hat{\theta}_{zd}(Y_{(j)})}{1-\overline F_{zd}(Y_{(j)})} \right\} \nonumber \\
	&=R_{zd}^{(j)}, 
	\end{align}
	where 
	$$R_{zd}^{(j)}=\frac{n_{zd}}{n} \cdot \frac{(\hat{\theta}_{zd}(Y_{(j)})-\overline F_{zd}(Y_{(j)}))^2}{4} \bigg\{ \frac{\overline F_{zd}(Y_{(j)})}{(\omega_{zd}(Y_{(j)}))^2} - \frac{1-\overline F_{zd}(Y_{(j)})}{ (1-\omega_{zd}(Y_{(j)}))^2} \bigg\},$$
	and $\omega_{zd}(Y_{(j)}) \in [\overline F_{zd}(Y_{(j)})\wedge \hat\theta_{zd}(Y_{(j)}), \hat\theta_{zd}(Y_{(j)})\vee \overline F_{zd}(Y_{(j)})]$.

	Note that $\omega_{zd}(Y_{(j)}) \geq \overline F_{zd}(Y_{(\ceil{n\kappa})})\wedge \hat \theta_{zd}(Y_{(\ceil{n\kappa})})$ and $\overline F_{zd}(Y_{(j)}) \leq \overline F_{zd}(Y_{(\ceil{n(1-\kappa)})})$. Therefore, 
	$$\frac{\overline F_{zd}(Y_{(j)})}{(\omega_{zd}(Y_{(j)}))^2}  \leq \frac{\overline F_{zd}(Y_{(\ceil{n(1-\kappa)})})}{\overline F_{zd}(Y_{(n\kappa)})\wedge \hat \theta_{zd}(Y_{(n\kappa)})} = O_P(1),$$ since $\overline F_{zd}(Y_{(\ceil{n\kappa})})\pto H_{zd}^{-1}(\kappa)$, $\overline F_{zd}(Y_{(\ceil{n(1-\kappa)})})\pto H_{zd}^{-1}(1-\kappa)$ using Observation \ref{obs:orderstatistics}, and $|\hat \theta_{zd}(Y_{(\ceil{n\kappa})})- \overline F_{zd}(Y_{(\ceil{n\kappa})})|=o_P(1)$ by  Lemma \ref{lm:blplugin}. Similarly, 
	$$\frac{1-\overline F_{zd}(Y_{(j)})}{ (1-\omega_{zd}(Y_{(j)}))^2} =O_P(1).$$ 
	Therefore,  
	\begin{align} \label{eq:RuvI}
	\sum_{j \in I_\kappa}|R_{zd}^{(j)}|  & \leq O_P(1)\sum_{j \in I_\kappa} |\hat \theta_{zd}(Y_{(j)})- \overline F_{zd}(Y_{(j)})|^2 \nonumber \\
	& \leq O_P(1) \sum_{j \in I_\kappa} |\overline F_{zd}(Y_{(j)})-\hat \theta_{zd}(Y_{(j)})|^2= o_P(1),
	\end{align}
	by \eqref{eq:blpluginnI}. Therefore,  by \eqref{eq:difference}, 
	\begin{align}\label{eq:TuvI}
	\frac{1}{|I_{\kappa}|} \sum_{\{z, d\}\in \{0, 1\}} \sum_{j \in I_\kappa}  T_{zd}(Y_{(j)}|\hat {\bm\theta})-I(\overline F_{zd}(Y_{(j)})) 
	&=  o_P(1/n).
	\end{align}
	
	Similarly, by a second order Taylor approximation, 
	\begin{align}\label{eq:TuvII}
	T_{zd}(Y_{(j)}|\hat {\bm\psi})-I(\overline F_{zd}(Y_{(j)})) & = \frac{n_{zd}}{n} \cdot \frac{1}{2}\cdot \frac{(\hat{\psi}_{zd}(Y_{(j)})-\overline F_{zd}(Y_{(j)}))^2}{\overline F_{zd}(Y_{(j)})(1-\overline F_{zd}(Y_{(j)}))} + W_{zd}^{(j)}, 
	\end{align}
	where 
	$$W_{zd}^{(j)}=\frac{n_{zd}}{n} \cdot \frac{(\hat{\psi}_{zd}(Y_{(j)})-\overline F_{zd}(Y_{(j)}))^3}{6} \bigg\{ \frac{\overline F_{zd}(Y_{(j)})}{(\omega_{zd}(Y_{(j)}))^3} - \frac{1-\overline F_{zd}(Y_{(j)})}{ (1-\omega_{zd}(Y_{(j)}))^3} \bigg\},$$
	and $\omega_{zd}(Y_{(j)}) \in [\overline F_{zd}(Y_{(j)})\wedge \hat\psi_{zd}(Y_{(j)}), \hat\psi_{zd}(Y_{(j)})\vee \overline F_{zd}(Y_{(j)})]$. Now, as in \eqref{eq:RuvI}, 
	\begin{align} \label{eq:Wuv}
	\frac{1}{|I_{\kappa}|} \sum_{j \in I_\kappa}|W_{zd}^{(j)}|  & \leq O_P(1)\frac{1}{|I_{\kappa}|}\sum_{j \in I_\kappa} |\hat \psi_{zd}(Y_{(j)})- \overline F_{zd}(Y_{(j)})|^3 \nonumber \\
	& \leq O_P(1) \frac{1}{|I_{\kappa}|} \sum_{j \in I_\kappa} |\overline F_{zd}(Y_{(j)})-\hat \psi_{zd}(Y_{(j)})|^3 \nonumber \\
	& \leq O_P(1) \frac{1}{|I_{\kappa}|} \sum_{j \in I_\kappa} |\overline F_{zd}(Y_{(j)})-\breve \tau_{zd}(Y_{(j)})|^3 +  O_P(1) \frac{1}{|I_{\kappa}|} \sum_{j \in I_\kappa} |\hat \psi_{zd}(Y_{(j)})-\breve \tau_{zd}(Y_{(j)})|^3 \nonumber \\
	& \leq O_P(n^{-\frac{3}{2}}) +  O_P(1) \frac{1}{|I_{\kappa}|} \left(\sum_{j \in I_\kappa} |\hat \psi_{zd}(Y_{(j)})-\breve \tau_{zd}(Y_{(j)})|^2 \right)^{\frac{3}{2}} \nonumber \\
	&=o_P(1/n), 
	\end{align}
	using $\sup_t|\overline F_{zd}(t)-\breve \tau_{zd}(t)|=O_P(1/\sqrt n)$ for the first term, and Cauchy-Schwarz followed by \eqref{eq:blpluginnII} in the second term. 
	
	Therefore, combing \eqref{eq:TuvII} with \eqref{eq:Wuv} above gives, 
	\begin{align}\label{eq:TuvIII}
	& \frac{2n}{|I_{\kappa}|} \sum_{\{z, d\}\in \{0, 1\}} \sum_{j \in I_\kappa}  T_{zd}(Y_{(j)}|\hat {\bm\psi})-I(\overline F_{zd}(Y_{(j)})) \nonumber \\
	& =\frac{2n}{|I_{\kappa}|} \sum_{\{z, d\}\in \{0, 1\}} \sum_{j \in I_\kappa}  \left\{ \frac{n_{zd}}{n} \cdot \frac{1}{2}\cdot \frac{(\hat{\psi}_{zd}(Y_{(j)})-\overline F_{zd}(Y_{(j)}))^2}{\overline F_{zd}(Y_{(j)})(1-\overline F_{zd}(Y_{(j)}))} +  o_P(1/n)\right\} \nonumber \\
	& =\frac{1}{|I_{\kappa}|} \left\{ \sum_{\{z, d\}\in \{0, 1\}} \sum_{j \in I_\kappa}  n_{zd}\cdot \frac{(\breve{\tau}_{zd}(Y_{(j)})-\overline F_{zd}(Y_{(j)}))^2}{\overline F_{zd}(Y_{(j)})(1-\overline F_{zd}(Y_{(j)}))} \right\} +  o_P(1), 
	\end{align}
	where the last step uses triangle inequality and \eqref{eq:blpluginnII}. 
	
	Similarly, we can deduce 
	\begin{align*}
	&\frac{2n}{|I_{\kappa}|}\sum_{j \in I_\kappa} \sum_{s \in \{nt, at\}} T_{s}(Y_{(j)}|\hat{\chi}_{s}) - T_{s}(Y_{(j)}|\hat{\xi}_{s}) \\
	&= \frac{1}{|I_{\kappa}|} \left\{ \sum_{j \in I_\kappa} n_0 \frac{(\breve{\rho}_{at}(Y_{(j)}) - \breve{\phi}_{at})^2}{\breve{\phi}_{at}(1-\breve{\phi}_{at})} + n_1 \frac{(\breve{\rho}_{nt}(Y_{(j)}) - \breve{\phi}_{nt})^2}{\breve{\phi}_{nt}(1-\breve{\phi}_{nt})} \right\} + o_P(1). 
	\end{align*}
	By combining this and \eqref{eq:TuvIII} with \eqref{eq:blrt_I}, the proof can be completed. 
\end{proof}

The proof of Theorem 2 can now be completed by simplifying the RHS of \eqref{blrt_approx}. The following shows some steps for simplifying this equation. First, for each $j$, the sum $\sum_{z, d \in \{0, 1\}} n_{zd} \left\{\frac{(\breve{\tau}_{zd}(Y_{(j)})-\overline F_{zd}(Y_{(j)}))^2}{\overline F_{zd}(Y_{(j)})(1-\overline F_{zd}(Y_{(j)}))} \right\}$ can be simplified as
\begin{align*}
&\sum_{z, d \in \{0, 1\}} n_{zd} \left\{\frac{(\breve{\tau}_{zd}(Y_{(j)})-\overline F_{zd}(Y_{(j)}))^2}{\overline F_{zd}(Y_{(j)})(1-\overline F_{zd}(Y_{(j)}))} \right\} \\
&= \sum_{z, d \in \{0. 1\}} \eta_{zd} \left\{ \sqrt n (\breve{\tau}_{zd}(Y_{(j)}) - \overline{F}_{zd}(Y_{(j)})) \right\}^2 Q_{zd}(Y_{(j)}) + o_P(1)\\
&= n \frac{\phi_1(1-\phi_1)}{Q(Y_{(j)})} \left\{(\overline{F}_0(t) - \overline{F}_1(t)) - \frac{r_{at}(t)E_1(t) + r_{nt}(t)E_2(t)}{det} \right\}^2 +o_P(1)\\
&= n \frac{\phi_1(1-\phi_1)}{Q(Y_{(j)})} (\overline{F}_0(Y_{(j)}) - \overline{F}_1 (Y_{(j)}))^2 \left( \frac{1}{det(Y_{(j)})} \right)^2 \left( \frac{\phi_1(1-\phi_1)}{\phi_{nt}(1-\phi_{nt})\phi_{at}(1-\phi_{at})} \right)^2 +o_P(1)
\end{align*}
and the other sum is 
\begin{align*}
& n_0 \frac{(\breve{\rho}_{at}(Y_{(j)}) - \breve{\phi}_{at})^2}{\breve{\phi}_{at}(1-\breve{\phi}_{at})} + n_1 \frac{(\breve{\rho}_{nt}(Y_{(j)}) - \breve{\phi}_{nt})^2}{\breve{\phi}_{nt}(1-\breve{\phi}_{nt})} \\
&= (1-\phi_1) \frac{\left\{ \sqrt n(\breve{\rho}_{at}(Y_{(j)}) - \breve{\phi}_{at} ) \right\}^2}{\phi_{at}(1-\phi_{at})}  + \phi_1 \frac{\left\{ \sqrt n(\breve{\rho}_{nt}(Y_{(j)}) - \breve{\phi}_{nt} ) \right\}^2}{\phi_{nt}(1-\phi_{nt})} +o_P(1)\\
&= n \frac{\phi_1(1-\phi_1)}{Q(Y_{(j)})} (\overline{F}_0(Y_{(j)}) - \overline{F}_1 (Y_{(j)}))^2 \left( \frac{1}{det(Y_{(j)})} \right)^2 \\
&~~~~~ \times \left( \frac{\phi_1(1-\phi_1)}{\phi_{nt}(1-\phi_{nt})\phi_{at}(1-\phi_{at})} \right)\left\{ \frac{\phi_1(1-\phi_1)^2}{\phi_{at}(1-\phi_{at})} \frac{E_1^2(Y_{(j)})}{Q(Y_{(j)})} +  \frac{\phi_1^2(1-\phi_1)}{\phi_{nt}(1-\phi_{nt})} \frac{E_2^2(Y_{(j)})}{Q(Y_{(j)})} \right\} + o_P(1)
\end{align*}

Therefore, $T_n$ is 
\begin{align*}
&T_n \\
&= \frac{1}{|I_{\kappa}|} \sum_{j \in I_\kappa} n \frac{\phi_1(1-\phi_1)}{Q(Y_{(j)})} (\overline{F}_0(Y_{(j)}) - \overline{F}_1 (Y_{(j)}))^2 \left( \frac{1}{det(Y_{(j)})} \right)^2 \left( \frac{\phi_1(1-\phi_1)}{\phi_{nt}(1-\phi_{nt})\phi_{at}(1-\phi_{at})} \right) + o_P(1)\\
&= \frac{1}{|I_{\kappa}|} \sum_{j \in I_\kappa} n \cdot \phi_1(1-\phi_1)(\overline{F}_0(Y_{(j)}) - \overline{F}_1 (Y_{(j)}))^2 \frac{1}{\frac{\phi_{nt}(1-\phi_{nt})\phi_{at}(1-\phi_{at})}{\phi_1(1-\phi_1)} + Q(Y_{(j)}) det(Y_{(j)})} + o_P(1)\\
&= \frac{1}{|I_{\kappa}|} \sum_{j \in I_\kappa} \frac{n_0 n_1}{n}(\overline{F}_0(Y_{(j)}) - \overline{F}_1 (Y_{(j)}))^2 \\
&~~~~~\times \frac{1}{Q(Y_{(j)}) + (1-\phi_1)\phi_{nt}(1-\phi_{nt})E_1^2(Y_{(j)}) +\phi_1 \phi_{at}(1-\phi_{at})E_2^2(Y_{(j)}) } + o_P(1)
\end{align*}

Since this denominator of the last term is 
\begin{align*}
& Q(Y_{(j)}) + (1-\phi_1)\phi_{nt}(1-\phi_{nt})E_1^2(Y_{(j)}) +\phi_1 \phi_{at}(1-\phi_{at})E_2^2(Y_{(j)}) \\
&= \left\{\frac{\phi_1(1-\phi_{at})}{Q_{00}(Y_{(j)})} + \frac{\phi_1 \phi_{at}}{Q_{01}(Y_{(j)})} + \phi_1 \phi_{at}(1-\phi_{at}) E_2^2(Y_{(j)}) \right\} \\
&~~~~~+ \left\{\frac{(1-\phi_1)\phi_{nt}}{Q_{10}(Y_{(j)})} + \frac{(1-\phi_1)(1- \phi_{at})}{Q_{11}(Y_{(j)})} +(1-\phi_1) \phi_{nt}(1-\phi_{nt}) E_1^2(Y_{(j)}) \right\} \\
&= \phi_1 \{ (1-\phi_{at})F_{00}(Y_{(j)}) + \phi_{at}F_{01}(Y_{(j)})\}\{1-(1-\phi_{at})F_{00}(Y_{(j)}) - \phi_{at}F_{01}(Y_{(j)})\} \\
&~~~~~+ (1-\phi_1)\{ (1-\phi_{nt})F_{11}(Y_{(j)}) + \phi_{nt}F_{10}(Y_{(j)})\}\{1-(1-\phi_{nt})F_{11}(Y_{(j)}) - \phi_{nt}F_{10}(Y_{(j)})\} \\
&= \phi_1 F(Y_{(j)})(1-F(Y_{(j)})) + (1-\phi_1) F(Y_{(j)})(1-F(Y_{(j)})) \\
&= F(Y_{(j)})(1-F(Y_{(j)}))
\end{align*}

By substituting $F(Y_{(j)})(1-F(Y_{(j)}))$ in the denominator, we have 
\begin{align*}
T_n &= \frac{1}{|I_{\kappa}|} \sum_{j \in I_\kappa} \frac{n_0 n_1}{n} \frac{(\overline{F}_0(Y_{(j)}) - \overline{F}_1 (Y_{(j)}))^2}{F(Y_{(j)})(1-F(Y_{(j)}))} + o_P(1) \\
&= \frac{1}{|I_{\kappa}|} \sum_{j \in I_\kappa} \frac{n_0 n_1}{n} \frac{(\overline{F}_0(Y_{(j)}) - \overline{F}_1 (Y_{(j)}))^2}{\overline{H}(Y_{(j)})(1-\overline{H}(Y_{(j)}))} + o_P(1)\\
\text{or equivalently} & \frac{n_0 n_1}{n} \int_{\overline{J}_{\kappa}}\frac{(\overline{F}_0(t) - \overline{F}_1 (t)^2}{\overline{H}(t)(1-\overline{H}(t))} {\rm d}\overline{H}(t) + o_P(1)
\end{align*}
where $\overline{H}(t) = (n_0 \overline{F}_0(t) + n_1 \overline{F}_1(t))/n$.

\subsection{Proof of Theorem 3}
\label{subsec:thm3}

In the definition of the simple version binomial likelihood, the function $J(\overline{F}_0(t_j), \theta_0(t_j))$ is maximized when $\theta_0(t_j) = \overline{F}_0(t_j)$. Therefore, $(\overline{F}_0, \overline{F}_1) = \argmax_{\theta_0, \theta_1 \in \in \cP([0, 1]^\R)}\ell^{simple}_{\bm{Y}|\bm{Z}}(\bm\theta)$. Similarly, when $\theta_0=\theta_1$, the simple version binomial likelihood is $(1/m) \sum_{j=1}^{m} nJ(\overline{H}(t_j), \theta_0(t_j))$, and the maximum is attained at $\theta_0 = \overline{H}$.  This completes the proof of the first part. For the second part, by using the fact that
$$
x \log\left(\frac{x}{y}\right) + (1-x) \log\left(\frac{1-x}{1-y}\right) = \frac{1}{2} \frac{(x-y)^2}{y(1-y)} + o(|x-y|^2),
$$
the test statistic $T_n^{simple}$ with the knots $\bm{t} = (Y_{(1)}, \ldots, Y_{(n)})$ can be written as 
\begin{align*}
T_n^{simple} &= \frac{1}{n} \sum_{j=1}^{n} n_0\frac{(\overline{F}_0(Y_{(j)}) - \overline{H}(Y_{(j)}))^2}{\overline{H}(Y_{(j)}) (1-\overline{H}(Y_{(j)}))} + n_1 \frac{(\overline{F}_1(Y_{(j)}) - \overline{H}(Y_{(j)}))^2}{\overline{H}(Y_{(j)}) (1-\overline{H}(Y_{(j)}))} + o_P(1) \\
&=  \frac{n_0 n_1}{n} \int_{-\infty}^{\infty} \frac{\bar{F}_{0}(t) - \bar{F}_{1}(t)}{\bar{H}(t) (1-\bar{H}(t))} {\rm d} \bar{H}(t) +o_P(1),
\end{align*}
completing the proof. 

\section{Proofs from Section 5}
\label{sec:algoapp}

In this section we recall the well-known pool-adjacent violators algorithm \citep{barlow1972, deleeuw2009},  elaborate on our proposed EM-PAVA algorithm, and prove Proposition 2.

\subsection{The pool-adjacent-violators algorithm (PAVA)}
\label{sec:pava}

The PAVA takes input a vector  $\bm{z}=(u_1, \ldots , u_n)$ and an a weight vector $\bm{w} = (w_1, \ldots, w_n)$, and returns another vector $\text{PAVA}_{\bm w}(\bm u)=(\hat u_1, \ldots , \hat u_n)$ such that 
\begin{equation}
\text{PAVA}_{\bm w}(\bm u)=\text{arg} \min_{v_1 \leq v_2 \leq \cdots \leq v_n} \sum_{i=1}^{n} w_i  (u_i-v_i)^2.
\label{eq:pavaobj}
\end{equation}

The weighted PAVA is as follows: To begin with set $\hat u_a=u_a$ for all $ a \in [n]$. 

\begin{enumerate}
	
	\item[Step 1.] If $\hat u_1 \leq \hat u_2$, move to Step 2. Otherwise, $\hat u_1 > \hat u_2$ in which case the values are updated as $$\hat u_1=\hat u_2\leftarrow \frac{w_1 u_1 + w_2 u_2}{w_1 + w_2},$$ the weighted average of the original values of $\{u_1, u_2 \}$. Then, move to Step 2. Note that the first step does not update the points from the third to the last, that is, $\hat u_a =u_a$, for $a \in [3, n]$. 
	
	\item[Step 2.] For the $a$-th point, compare $\hat u_a$ with $\hat u_{a+1}$. If $\hat u_a \leq \hat u_{a+1}$, then $\hat u_a$ remains the same and the algorithm moves to the next point. If $\hat u_a > \hat u_{a+1}$, then $\hat u_a=\hat u_{a+1}\leftarrow \frac{w_a \hat u_a + w_{a+1} \hat u_{a+1}}{w_1 + w_2}$, the weighted average of $\{\hat u_a, \hat u_{a+1}\}$. Then new value is compared with $\hat u_{a-1}$. If the required monotonicity assumption is achieved, that is, $\hat u_{a-1} \leq \hat u_a$, then the algorithm moves to the $(a+1)$-th point. Otherwise, $\hat u_{a-1} > \hat u_a$, in which case $\hat u_{a-1}=\hat u_a=\hat u_{a+1}$ is updated by the weighted average of $\{\hat u_{a-1}, \hat u_a, \hat u_{a+1}\}$. This repeated until a sequence the partial sequence $\hat u_1, \ldots, \hat u_a$ is non-decreasing. Then the algorithm moves to the $(a+1)$-th point. 
\end{enumerate}

It is well known that the output $\text{PAVA}_{\bm w}(\bm{z})=(\hat u_1, \hat u_2, \ldots, \hat u_n)$ of the above algorithm is non-decreasing and is the solution for the optimization problem~(\ref{eq:pavaobj}). For example, suppose $\bm{z}=(3,2,1)$ and $\bm w=(1/3, 1/3, 1/3)$. Then PAVA updates $\bm u$ in the following order,
$$(3,2,1) \rightarrow (5/2, 5/2, 1) \rightarrow (5/2, 7/4, 7/4) \rightarrow (2,2,2).$$
For our experiments, we have used the \texttt{pava} function in the $\mathrm{R}$ package \texttt{Iso}, which implements weighted PAVA described.

\subsection{Proofs from the expectation step of the EM-PAVA Algorithm}
\label{sec:pfE}

To begin recall the definition of $\overline{K}_{zs}^{ij}=\{Z_i=z, S_i=s \>|\> t_j\}$. The complete-data binomial log-likelihood can be re-written as follows:
\begin{align}\label{eq:loglc}
\log \overline L(\bm \theta, \bm \chi|\overline \cD_n)= \frac{1}{m} \sum_{i=1}^{n}\sum_{j=1}^{m} \sum_{z \in\{0, 1\}} \sum_{s \in \{co, nt, at\}}S_{z, s}(\overline{\cD}_{n,i}, t_j),
\end{align} 
where $\overline{\cD}_{n,i} = (Z_i, S_i, D_i, Y_i)$ and 
{\small
\begin{align*}
S_{z, co}(\overline{\cD}_{n,i}, t_j) &=\bm 1 \{\overline{K}_{z, co}^{ij}\} \log \chi_{co}(t_j)  + \bm 1\{Y_i \leq t_j, \overline{K}_{z, co}^{ij} \} \log \theta_{co}^{(z)}(t_j) +\bm 1 \{ Y_i > t_j , \overline{K}_{u, co}^{ij}\} \log (1-\theta_{co}^{(z)}(t_j)), \\
S_{z, nt}(\overline{\cD}_{n,i}, t_j) &=\bm 1 \{\overline{K}_{z, nt}^{ij}\} \log \chi_{nt}(t_j)  + \bm 1\{Y_i \leq t_j, \overline{K}_{z, nt}^{ij} \} \log \theta_{nt}(t_j) +\bm 1 \{ Y_i > t_j, \overline{K}_{u, nt}^{ij}\} \log (1-\theta_{nt}(t_j)), \\
S_{z, at}(\overline{\cD}_{n,i}, t_j) &=\bm 1 \{\overline{K}_{z, at}^{ij}\} \log \chi_{at}(t_j)  + \bm 1\{Y_i \leq t_j, \overline{K}_{z, at}^{ij} \} \log \theta_{at}(t_j) +\bm 1 \{ Y_i > t_j, \overline{K}_{u, at}^{ij}\} \log (1-\theta_{at}(t_j)). 
\end{align*}}
\normalsize
We have 
\begin{align}\label{expected_log_likelihood}
Q_k(\bm\theta,   \bm\chi|\hat{\bm{\theta}}_{(k)},  \hat{\bm{\chi}}_{(k)})  = \frac{1}{m}\sum_{j=1}^{m} \sum_{z \in\{0,1\}} \sum_{s \in \{co, nt, at\}}  Q_{z, s}(t_j) ,
\end{align}
where, for $z \in \{0, 1\}$, $Q_{z, s}(t_j)= \E_{\bm{\hat \theta}_{(k)},  \bm{\hat \chi}_{(k)}}\left(\sum_{i=1}^{n} S_{z, s}(Y_i, t_j)|\cD_n\right)$. To compute \eqref{expected_log_likelihood}, we need to compute the following probabilities: 
\begin{align*}
u_0^{(k)}(t_j) &= \pr_{\bm{\hat \theta}_{(k)},  \bm{\hat \chi}_{(k)}}(S_i=co \> | \> Z_i=0, D_i=0, Y_i \leq t_j)  \\
u_1^{(k)}(t_j) &= \pr_{\bm{\hat \theta}_{(k)},  \bm{\hat \chi}_{(k)}}(S_i=co \> | \> Z_i=1, D_i=1, Y_i \leq t_j) \\
v_0^{(k)}(t_j) &= \pr_{\bm{\hat \theta}_{(k)},  \bm{\hat \chi}_{(k)}}(S_i=co \> | \> Z_i=0, D_i=0, Y_i > t_j)  \\
v_1^{(k)}(t_j) &= \pr_{\bm{\hat \theta}_{(k)},  \bm{\hat \chi}_{(k)}}(S_i=co \> | \> Z_i=1, D_i=1, Y_i > t_j).
\end{align*}

\begin{lemma}\label{lm:Estep} Let $u_{0}^{(k)}(t_j) , u_{1}^{(k)}(t_j) , v_{0}^{(k)}(t_j), v_{0}^{(k)}(t_j)$ be as defined above. Then, for $z \in \{0, 1\}$, 
	\begin{align*}
	Q_{z, co}(t_j)= &n_{zz}\Bigg\{\left(\overline F_{zz}(t_j) u_z^{(k)}(t_j) + (1-\overline F_{zz}(t_j) )v_z^{(k)}(t_j) \right) \log(1-{\chi}_{nt}(t_j)-{\chi}_{at}(t_j)) \\
	& ~+ \overline F_{zz}(t_j)  u_z^{(k)}(t_j)   \log \theta_{co}^{(z)}(t_j)  + (1-\overline F_{zz}(t_j))  v_z^{(k)}(t_j) \log (1-\theta_{co}^{(z)}(t_j)) \Bigg\}.  
	\end{align*}
	Similarly, 
	\begin{align*}
	Q_{0, nt}(t_j)&= n_{00} \Bigg\{\left( \overline F_{00}(t_j)(1-u_0^{(k)}(t_j)) + (1-\overline F_{00}(t_j)(1-v_0^{(k)}(t_j) ) \right) \log{\chi}_{nt}(t_j) \\
	& ~ +  \overline F_{00}(t_j) (1-u_0^{(k)}(t_j))   \log \theta_{nt}(t_j) +  (1-\overline F_{00}(t_j) (1-v_0^{(k)}(t_j)) \log (1-\theta_{nt}(t_j))\Bigg\}, 
	\end{align*} 
	and 
	\begin{align*}
	Q_{1, at}(t_j)&=n_{11}  \Bigg\{\left( \overline F_{11}(t_j)(1-u_1^{(k)}(t_j)) + (1-\overline F_{11}(t_j)(1-v_1^{(k)}(t_j))  \right)  \log{\chi}_{at}(t_j) \\
	& + \overline F_{11}(t_j) (1-u_1^{(k)}(t_j))   \log \theta_{at}(t_j)  + (1-\overline F_{11}(t_j) (1-v_1^{(k)}(t_j)) \log (1-\theta_{at}(t_j)) \Bigg\}. 
	\end{align*}
	Finally, $Q_{1, nt}(t_j)= n_{10} \log\chi_{nt}(t_j) +  n_{10} J(\overline F_{10}(t_j), \theta_{nt}(t_j))$ and $Q_{0, at}(t_j)= n_{01} \log\chi_{at}(t_j) +  n_{01}J (\overline F_{01}(t_j) , \theta_{at}(t_j))$, where $J(x, y)=x \log y + (1-x) \log (1-y)$. 
\end{lemma}

The proof of the above lemma is an easy consequence of Lemma \ref{lm:br} below. This completes the proof of  the expectation step of the EM algorithm, at the $(m+1)$-th iteration.

\begin{lemma}\label{lm:br}For every integer $m \geq 1$,
	\begin{align*}
	u_0^{(k)}(t_j) &= \frac{ \left\{ \frac{1}{|I_{\kappa}|} \sum_{j \in I_{\kappa}}\hat \chi_{co, (k)}(t_j) \right\} \hat \theta_{co,  (k)}^{(0)}(t_j)}{\left\{ \frac{1}{|I_{\kappa}|} \sum_{j \in I_{\kappa}}\hat \chi_{co, (k)}(t_j) \right\} \hat \theta_{co,  (k)}^{(0)}(t_j) + \left\{ \frac{1}{|I_{\kappa}|} \sum_{j \in I_{\kappa}}\hat \chi_{nt, (k)}(t_j) \right\} \hat \theta_{nt, (k)}(t_j)}, \\
	u_1^{(k)}(t_j) &= \frac{\left\{ \frac{1}{|I_{\kappa}|} \sum_{j \in I_{\kappa}}\hat \chi_{co, (k)}(t_j) \right\} \hat \theta_{co,  (k)}^{(1)}(t_j)}{\left\{ \frac{1}{|I_{\kappa}|} \sum_{j \in I_{\kappa}}\hat \chi_{co, (k)}(t_j)\right\} \hat \theta_{co,  (k)}^{(1)}(t_j) + \left\{ \frac{1}{|I_{\kappa}|} \sum_{j \in I_{\kappa}}\hat \chi_{at, (k)}(t_j)\right\} \hat \theta_{at, (k)}(t_j)}, \\
	v_0^{(k)}(t_j) &= \frac{\left\{ \frac{1}{|I_{\kappa}|} \sum_{j \in I_{\kappa}}\hat \chi_{co, (k)}(t_j)\right\} (1- \hat \theta_{co,  (k)}^{(0)}(t_j) )}{\left\{ \frac{1}{|I_{\kappa}|} \sum_{j \in I_{\kappa}}\hat \chi_{co, (k)}(t_j)\right\}( 1-\hat \theta_{co,  (k)}^{(0)}(t_j)) + \left\{ \frac{1}{|I_{\kappa}|} \sum_{j \in I_{\kappa}}\hat \chi_{nt, (k)}(t_j) \right\} ( 1- \hat \theta_{nt, (k)}(t_j) ) }, \\
	v_1^{(k)}(t_j) &= \frac{\left\{ \frac{1}{|I_{\kappa}|} \sum_{j \in I_{\kappa}}\hat \chi_{co, (k)}(t_j)\right\} (1- \hat \theta_{co,  (k)}^{(1)}(t_j) )}{\left\{ \frac{1}{|I_{\kappa}|} \sum_{j \in I_{\kappa}}\hat \chi_{co, (k)}(t_j)\right\}( 1-\hat \theta_{co,  (k)}^{(1)}(t_j)) + \left\{ \frac{1}{|I_{\kappa}|} \sum_{j \in I_{\kappa}}\hat \chi_{at, (k)}(t_j)\right\} ( 1- \hat \theta_{at,  (k)}(t_j) )}. 
	\end{align*}
	where $\hat{\chi}_{co, (k)}(t_j) = 1-\hat{\chi}_{at, (k)}(t_j)-\hat{\chi}_{nt, (k)}(t_j)$. 
\end{lemma}

\begin{proof}Throughout the proof, we denote $\pr= \pr_{\bm{\hat \theta}_{(k)},  \bm{\hat \chi}_{(k)}}$ for notational simplicity.  
	
	To begin with, note that 
	\begin{align}\label{eq:beta0}
	u_0^{(k)}(t_j)&=\pr(S_i=co \> | \> Z_i=0, D_i=0, Y_i \leq t_j) \nonumber \\
	&= \frac{\pr(S_i=co, Z_i=0, D_i=0, Y_i \leq t_j)}{\pr(S_i=co, Z_i=0, D_i=0, Y_i \leq t_j) + \pr(S_i=nt, Z_i=0, D_i=0, Y_i \leq t_j)}.
	\end{align}
	
	Now, 
	\begin{align}\label{eq:n}
	&\frac{\pr(S_i=co, Z_i=0, D_i=0, Y_i \leq t_j)}{\pr(Z_i=0, D_i=0)} \nonumber \\
	&=\pr( Y_i \leq t_j \> | \> S_i=co, Z_i=0, D_i=0) \cdot \pr(S_i=co \> | \> Z_i=0, D_i=0) \nonumber \\
	&=\frac{ \frac{1}{|I_{\kappa}|} \sum_{j \in I_{\kappa}} \hat \chi_{co, (k)}(t_j)}{\frac{1}{|I_{\kappa}|} \sum_{j \in I_{\kappa}}\hat \chi_{co, (k)}(t_j)+\frac{1}{|I_{\kappa}|} \sum_{j \in I_{\kappa}}\hat \chi_{nt, (k)}(t_j)} \pr( Y_i \leq t_j \> | \> S_i=co, Z_i=0)  \nonumber \\
	&=\frac{\frac{1}{|I_{\kappa}|} \sum_{j \in I_{\kappa}}\hat \chi_{co, (k)}(t_j)}{\frac{1}{|I_{\kappa}|} \sum_{j \in I_{\kappa}}\hat \chi_{co, (k)}(t_j)+\frac{1}{|I_{\kappa}|} \sum_{j \in I_{\kappa}}\hat \chi_{nt, (k)}(t_j)}\hat \theta_{co,  (k)}^{(0)}(t_j).
	\end{align}
	Moreover, 
	\begin{align}\label{eq:d}
	& \frac{\pr(S_i=co, Z_i=0, D_i=0, Y_i \leq t_j) + \pr(S_i=nt, Z_i=0, D_i=0, Y_i \leq t_j)}{\pr(Z_i=0, D_i=0)} \nonumber \\
	&=\pr( Y_i \leq t_j \> | \> S_i=co, Z_i=0, D_i=0) \cdot \pr(S_i=co \> | \> Z_i=0, D_i=0) \nonumber \\
	&+ \pr( Y_i \leq t_j \> | \> S_i=nt, Z_i=0, D_i=0) \cdot \pr(S_i=nt \> | \> Z_i=0, D_i=0) \nonumber \\
	&=\frac{\frac{1}{|I_{\kappa}|} \sum_{j \in I_{\kappa}}\hat \chi_{co, (k)}(t_j)}{\frac{1}{|I_{\kappa}|} \sum_{j \in I_{\kappa}}\hat \chi_{co, (k)}(t_j)+\frac{1}{|I_{\kappa}|} \sum_{j \in I_{\kappa}}\hat \chi_{nt, (k)}(t_j)}\hat \theta_{co,  (k)}^{(0)}(t_j) \\
	&~~~~~+ \frac{\frac{1}{|I_{\kappa}|} \sum_{j \in I_{\kappa}}\hat \chi_{nt, (k)}(t_j)}{{\frac{1}{|I_{\kappa}|} \sum_{j \in I_{\kappa}}\hat \chi_{co, (k)}(t_j)+\frac{1}{|I_{\kappa}|} \sum_{j \in I_{\kappa}}\hat \chi_{nt, (k)}(t_j)}}\hat \theta_{nt, (k)}(t_j). 
	\end{align}
	Substituting \eqref{eq:n} and \eqref{eq:d} in \eqref{eq:beta0} the identity for $u_0^{(k)}(t_j)$ follows. The other identities can be proved similarly. 
\end{proof}

\subsection{Proofs from the maximization step of the EM-PAVA Algorithm}
\label{sec:pfM}

In the maximization step, unrestricted maximizers of the expectation are defined as $$(\breve{\bm\theta}_{(k+1)}, \breve{\bm \chi}_{(k+1)})=\arg\max_{\bm\theta \in \bm \vartheta,  \bm\chi \in \bm\varphi}Q_{k}(\bm\theta,  \bm\chi| \hat{\bm\theta}_{(k)},  \hat{\bm \chi}_{(k)}),$$ 
where $\breve{\bm \theta}_{(k+1)}(t)=(\breve \theta_{co, (k+1)}^{(0)}(t), \breve  \theta_{nt, (k+1)}(t), \theta_{co, (k+1)}^{(1)}(t), \breve \theta_{at, (k+1)}(t))$ and $\breve{\bm \chi}_{(k+1)}(t)=(\breve \chi_{nt, (k+1)}(t), \breve \chi_{at, (k+1)}(t))$. 

%

\begin{lemma}\label{lm:mstep} Let $u_{0}^{(k)}(t_j) , u_{1}^{(k)}(t_j) , v_{0}^{(k)}(t_j), v_{0}^{(k)}(t_j)$ be as in Lemma \ref{lm:br}.  Then
	\begin{align*}
	\breve{\theta}_{co, (k+1)}^{(0)}(t_j) &= \frac{\overline F_{00}(t_j)   u_0^{(k)}(t_j)}{\overline F_{00}(t_j)   u_0^{(k)}(t_j) + (1-\overline F_{00}(t_j))   v_0^{(k)}(t_j)}, \\
	\breve{\theta}_{nt, (k+1)}(t_j) &= \frac{n_{00}\overline F_{00}(t_j)   (1-u_0^{(k)}(t_j)) + n_{10}\overline F_{10}(t_j)}{n_{00}\overline F_{00}(t_j) (1-u_0^{(k)}(t_j)) + n_{00}(1-\overline F_{00}(t_j))(1-v_0^{(k)}(t_j)) + n_{10}}, \\
	\breve{\theta}_{co, (k+1)}^{(1)}(t_j) &= \frac{ \overline F_{11}(t_j)   u_1^{(k)}(t_j)}{ \overline F_{11}(t_j)   u_1^{(k)}(t_j) +  (1-\overline F_{11}(t_j))   v_1^{(k)}(t_j)}, \\
	\breve{\theta}_{at, (k+1)}(t_j) &= \frac{n_{11}\overline F_{11}(t_j)   (1-u_1^{(k)}(t_j)) + n_{01}\overline F_{01}(t_j)}{n_{11}\overline F_{11}(t_j) (1-u_1^{(k)}(t_j)) + n_{11}(1-\overline F_{11}(t_j))(1-v_1^{(k)}(t_j)) + n_{01}};
	\end{align*}
	and 
	\begin{align*}
	\breve \chi_{nt, (k+1)}(t_j) &= \frac{1}{n}\left\{n_{00}\overline F_{00}(t_j)(1-u_0^{(k)}(t_j)) + n_{00}(1-\overline F_{00}(t_j))(1-v_0^{(k)}(t_j)) + n_{10}\right\}  \\
	\breve \chi_{at, (k+1)}(t_j) &= \frac{1}{n}\left\{ n_{01} + n_{11}\overline F_{11}(t_j)(1-u_1^{(k)}(t_j)) + n_{11}(1-\overline F_{11}(t_j))(1-v_1^{(k)}(t_j))\right\}.
	\end{align*}
	Moreover, 
	$$\breve{\bm \chi}_{(k+1)}(t_j)=(\breve \chi_{nt, (k+1)}(t_j) , \breve \chi_{at, (k+1)}(t_j) ) \in \bm\varphi_{+} = [0, 1]^2_+.$$
	That is, $\breve \chi_{nt, (k+1)}(t_j) , \breve \chi_{at, (k+1)}(t_j) \in [0, 1]$ and $0 \leq \breve\chi_{nt, (k+1)}(t_j) + \breve\chi_{at, (k+1)}(t_j) \leq 1$. 
\end{lemma}

\begin{proof}This follows from Lemma \ref{lm:Estep}, by solving the first-order conditions obtained by taking the gradient of the $Q_{k}(\bm\theta,  \bm\chi| \hat{\bm\theta}_{(k)},  \bm{\hat \chi}_{(k)})$ with respect to $\bm \theta(t_j)$ and $\bm \chi(t_j)$, and equating it to zero.
	
	To see, $\hat{\bm \chi}_{(k+1)}(t_j) \in [0, 1]^2_+$, note that $\chi_{nt, (k+1)}(t_j)$ and $\chi_{at, (k+1)}(t_j)$ are obtained by maximizing with respect to $a, b$ a function of the form $x\log(a)+y \log(j)+z\log(1-a-b)$, for some non-negative quantities $x,y,z$. Clearly, this is maximized when $a=x/(x+y+z)$, $b=y/(x+y+z)$, which satisfy the requited constraints: $a, b \in [0, 1]$ and $0 \leq a+b \leq 1 $.
\end{proof}

To ensure the monotonicity constraint we apply the PAVA with the following weights to the vector $\breve{\bm\theta}_{(k+1)}$, which is computed in the above lemma: 
\begin{align}\label{weight}
w_{co, (k+1)}^{(0)}(t_j) &= n_{00}\overline F_{00}(t_j)  u_0^{(k)}(t_j) + n_{00}(1-\overline F_{00}(t_j))  v_0^{(k)}(t_j), \nonumber\\
w_{nt,  (k+1)}(t_j) &= n_{00}\overline F_{00}(t_j)(1-u_0^{(k)}(t_j)) + n_{00}(1-\overline F_{00}(t_j))(1-v_0^{(k)}(t_j)) + n_{10}, \nonumber\\
w_{at,  (k+1)}(t_j) &= n_{11}\overline F_{11}(t_j)(1-u_1^{(k)}(t_j)) + n_{11}(1-\overline F_{11}(t_j))(1-v_1^{(k)}(t_j)) + n_{01}, \nonumber \\
w_{co,  (k+1)}^{(1)}(t_j) &= n_{11}\overline F_{11}(t_j)  u_1^{(k)}(t_j) + n_{11}(1-\overline F_{11}(t_j))  v_1^{(k)}(t_j). 
\end{align}
This completes the description of the EM-PAVA algorithm. Proposition 2, which is proved below, shows that this procedure indeed maximizes $Q_m(\bm\theta,  \bm\chi| \hat{\bm\theta}_{(k)},  \hat{\bm \chi}_{(k)})$ over the restricted parameter space $\bm \vartheta_+ \times \bm \varphi_+$. \\

\noindent \textbf{\textit{Proof of Proposition 2}.} A collection of $f_1, f_2, \ldots, f_n: \R \rightarrow \R$ is said to be {\it nice} with respect to a given weight vector $\bm w=(w_1, w_2, \ldots, w_n)$ if the following hold:
\begin{itemize}
	\item[--] there exists $\tilde{\bm \theta}=(\tilde{\theta}_1, \ldots ,\tilde{\theta}_n)$ such that $\bar{\theta}_{ab} = \arg \max \sum_{s=a}^{b} f_s(\theta)$ can be represented as the weighted average of $(\tilde{\theta}_i, \ldots ,\tilde{\theta}_b)$, that is, 
	$$\bar{\theta}_{ab} = \frac{\sum_{s=a}^{b} w_s \tilde{\theta}_i}{\sum_{s=a}^{b} w_s} \quad \forall~ a \leq b,$$ 
	\item[--]$\sum_{s=a}^{b} f_s(\theta)$ is strictly increasing when $\theta \leq \bar{\theta}_{ab}$ and is strictly decreasing when $\theta > \bar{\theta}_{ab}$.
\end{itemize}

We will use the following well-known result about maximizing the sum of nice functions under the  monotonicity constraint.

\begin{lemma}\label{lm:mpava}\citep{ma2015} Let $f_1, f_2, \ldots, f_n: \R \rightarrow \R$ be collection of functions, and $\breve z_i=\arg\max_{z \in \R} f_i(z)$. If this collection of functions is {nice} with respect to a given weight vector $\bm w=(w_1, w_2, \ldots, w_n)$, then  
	$$\arg \max_{ z_1 \leq \ldots \leq z_n} \sum_{s=1}^{n} f_s(z_s)=\mathrm{PAVA}_{\bm w}(\breve{z}_1,\ldots,\breve{z}_n),$$
	where the PAVA uses the weight vector $\bm w$.
\end{lemma}

Since $\hat{\bm \chi}_{(k+1)} = \breve{\bm\chi}_{(k+1)} \in \bm \varphi_+$ (Lemma \ref{lm:mstep}), it suffices to show that $\hat{\bm \theta}_{(k+1)} \in \bm \vartheta_+$ and it is the restricted maximum. Note that the estimates 
$$\breve{\theta}_{co, (k+1)}^{(0)}(t_j), \breve{\theta}_{nt, (k+1)}(t_j), \breve{\theta}_{co, (k+1)}^{(1)}(t_j), \breve{\theta}_{at, (k+1)}(t_j) \in [0, 1],$$ for each $j$.  Therefore, the PAVA estimates $\hat{\bm \theta}_{(k+1)} \in \bm \vartheta_+$. Next, to apply Lemma \ref{lm:mpava} above, define the following four functions $f_{1j}, f_{2j}, f_{3j}, f_{4j}$: 
\begin{align*}
f_{1j}(\theta_{1j}) &= n_{00}\overline F_{00}(t_j)  u_0^{(k)} \log\theta_{1j} +  n_{00}(1-\overline F_{00}(t_j)  v_0^{(k)} \log(1-\theta_{1j})\nonumber\\
f_{2j}(\theta_{2j})  &= \left\{ n_{00}\overline F_{00}(t_j)  (1-u_0^{(k)}) + n_{10}\overline F_{10}(t_j)\right\}\log\theta_{2j} \nonumber\\
&~~~~~~+ \left\{n_{00}(1-\overline F_{00}(t_j)  (1-v_0^{(k)}) + n_{10}(1-\overline F_{10}(t_j)) \right\}\log(1-\theta_{2j}) \nonumber\\
f_{3j}(\theta_{3j}) &= n_{11}\overline F_{11}(t_j)  u_1^{(k)} \log\theta_{3j} + n_{11}(1-\overline F_{11}(t_j)  v_1^{(k)} \log(1-\theta_{3j})\nonumber\\
f_{4j}(\theta_{4j}) &= \left\{ n_{11}\overline F_{11}(t_j)  (1-u_1^{(k)}) + n_{01}\overline F_{01}(t_j)\right\}\log\theta_{4j} \nonumber\\
&~~~~~~+ \left\{n_{11}(1-\overline F_{11}(t_j)  (1-v_1^{(k)}) + n_{01}(1-\overline F_{01}(t_j)) \right\} \log(1-\theta_{4j}).
\end{align*}
where $\theta_{1j}=\theta_{co}^{(0)}(t_j), \theta_{2j}=\theta_{nt}(t_j), \theta_{3j}=\theta_{co}^{(1)}(t_j)$ and $\theta_{4j}=\theta_{at}(t_j)$. Then, from \eqref{expected_log_likelihood} and Lemma \ref{lm:Estep}, it follows that  
$$Q_{k}(\bm\theta,  \bm\chi|\hat{\bm\theta}_{(k)},  \hat{\bm \chi}_{(k)})= C(\bm \chi)+ \frac{1}{m}\sum_{s=1}^{4} \sum_{j=1}^{m} f_{sj}(\theta_{sj}),
$$
where $C(\bm\chi)$ is a function depending only of $\bm \chi$. Therefore, maximizing $Q_{k}(\bm\theta,  \bm\chi|\hat{\bm\theta}_{(k)},  \hat{\bm \chi}_{(k)})$ is equivalent to maximizing $ \sum_{j=1}^{m}  f_{sj}(\theta_{sj})$ for each $s$. Now,  for each $s$, it is easy to see that the functions $ f_{sj}(\theta_{sj})$, for $j$, satisfy the condition in Lemma~\ref{lm:mpava} with weights as in \eqref{weight}, and, hence the proof of  Proposition 2 follows.

%
%
%
%
%
%

\section{Additional Simulation Results}

\subsection{Size of the different tests as the sample size varies}

In this subsection, we empirically compute the estimated sizes of $T_n$ (asymp.), $T_n^{simple}$ (asymp.) and $T_{KS}$ for different values of sample size $n$. In particular, we use $n = (500, 1000, 1500, 2000, 3000, 4000)$. As in  Section 5.2 of the main manuscript, we considered two settings (far, close)  for the distributions $F_{nt}$ and $F_{at}$,
\begin{align*}
\text{close} &: F_{co}^{(0)} = F_{co}^{(1)} \sim N(0,1), \quad F_{nt} \sim N(-1, 1) \quad \text{and} \quad  F_{at} \sim N(1,1) \\
\text{far} &: F_{co}^{(0)} = F_{co}^{(1)} \sim N(0,1), \quad F_{nt} \sim N(-2, 1) \quad \text{and} \quad  F_{at} \sim N(2,1).
\end{align*}

\begin{table}
\centering
\caption{Size for various $n$ values}{%
	\begin{tabular}{lrccc}
	\hline
		$(\mu_{nt}, \mu_{at})$ & $n$ & $T_{n}$ & $T_{n}^{simple}$  &  $T_{KS}$\\
		\hline 
		$(-1,1)$ & 500 & 0.0391 & 0.0498 & 0.0395 \\
		& 1000 & 0.0410 & 0.0476 & 0.0475 \\
		& 1500 & 0.0423 & 0.0483 & 0.0447 \\
		& 2000 & 0.0436 & 0.0488 & 0.0519 \\
		& 3000 & 0.0484 & 0.0530 & 0.0518 \\
		& 4000 & 0.0487 & 0.0517 & 0.0498 \\[0.2cm]
		$(-2,2)$ & 500 & 0.0270 & 0.0512 & 0.0406 \\
		& 1000 & 0.0285 & 0.0472 & 0.0460 \\
		& 1500 & 0.0319 & 0.0515 & 0.0475 \\
		& 2000 & 0.0341 & 0.0504 & 0.0504\\
		& 3000 & 0.0341 & 0.0512 & 0.0470 \\
		& 4000 & 0.0364 & 0.0516 & 0.0524 \\
		\hline
	\end{tabular}}
\label{tab:size1}	
\end{table}

Table \ref{tab:size1} shows the estimated sizes for the two settings from 10,000 Monte Carlo simulations.  The column of $T_n$ in Table~\ref{tab:size1} shows that the estimated size approaches to the nominal level 0.05 as $n$ increases. When $(\mu_{nt}, \mu_{at}) = (-1,1)$, the speed of the convergence is faster, than when $(\mu_{nt}, \mu_{at}) = (-2,2)$. For the same set of $n$ values, it seems like the other tests, $T_n^{simple}$ and $T_{KS}$, have their sizes close to 0.05 over all $n$ values, and even for small $n$ values, the sizes are near 0.5. 

\subsection{Power of the different tests}

In this subsection, we compute the estimated power for different tests, including the bootstrap version of the (full) binomial likelihood ratio test $T_n$. We conducted 1000 Monte Carlo simulations, and for each simulation $B=1000$ bootstrapped samples were used. For $n=300$ and $(\phi_{co}, \phi_{nt}, \phi_{at}) = (1/3, 1/3, 1/3)$, we first consider the following setting: 
$$
\text{Situation 1}: F_{co}^{(0)} \sim N(0,1), F_{co}^{(1)} \sim N(0, 3), \quad F_{nt} \sim N(\mu_{nt}, 1), F_{at} \sim N(\mu_{at}, 1).$$
The results are given in Table~\ref{tab:power1}, which shows that $T_n$ (boot.) has the best performance overall. Interestingly, as the difference between $F_{nt}$ and $F_{at}$ increases,  the estimated powers for the BLRTs decreases. Figure~\ref{fig:dist} shows the true distribution functions of $F_0(t)$ and $F_1(t)$. Note that as the difference between $\mu_{nt}$ and $\mu_{at}$ increases, the difference of the two distributions is more concentrated at the center, which leads to increasing power for $T_{KS}$ and decreasing power for the BLRTs. The same pattern was discussed in the main manuscript, as well. 

\begin{table}[h]
\centering
	\caption{Power when variances are different (Situation 1)}{%
		\begin{tabular}{ccccc}
		\hline
			$(\mu_{nt}, \mu_{at})$ & $T_n$ (asymp.) & $T_{n}$ (boot.) & $T_{n}^{simple}$  &  $T_{KS}$\\
			\hline
			(0, 0) & 0.754 & 0.767 & 0.749 & 0.244 \\
			(-1, 1) & 0.761 & 0.784 & 0.754 & 0.324 \\
			(-2, 2) & 0.668 & 0.732 & 0.614 & 0.383 \\
			(-3, 3) & 0.393 & 0.493 & 0.359 & 0.338 \\ 
			\hline
	\end{tabular}}
	\label{tab:power1} 
\end{table}
\begin{figure}[h]
	\centering
	\includegraphics[width=130mm]{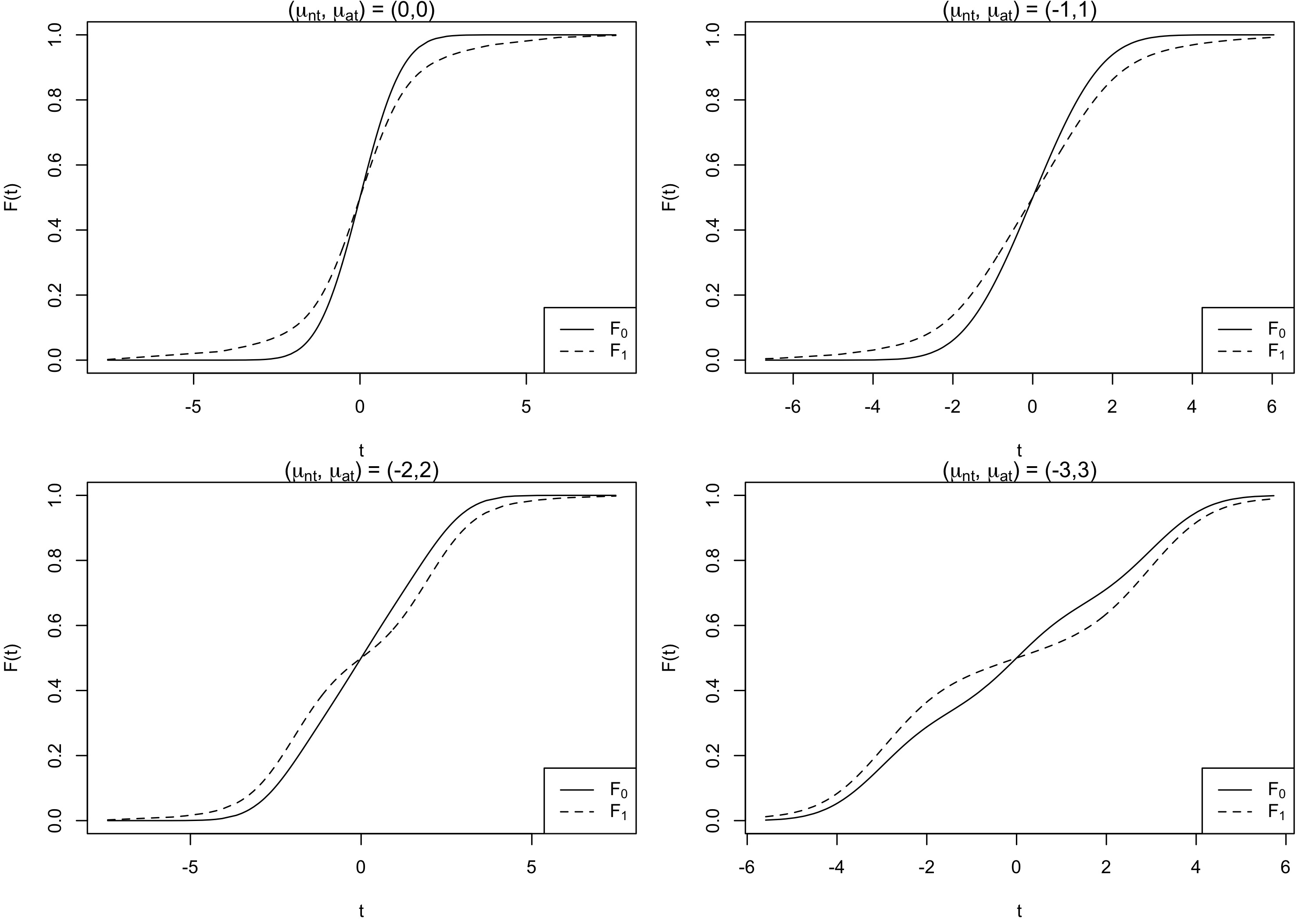}
	\caption{Comparison of $F_0(t)$  and $F_1(t)$ for different pairs of $(\mu_{nt}, \mu_{at})$.}
	\label{fig:dist}
\end{figure}

Next, we consider examples with non-normal distributions. In particular, we consider the following two cases: 
\begin{align*}
\text{Situation 2:}
\begin{array}{ccc}
F_{co}^{(0)} & \sim & p_{co}^{(0)} \cdot \bm{1}(t=0)+ (1-p_{co}^{(0)} ) \cdot \mathrm{Exp}(\mu_{co}^{(0)}),  \\
F_{co}^{(1)} & \sim & p_{co}^{(1)} \cdot \bm{1}(t=0)+ (1-p_{co}^{(1)} ) \cdot\mathrm{Exp}(\mu_{co}^{(1)}), \\
F_{nt} & \sim & p_{nt} \cdot \bm{1}(t=0)+ (1-p_{nt} ) \cdot\mathrm{Exp}(\mu_{nt}), \\
F_{at} & \sim & p_{at} \cdot \bm{1}(t=0)+ (1-p_{at} ) \cdot\mathrm{Exp}(\mu_{at}).
\end{array}
\end{align*}
\begin{align*}
\text{Situation 3:} 
\begin{array}{ccc}
F_{co}^{(0)} & \sim & p_{co}^{(0)} \cdot \bm{1}(t=0)+ (1-p_{co}^{(0)} ) \cdot  \text{lognor}(\mu_{co}^{(0)},1) ,  \\
F_{co}^{(1)} & \sim & p_{co}^{(1)} \cdot \bm{1}(t=0)+ (1-p_{co}^{(1)} ) \cdot \text{lognor}(\mu_{co}^{(1)},1) , \\
F_{nt} & \sim & p_{nt} \cdot \bm{1}(t=0)+ (1-p_{nt} ) \cdot \text{lognor}(\mu_{nt},1) , \\
F_{at} & \sim & p_{at} \cdot \bm{1}(t=0)+ (1-p_{at} ) \cdot \text{lognor}(\mu_{at},1), 
\end{array}
\end{align*}
where $F \sim \text{lognor}(\mu, 1)$ means $\log F \sim N(\mu, 1)$. Note that in both the situations the we have semi-continuous outcomes, that is, a mixture with a continuous component (Exponential or Log-normal) and a discrete component (a point mass at zero with a positive probability). 

\begin{table}[h]
\small 
\centering
	\caption{Power for non-normal distributions (Situations 2 and 3)}
	{%
		\begin{tabular}{lcccccc}
		\hline 
			& $(p_{co}^{(0)}, p_{co}^{(1)}, p_{nt}, p_{at})$ & $(\mu_{co}^{(0)}, \mu_{co}^{(1)}, \mu_{nt}, \mu_{at})$ & $T_n$ (asymp.) & $T_{n}$ (boot.) & $T_{n}^{simple}$  &  $T_{KS}$\\ 
			\hline 
			Situation 2 & (0, 0, 0, 0) & (1, 2, 1, 2) & 0.379 & 0.424 & 0.385 & 0.317 \\
			& (0, 0, 0, 0) & (1, 2, 0.1, 10) & 0.194 & 0.298 & 0.171 & 0.229 \\
			& (0.1, 0.2, 0.1, 0.1) & (1, 2, 0.1, 10) & 0.128 & 0.165 & 0.104 & 0.143 \\
			& (0.1, 0.3, 0.1, 0.1) & (1, 2, 0.1, 10) & 0.232 & 0.289 & 0.200 & 0.156 \\ [0.2cm] \\ 
			Situation 3 & (0, 0, 0, 0) & (-0.5, 0.5, -1, 1) & 0.658 & 0.718 & 0.635 & 0.601 \\ 
			& (0, 0, 0, 0) & (-0.5, 0.5, -2, 2) & 0.461 & 0.557 & 0.401 & 0.516 \\
			& (0.1, 0.2, 0.1, 0.1) & (-0.5, 0.5, -2, 2) & 0.245 & 0.312 & 0.182 & 0.293 \\
			& (0.1, 0.3, 0.1, 0.1) & (-0.5, 0.5, -2, 2) & 0.311 & 0.355 & 0.247 & 0.218 \\ 
			\hline
	\end{tabular}}
	\label{tab:power2}
\end{table}

Table~\ref{tab:power2} shows the estimated powers in the above two cases for $n=300$. Again, $T_n$ (boot.) overall has the highest power. As before, we see that when $\mu_{nt}$ and $\mu_{at}$ are distant, $T_n$ can get lower power than $T_{KS}$. However, even in this case, when $p_{co}^{(0)}$ and $p_{co}^{(1)}$ are far from each other, $T_n$ can be more powerful than $T_{KS}$. The difference in the proportion of zeros implies that the distributional difference is in the left tail, and, as a result, the BLRTs can detect such differences effectively.